\newcolumntype{L}{>{\centering\arraybackslash}m{2.4cm}}
\newtheorem{definition}{Definition}
\newtheorem{theorem}{Theorem}
\newtheorem{proposition}{Proposition}
\newtheorem{corollary}{Corollary}
\newtheorem{remark}{Remark}
\newtheorem{lemma}{Lemma}
\newtheorem*{notation}{Notation}
\newtheorem{problem}{Problem}
\DeclareMathOperator*{\argmax}{arg\!\max}
\title{The Maximum Cover with Rotating Field of View}
\author[1]{\href{mailto:potapov@liverpool.ac.uk}{Igor Potapov}}
\author[1]{\href{mailto:jfralph@liverpool.ac.uk}{Jason F. Ralph}}
\author[1]{\href{mailto:Theofilos.Triommatis@liverpool.ac.uk}{Theofilos Triommatis} }
 \affil[1]{
School of Electrical Engineering and Electronics and Computer Science, \newline
University of Liverpool,
Liverpool, L69-3BX, UK }
\providecommand{\keywords}[1]
{
  \small	
  \textbf{\textit{Keywords---}} #1
}
\pgfplotsset{compat=1.15}
\begin{document}

\maketitle

\begin{abstract}
    Imagine a polygon-shaped platform $P$ and only one static spotlight outside $P$; which direction should the spotlight face to light most of $P$? This problem occurs in maximising the visibility, as well as in limiting the uncertainty in localisation problems. 
    More formally, we define the following maximum cover problem: ``Given a convex polygon $P$ and a Field Of View (FOV) with a given centre and inner angle $\phi$; find the direction (an angle of rotation $\theta$)
    of the FOV such that the intersection between the FOV and $P$ has the maximum area''. In this paper, we provide the theoretical foundation for the analysis of the maximum cover with a rotating field of view. The main challenge is that the function of the area  $A_{\phi}(\theta)$, with the angle of rotation $\theta$ and the fixed inner angle $\phi$, cannot be approximated directly. We found an alternative way to express it by various compositions of a function $A_{\theta}(\phi)$ (with a restricted inner angle $\phi$ and a fixed direction $\theta$). We show that $A_{\theta}(\phi)$    
    that has an analytical solution in the special case of a two-sector intersection and later provides a constrictive solution for the original problem. Since the optimal solution is a real number, we develop an algorithm that approximates the direction of the field of view, with precision $\varepsilon$, and complexity $\mathcal{O}(n(\log{n}+(\log{\varepsilon})/\phi))$. 
\end{abstract}

\keywords{Computational Geometry, Area Optimisation, Rotated FOV, Maximum Cover
}

\section{Introduction}
The use of antennas, sensors and cameras in ``smart'' or autonomous systems motivates the study of various visibility problems~\cite{CZYZOWICZ201316,ERDEM2006156,Sen_acc_points_21,MARENGONI2000773} 
with applications in computer graphics, motion planning, and other areas.
The most known visibility problems are the art gallery problem, region visibility, point or edge visibility, viewshed, see~\cite{9719824,Asano1986,BAREQUET2014407,Visibility2002,visibility2007,Survey2003,10.1007/978-3-030-58150-3_25}. 
Point or edge visibility is the decision problem of checking whether these objects are visible from a viewpoint in a context of a given set of obstacles.
In the art gallery problem, the objective is to find the minimal number of locations to place guards (with restricted or unrestricted Field of View, FOV)  within a polygon room to observe the room's whole area~\cite{AGP2021,restricted-FOV}.

In this paper, we study the 
problem of  finding
%
the maximal visibility area from a viewpoint
with a rotating FOV.
Imagine a polygon-shaped platform $P$ and only one static spotlight outside of $P$. Which direction should the spotlight face to light most of $P$?
More formally, we define the following problem: ``Given a polygon $P$ and a Field Of View (FOV) with a given centre and inner angle $\phi$; find the direction (as an angle $\theta$) of the FOV such that the intersection between the FOV and $P$ has the maximum area''. 
This problem occurs in maximising the visibility, as well as in limiting the uncertainty in localisation problems. The occurrence in the former is straightforward to understand. However, the occurrence in the latter is more subtle because we assume inside the polygon an object which we need to detect by the maximising probability of detection in the following scan without prior knowledge of its position. 
%
In~\cite{FUSION2022}, the geometric approach for passive localisation of static emitters is based on the problem of finding the maximum intersection of a polygon and a rotating FOV. For a passive sensor, a measurement is an angle with an error that points to the direction of a transmission's origin point. The angle with its angular error creates a cone of possible locations for the emitter. After consecutive iterations, a sensor computes a polygon by intersecting multiple measurements from different positions. A sensor needs to make a decision to move to its next position from a given finite set. The choice is made by evaluating all the available positions according to an objective function.
In a myopic (greedy) decision-making strategy, a sensor moves by minimising the maximum uncertainty on its subsequent measurement, achieved by evaluating the maximal intersection of polygons that contain the emitters' position and FOVs with centres that represent sensors' available positions, see Figure~\ref{fig:Geometric_Exploration}.
Experimental results in  \cite{FUSION2022} were based on a heuristic to estimate the intersection. Here we provide an algorithm with proven guarantee and precision.
\begin{figure}[ht]
    \centering
    \begin{tabular}{c c c c}
         \scalebox{0.49}{\includegraphics{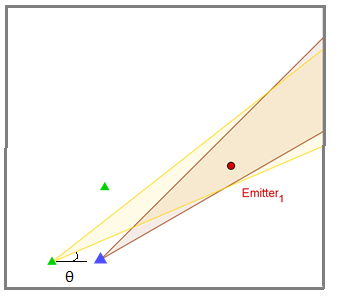}} 
         &
         \scalebox{0.5}{\includegraphics{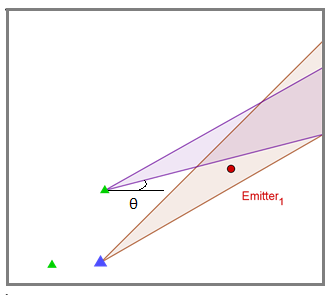}} 
         & 
         \scalebox{0.3}{
        \includegraphics{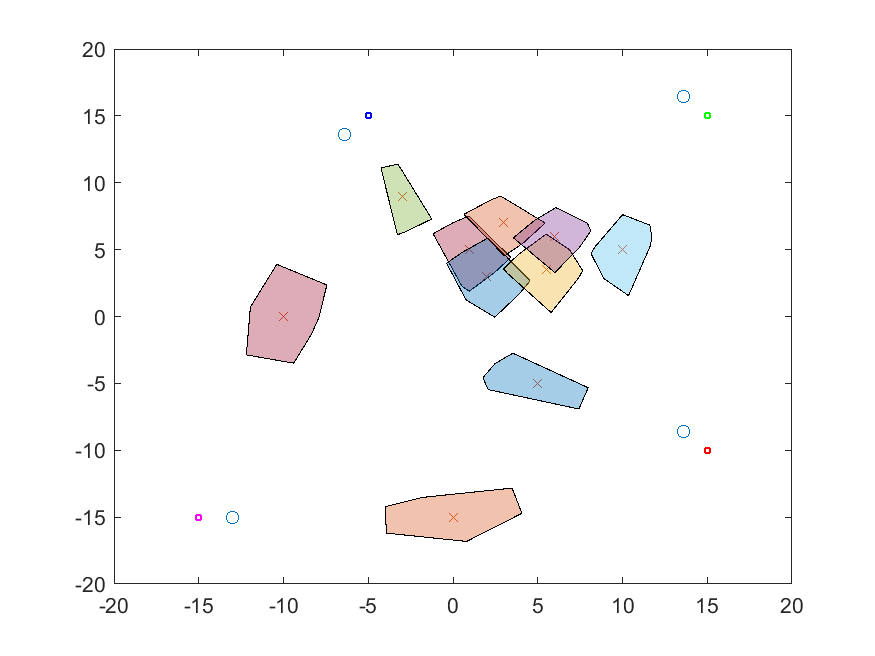}}
        &
        \scalebox{0.3}{
        \includegraphics{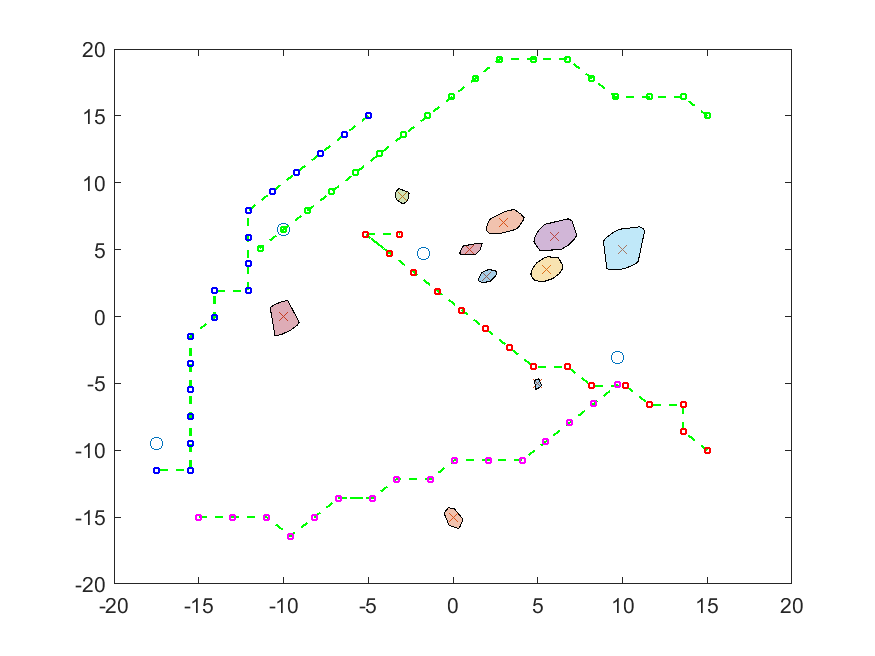}} 
         \\
         (a) & (b) & (c) & (d)
    \end{tabular}
    \caption{In (a) and (b), the sensor (blue triangle) calculates the worst-case uncertainty, which corresponds to the maximum cover, for two positions of ``East'' and ``North'' (green triangles) to select the one which minimises it, which in this case is ``North''. In (c) and (d), the sensors explore by minimising the maximum uncertainty in each step which makes the areas of uncertainty (the polygons) smaller.}
    \label{fig:Geometric_Exploration}
\end{figure}

There are also several related problems in the literature. One example is finding the intersection between two static polyhedra in the three-dimensional space, which has a 
linear-time algorithm on the number of vertices~\cite{DBLP:journals/dcg/Chan16}. In \cite{translation1998}, authors allow some flexibility and aim to compute the maximum overlap of two convex polygons under translations. The problem of approximating the intersection in the general case under the operation of translation has been recently solved in~\cite{translation2017}. The closest formulation to our problem is the Maximum Cover under Rotation (MCR):
Given a set of finite points $S$, a point $r$ on the plane, compute an angle $\theta \in [0,2\pi)$  such that, after counterclockwise rotation of a polygon $P$ by $\theta$ around $r$, the number of points of $S$ contained in $P$ is maximized. The problem is 3SUM-hard, but it has efficient solutions with respect to the number of points in $S$ and vertices in $P$~\cite{MCR}.
%
 
However, the problem we study is quite different to the one mentioned above. On the one hand, we consider a polygon essentially an infinite set of points instead of a finite one, but on the other hand, the Field of View is a cone in 2D, a specific shape, and the centre of rotations is its vertex. One might assume that expressing the area of the intersection as a function of rotations would be enough to provide an approximation through the use of a numerical method. Unfortunately, a naive application of numerical methods to find the maximum of $A_{\phi}(\theta)$, with the angle of rotation $\theta$ and the fixed inner angle $\phi$,
would not guarantee the maximum as we do not know the number and distribution of its extreme points. 

In this paper, we design an algorithm with a mathematical guarantee and provide the theoretical foundation for analysis of the maximum cover with a rotating field of view. 
We show an alternative way to express the maximum cover by various compositions of a function $A_{\theta}(\phi)$ (with a variable inner angle $\phi$ and a fixed direction $\theta$) that has an analytical solution.
The core component of the solution is to find the maximal intersection of a fixed sector (field of view with infinite radius) and a rotated one under a restricted rotation angle \footnote{ We consider to be restricted, that is the domain of the angle of rotations to be a closed interval, a proper subset of $[0,2\pi]$ since the area of intersection of two sectors without restrictions can be infinite.}. Surprisingly, the function of the area, even in such a restricted case, is non-monotonic. Nonetheless, it is possible to find the maximal intersection as shown in Section~\ref{sec:Max_Area} by using functions that calculate the area with a fixed rotation angle and the inner angle as a variable.
Later, we show how to express more complex shapes of the intersection of a polygon and a rotated sector as a combination of multiple two-sector intersections. 
 %
%
%
Finally, we complete the solution by identifying how an infinite number of intersections can be decomposed into a finite number of equivalence classes and propose at the same time a partitioning algorithm as well as a solution for each equivalence class. Moreover, our solution can be directly applied to special cases of non-convex polygons.
Since the optimal solution is a real-value number, we develop an algorithm that approximates the direction of the field of view, with precision $\varepsilon$, and complexity $\mathcal{O}(n(\log{n}+(\log{\varepsilon})/\phi))$.  


\section{The Maximum Intersection Problem}
\label{sec:Problem_Analysis}
To begin with we introduce the notations we will use throughout the paper.
Let $P$ and $Q$ be two points, we will notate with $\hat{PQ}$ the slope's angle of the line that $P$ and $Q$ define. In other words the slope of the line that $P$ and $Q$ define is $\tan{ \left( {\hat{PQ}} \right)}$. 
Throughout this paper, when we mention angles we mean the positive (counterclockwise) angles and we will use the notation $\hat{ABC}$ to denote the positive angle with apex $B$. 
Moreover, denote a convex polygon $\mathcal{P}=(P_1,\ldots,P_n)$ as the list of its vertices $P_i \in \mathbb{R}^2$, $i \in \{1,\ldots,n\}$, in counter-clockwise order. A field of view in the 3D space is in essence a cone, and we assume that its height tends to infinity. Since we study the problem on a 2D plane, the field of view is actually a sector of a circle with a radius that tends to infinity. So we formulate the sector in the following way: 

\begin{definition}
    A \textbf{sector} $S[C,\varepsilon_r,\varepsilon_{\ell}]$ is the set of points that lie inside an angle $0<\hat{RCL}<\pi$ that is formed by two half lines $\varepsilon_r$, and $\varepsilon_{\ell}$, $R \in \varepsilon_r$ and $L \in \varepsilon_{\ell}$, that share a common endpoint $C$, called the centre of the sector. We will call $\varepsilon_r$, and $\varepsilon_{\ell}$ the \textbf{right} and the \textbf{left semi-line} of the sector respectively.
\end{definition}
As we are interested in studying the sector under rotation we introduce an alternative definition that is based on the angles of the arrays' slopes. 

\begin{definition}
    A \textbf{sector} $S[C,\varepsilon_r,\varepsilon_{\ell}]$ defined by two semi-lines $\varepsilon_r,\varepsilon_{\ell}$ with gradients  $\theta$, $\theta+\phi$ and a common endpoint $C$ can be represented by another triplet $S(C,\theta,\phi)$, where the angle $\phi$ is the \textbf{inner angle of the sector} and the angle $\theta$ is the \textbf{direction of the sector}.     
\end{definition}

Note that $\varepsilon_r$ is a half line that extends from $C$, and the direction $\theta \in [0,2\pi]$ corresponds to exactly one semi-line because if $x$ is the horizontal line that passes through $C$, then $\theta=\hat{XCR}$ where $X\in x$ and $R\in \varepsilon_r$. Now we are ready to formulate the problem properly.

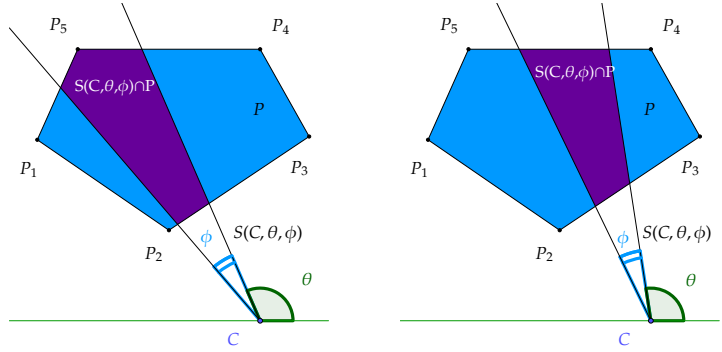
\begin{figure}[ht]
    \centering
    \begin{tabular}{c c}
        \scalebox{0.6}{
\definecolor{qqzzqq}{rgb}{0.,0.6,0.}
\definecolor{qqwuqq}{rgb}{0.,0.39215686274509803,0.}
\definecolor{ffffff}{rgb}{1.,1.,1.}
\definecolor{wwqqzz}{rgb}{0.4,0.,0.6}
\definecolor{zzttqq}{rgb}{0.6,0.2,0.}
\definecolor{qqzzff}{rgb}{0.,0.6,1.}
\definecolor{ududff}{rgb}{0.30196078431372547,0.30196078431372547,1.}
\begin{tikzpicture}[line cap=round,line join=round,>=triangle 45,x=1.0cm,y=1.0cm]
\clip(2.5,3.) rectangle (9.5,11.);
\draw [shift={(8.,4.)},line width=2.pt,color=qqzzff] (0,0) -- (113.36066987521403:1.562641319656736) arc (113.36066987521403:130.360669875214:1.562641319656736) -- cycle;
\fill[line width=2.pt,color=zzttqq,fill=zzttqq,fill opacity=0.10000000149011612] (3.6231006819848326,9.149997586671411) -- (4.,10.) -- (5.408456604181931,10.) -- (6.880104882191272,6.592806556006475) -- (6.190932838771259,6.1286053979143045) -- cycle;
\fill[line width=2.pt,color=qqzzff,fill=qqzzff,fill opacity=1.0] (4.,10.) -- (8.,10.) -- (9.07479033035953,8.071066645652698) -- (6.,6.) -- (3.112712679976908,7.998944738591618) -- cycle;
\fill[line width=2.pt,color=wwqqzz,fill=wwqqzz,fill opacity=1.0] (3.6231006819848326,9.149997586671411) -- (4.,10.) -- (5.408456604181931,10.) -- (6.880104882191272,6.592806556006475) -- (6.190932838771259,6.1286053979143045) -- cycle;
\draw [shift={(8.,4.)},line width=2.pt,color=qqwuqq,fill=qqwuqq,fill opacity=0.10000000149011612] (0,0) -- (0.:0.7212190706108013) arc (0.:113.36066987521403:0.7212190706108013) -- cycle;
\draw [line width=0.4pt] (3.112712679976908,7.998944738591618)-- (6.,6.);
\draw [line width=0.4pt] (6.,6.)-- (9.07479033035953,8.071066645652698);
\draw [line width=0.4pt] (9.07479033035953,8.071066645652698)-- (8.,10.);
\draw [line width=0.4pt] (8.,10.)-- (4.,10.);
\draw [line width=0.4pt] (4.,10.)-- (3.112712679976908,7.998944738591618);
\draw [line width=0.4pt,domain=2.5:8.0] plot(\x,{(-20.191718252942543--2.075693883629488*\x)/-0.8965417959766597});
\draw [shift={(8.,4.)},line width=2.pt,color=qqzzff] (113.36066987521403:1.562641319656736) arc (113.36066987521403:130.360669875214:1.562641319656736);
\draw [shift={(8.,4.)},line width=2.pt,color=qqzzff] (113.36066987521403:1.4063771876910622) arc (113.36066987521403:130.360669875214:1.4063771876910622);
\draw [line width=0.4pt,domain=2.5:8.0] plot(\x,{(-19.639945207352795--1.7228724795933834*\x)/-1.464241342651432});
\draw [color=ffffff](3.808378743145003,9.453301608439161) node[anchor=north west] {\small{S(C,$\theta$,$\phi$)$\cap$P}};
\draw [line width=0.4pt,color=qqzzqq,domain=2.5:9.5] plot(\x,{(--8.-0.*\x)/2.});
\begin{normalsize}
\draw [fill=black] (3.112712679976908,7.998944738591618) circle (1.0pt);
\draw[color=black] (2.9324079123242073,7.349847575041899) node {$P_1$};
\draw [fill=black] (4.,10.) circle (1.0pt);
\draw[color=black] (3.5815050758739284,10.523211485729414) node {$P_5$};
\draw [fill=black] (6.,6.) circle (1.0pt);
\draw[color=black] (5.6730403806452525,5.474677991453823) node {$P_2$};
\draw [fill=black] (9.07479033035953,8.071066645652698) circle (1.0pt);
\draw[color=black] (8.870444927019802,7.373888210728926) node {$P_3$};
\draw[color=black] (7.970444927019802,8.673888210728926) node {$P$};
\draw [fill=black] (8.,10.) circle (1.0pt);
\draw[color=black] (8.461754120340348,10.499170850042386) node {$P_4$};
\draw [fill=ududff] (8.,4.) circle (1.5pt);
\draw[color=ududff] (7.4159864679546885,3.587488090022233) node {$C$};
\draw[color=qqzzff] (6.828514731886035,5.823267208915708) node {$\phi$};
\draw[color=qqwuqq] (9.026709058985478,4.933763688495723) node {$\theta$};
\draw[color=black] (8.17479033035953,5.9197230528086975) node {$S(C,\theta,\phi)$};
\draw[color=qqzzqq] (-3.234015141398143,4.404869703381138) node {$n$};
\end{normalsize}
\end{tikzpicture}
        } 
        &
        \scalebox{.6}{
\definecolor{qqzzqq}{rgb}{0.,0.6,0.}
\definecolor{qqwuqq}{rgb}{0.,0.39215686274509803,0.}
\definecolor{ffffff}{rgb}{1.,1.,1.}
\definecolor{wwqqzz}{rgb}{0.4,0.,0.6}
\definecolor{zzttqq}{rgb}{0.6,0.2,0.}
\definecolor{qqzzff}{rgb}{0.,0.6,1.}
\definecolor{ududff}{rgb}{0.30196078431372547,0.30196078431372547,1.}
\begin{tikzpicture}[line cap=round,line join=round,>=triangle 45,x=1.0cm,y=1.0cm]
\clip(2.5,3.) rectangle (9.5,11.);
\draw [shift={(8.,4.)},line width=2.pt,color=qqzzff] (0,0) -- (98.69920321100707:1.562641319656736) arc (98.69920321100707:115.69920321100709:1.562641319656736) -- cycle;
\fill[line width=2.pt,color=zzttqq,fill=zzttqq,fill opacity=0.10000000149011612] (7.081956375264599,10.) -- (7.5357145366587,7.03440131273794) -- (6.783519344358747,6.527750059671657) -- (5.112497770133332,10.) -- cycle;
\fill[line width=2.pt,color=qqzzff,fill=qqzzff,fill opacity=1.0] (4.,10.) -- (8.,10.) -- (9.07479033035953,8.071066645652698) -- (6.,6.) -- (3.112712679976908,7.998944738591618) -- cycle;
\fill[line width=2.pt,color=wwqqzz,fill=wwqqzz,fill opacity=1.0] (5.112497770133332,10.) -- (7.081956375264599,10.) -- (7.5357145366587,7.03440131273794) -- (6.783519344358747,6.527750059671657) -- cycle;
\draw [shift={(8.,4.)},line width=2.pt,color=qqwuqq,fill=qqwuqq,fill opacity=0.10000000149011612] (0,0) -- (0.:0.7212190706108013) arc (0.:98.69920321100707:0.7212190706108013) -- cycle;
\draw [line width=0.4pt] (3.112712679976908,7.998944738591618)-- (6.,6.);
\draw [line width=0.4pt] (6.,6.)-- (9.07479033035953,8.071066645652698);
\draw [line width=0.4pt] (9.07479033035953,8.071066645652698)-- (8.,10.);
\draw [line width=0.4pt] (8.,10.)-- (4.,10.);
\draw [line width=0.4pt] (4.,10.)-- (3.112712679976908,7.998944738591618);
\draw [line width=0.4pt,domain=2.5:8.0] plot(\x,{(-16.633704171262597--1.9314500695073287*\x)/-0.29552590380099186});
\draw [shift={(8.,4.)},line width=2.pt,color=qqzzff] (98.69920321100707:1.562641319656736) arc (98.69920321100707:115.69920321100709:1.562641319656736);
\draw [shift={(8.,4.)},line width=2.pt,color=qqzzff] (98.69920321100707:1.4063771876910622) arc (98.69920321100707:115.69920321100709:1.4063771876910622);
\draw [line width=0.4pt,domain=2.5:8.0] plot(\x,{(-17.47446850762324--1.7606514750909703*\x)/-0.8473141767238692});
\draw [color=ffffff](5.336979427114712,9.777951779431588) node[anchor=north west] {\small{S(C,$\theta$,$\phi$)$\cap$P}};
\draw [line width=0.4pt,color=qqzzqq,domain=2.5:9.5] plot(\x,{(--8.-0.*\x)/2.});
\begin{normalsize}
\draw [fill=black] (3.112712679976908,7.998944738591618) circle (1.0pt);
\draw[color=black] (2.9324079123242073,7.349847575041899) node {$P_1$};
\draw [fill=black] (4.,10.) circle (1.0pt);
\draw[color=black] (3.5815050758739284,10.523211485729414) node {$P_5$};
\draw [fill=black] (6.,6.) circle (1.0pt);
\draw[color=black] (5.6730403806452525,5.474677991453823) node {$P_2$};
\draw [fill=black] (9.07479033035953,8.071066645652698) circle (1.0pt);
\draw[color=black] (8.870444927019802,7.373888210728926) node {$P_3$};
\draw[color=black] (7.970444927019802,8.673888210728926) node {$P$};
\draw [fill=black] (8.,10.) circle (1.0pt);
\draw[color=black] (8.461754120340348,10.499170850042386) node {$P_4$};
\draw [fill=ududff] (8.,4.) circle (1.5pt);
\draw[color=ududff] (7.4159864679546885,3.587488090022233) node {$C$};
\draw[color=qqzzff] (7.396799181695222,5.799226573228682) node {$\phi$};
\draw[color=qqwuqq] (8.87479033035953,4.9097230528086975) node {$\theta$};
\draw[color=black] (8.57479033035953,5.9197230528086975) node {$S(C,\theta,\phi)$};
\draw[color=qqzzqq] (-3.234015141398143,4.404869703381138) node {$n$};
\end{normalsize}
\end{tikzpicture}
        } 
    \end{tabular}
    \caption{
    The intersection between a polygon $P$ and a sector $S(C,\theta,\phi)$ with centre $C$, inner angle $\phi$ and direction $\theta$. The intersection becomes a quadrilateral from a pentagon as  $S$ is rotated clockwise.}
    \label{fig:Problem1}
\end{figure}

\begin{problem}
\label{prob:Max_Inter_Prob}
 Given a convex polygon $\mathcal{P} = (P_1,\ldots, P_n)$, a point $C=(x_0,y_0)$ outside of the polygon on the Euclidean plane and $0<\phi<\pi$, find the direction $\theta \in [0,2\pi]$ such that the intersection $S(C,\theta,\phi)\cap \mathcal{P}$ has the maximum area.
\end{problem}

    Let $\mathcal{P}$ be a convex set, and a sector $S(C,\theta,\phi)$ with its centre $C$ outside of $\mathcal{P}$.
    We will say that the sector $S(C,\theta,\phi)$ \textbf{contains} $\mathcal{P}$ if $\mathcal{P}\cap S(C,\theta,\phi) = \mathcal{P}$; \textbf{fully intersects} $\mathcal{P}$ if both semi-lines of $S(C,\theta,\phi)$ intersect an edge or a vertex of $\mathcal{P}$; \textbf{partially intersects} $\mathcal{P}$ if only one of the two semi-lines of $S(C,\theta,\phi)$ intersects an edge or a vertex of $\mathcal{P}$; \textbf{does not intersect} $\mathcal{P}$ if none of the two semi-lines of $S(C,\theta,\phi)$ intersect an edge or a vertex of $\mathcal{P}$.


\section{Studying the Area of Intersection}
\label{sec:Max_Area}
As Figure~\ref{fig:Local_extreme_points_problem} presents, it is intuitive to think that if a sector is rotated towards a ``corner'', then the area of intersection should decrease. In other words, in many cases, it is easy to assume that the area of intersection as a function of rotations is monotonic. But this is not the case, as there are examples where the function has local extreme points, a crucial fact especially when the domain of the rotations is restricted (is a bounded interval). 
In this section, we study some fundamental cases under restricted rotations to extract the formulae of the area of the respective intersection. 
We show that the function of the area of the intersection of a rotating sector and a static one is $A(\theta,\phi)$, which depends on two values - the direction of the rotating field of view $\theta$ and its inner angle $\phi$.
The straightforward approach would be to consider the function $A_{\phi}(\theta)$, where $\phi$ is constant and $\theta$ is variable. However, maximising the area through function $A_{\phi}(\theta)$ leads to the analysis of polynomials of trigonometric functions with rational exponents. The direct maximisation of this non-convex function is difficult as there are no constructive criteria to check the number of possible solutions that would guarantee finding the maximum value. Instead, we found a more elegant way to solve the problem by expressing the function $A_{\phi}(\theta)$  by a composition of $A_{\theta}(\phi)$ functions with an inner angle
$\phi$ and a fixed direction $\theta$. The key is that function $A_{\theta}(\phi)$ has two local extreme points calculated analytically and by expressing $A_{\phi}(\theta)$  as a composition of $A_{\theta}(\phi)$ functions allows us to identify the intervals with only one solution in each one, where the application of classical numerical algorithms yields the maximum.
Finally, we prove that the function of any intersection's area is expressed as $A(\theta,\phi)$ or as a linear combination of $A_\theta(\phi)$ functions.

\begin{figure}[ht]
    \centering
    \begin{tabular}{c c c}
        \scalebox{0.4}{
        \definecolor{ududff}{rgb}{0.30196078431372547,0.30196078431372547,1.}
        \definecolor{uuuuuu}{rgb}{0.26666666666666666,0.26666666666666666,0.26666666666666666}
        \definecolor{qqwuqq}{rgb}{0.,0.39215686274509803,0.}
        \definecolor{cqcqcq}{rgb}{0.7529411764705882,0.7529411764705882,0.7529411764705882}
        \begin{tikzpicture}[line cap=round,line join=round,>=triangle 45,x=1.0cm,y=1.0cm]
        \draw [color=cqcqcq,, xstep=2.0cm,ystep=2.0cm] (-13.,4.) grid (-0.5,14.);
        \clip(-13.,4.) rectangle (-0.5,14.);
        \draw [shift={(-5.192322082401852,4.745049387656246)},line width=2.pt,color=qqwuqq,fill=qqwuqq,fill opacity=0.10000000149011612] (0,0) -- (0.027829690692931185:0.5373931719486121) arc (0.027829690692931185:82.0198862548315:0.5373931719486121) -- cycle;
        \fill[line width=2.pt,color=ududff,fill=ududff,fill opacity=0.10000000149011612] (-5.016394426465813,6.) -- (-3.9328371234625554,13.729375428089941) -- (-6.141596458369784,12.454147631506153) -- (-5.346852796853905,6.) -- cycle;
        \draw [line width=2.pt,domain=-13.:-0.5] plot(\x,{(--16.--0.5773502691896257*\x)/1.});
        \draw [line width=2.pt,domain=-13.:-0.5] plot(\x,{(--6.-0.*\x)/1.});
        \draw [line width=2.pt,color=ududff] (-5.016394426465813,6.)-- (-3.9328371234625554,13.729375428089941);
        \draw [line width=2.pt,color=ududff] (-3.9328371234625554,13.729375428089941)-- (-6.141596458369784,12.454147631506153);
        \draw [line width=2.pt,color=ududff] (-6.141596458369784,12.454147631506153)-- (-5.346852796853905,6.);
        \draw [line width=2.pt,color=ududff] (-5.346852796853905,6.)-- (-5.016394426465813,6.);
        \draw [line width=2.pt,domain=-5.192322082401852:-0.5] plot(\x,{(--52.62583282841904--8.984326040433697*\x)/1.259484958939297});
        \draw [line width=2.pt,domain=-13.0:-5.192322082401852] plot(\x,{(--35.52376725054282--7.709098243849907*\x)/-0.9492743759679314});
        \begin{Large}
        \draw (-13.253393330406864,5.547998749394442) node[anchor=north west] {Area($S(O,1.43,\pi/12) \cap S'$) = 8.91};
        \draw [fill=black] (-5.192322082401852,4.745049387656246) circle (2.5pt);
        \draw[color=black] (-5.500145489327644,4.27635330672618) node {$O$};
        \draw[color=qqwuqq] (-3.2870257971905843,5.4081481968675295) node {$\theta = 1.43$ rad};
        \draw [fill=uuuuuu] (-3.9328371234625554,13.729375428089941) circle (2.0pt);
        \draw[color=uuuuuu] (-3.0711283521199166,13.541011591120377) node {$K_2$};
        \draw [fill=uuuuuu] (-6.141596458369784,12.454147631506153) circle (2.0pt);
        \draw[color=uuuuuu] (-6.579589938740922,12.725114146049702) node {$K_3$};
        \draw [fill=uuuuuu] (-5.346852796853905,6.) circle (2.0pt);
        \draw[color=uuuuuu] (-5.9076241237173665,6.359558461304369) node {$K_4$};
        \draw [fill=uuuuuu] (-5.016394426465813,6.) circle (2.0pt);
        \draw[color=uuuuuu] (-4.339376237918641,6.554900355788326) node {$K_1$};
        \draw [fill=uuuuuu] (-17.320508075688775,6.) circle (2.0pt);
        \draw[color=uuuuuu] (-1.324590072899436,6.61938753642216) node {$\varepsilon_1$};
        \draw[color=uuuuuu] (-11.024590072899436,8.61938753642216) node {$\varepsilon_2$};
        \end{Large}
        \end{tikzpicture}
        } 
         &
        \scalebox{0.4}{
        \definecolor{ududff}{rgb}{0.30196078431372547,0.30196078431372547,1.}
        \definecolor{uuuuuu}{rgb}{0.26666666666666666,0.26666666666666666,0.26666666666666666}
        \definecolor{qqwuqq}{rgb}{0.,0.39215686274509803,0.}
        \definecolor{cqcqcq}{rgb}{0.7529411764705882,0.7529411764705882,0.7529411764705882}
        \begin{tikzpicture}[line cap=round,line join=round,>=triangle 45,x=1.0cm,y=1.0cm]
        \draw [color=cqcqcq,, xstep=2.0cm,ystep=2.0cm] (-13.,4.) grid (-0.5,14.);
        \clip(-13.,4.) rectangle (-0.5,14.);
        \draw [shift={(-5.192322082401852,4.745049387656246)},line width=2.pt,color=qqwuqq,fill=qqwuqq,fill opacity=0.10000000149011612] (0,0) -- (0.027829690692931185:0.5373931719486121) arc (0.027829690692931185:108.75423833971107:0.5373931719486121) -- cycle;
        \fill[line width=2.pt,color=ududff,fill=ududff,fill opacity=0.10000000149011612] (-5.618423944472327,6.) -- (-7.536415619792454,11.648848413187926) -- (-9.174140457368836,10.703307537354668) -- (-6.030987558512409,6.) -- cycle;
        \draw [line width=2.pt,domain=-13.:-0.5] plot(\x,{(--16.--0.5773502691896257*\x)/1.});
        \draw [line width=2.pt,domain=-13.:-0.5] plot(\x,{(--6.-0.*\x)/1.});
        \draw [line width=2.pt,color=ududff] (-5.618423944472327,6.)-- (-7.536415619792454,11.648848413187926);
        \draw [line width=2.pt,color=ududff] (-7.536415619792454,11.648848413187926)-- (-9.174140457368836,10.703307537354668);
        \draw [line width=2.pt,color=ududff] (-9.174140457368836,10.703307537354668)-- (-6.030987558512409,6.);
        \draw [line width=2.pt,color=ududff] (-6.030987558512409,6.)-- (-5.618423944472327,6.);
        \draw [line width=2.pt,domain=-13.0:-5.192322082401852] plot(\x,{(--24.72390852852829--6.90379902553168*\x)/-2.3440935373906013});
        \draw [line width=2.pt,domain=-13.0:-5.192322082401852] plot(\x,{(--12.04327052143445--5.958258149698422*\x)/-3.9818183749669833});
        \begin{Large}
        \draw (-13.253393330406864,5.547998749394442) node[anchor=north west] {Area($S(O,1.90,\pi/12) \cap S'$) = 6.5};
        \draw [fill=black] (-5.192322082401852,4.745049387656246) circle (2.5pt);
        \draw[color=black] (-5.500145489327644,4.27635330672618) node {$O$};
        \draw[color=qqwuqq] (-3.2870257971905843,5.4081481968675295) node {$\theta = 1.90$ rad};
        \draw [fill=uuuuuu] (-7.536415619792454,11.648848413187926) circle (2.0pt);
        \draw[color=uuuuuu] (-6.682410467614591,11.455926083959733) node {$K_2$};
        \draw [fill=uuuuuu] (-9.174140457368836,10.703307537354668) circle (2.0pt);
        \draw[color=uuuuuu] (-9.010487428531093,11.56246454205823)  node {$K_3$};
        \draw [fill=uuuuuu] (-6.030987558512409,6.) circle (2.0pt);
        \draw[color=uuuuuu] (-7.09548738381159,6.459558461304369) node {$K_4$};
        \draw [fill=uuuuuu] (-5.618423944472327,6.) circle (2.0pt);
        \draw[color=uuuuuu] (-4.9627523173790316,6.554900355788326) node {$K_1$};
        \draw [fill=uuuuuu] (-17.320508075688775,6.) circle (2.0pt);
        \draw[color=uuuuuu] (-1.324590072899436,6.61938753642216) node {$\varepsilon_1$};
        \draw[color=uuuuuu] (-5.024590072899436,12.61938753642216) node {$\varepsilon_2$};
        \end{Large}
        \end{tikzpicture}
         }
         &
         \scalebox{0.4}{
       \definecolor{ududff}{rgb}{0.30196078431372547,0.30196078431372547,1.}
        \definecolor{uuuuuu}{rgb}{0.26666666666666666,0.26666666666666666,0.26666666666666666}
        \definecolor{qqwuqq}{rgb}{0.,0.39215686274509803,0.}
        \definecolor{cqcqcq}{rgb}{0.7529411764705882,0.7529411764705882,0.7529411764705882}
        \begin{tikzpicture}[line cap=round,line join=round,>=triangle 45,x=1.0cm,y=1.0cm]
        \draw [color=cqcqcq,, xstep=2.0cm,ystep=2.0cm] (-13.62,4.) grid (-0.5,14.);
        \clip(-13.62,4.) rectangle (-0.5,14.);
        \draw [shift={(-5.192322082401852,4.745049387656246)},line width=2.pt,color=qqwuqq,fill=qqwuqq,fill opacity=0.10000000149011612] (0,0) -- (0.027829690692931185:0.5373931719486121) arc (0.027829690692931185:142.4104231498335:0.5373931719486121) -- cycle;
        \fill[line width=2.pt,color=ududff,fill=ududff,fill opacity=0.10000000149011612] (-6.822522059510523,6.) -- (-11.321613679445868,9.463463294510978) -- (-13.50437134818843,8.203247566886741) -- (-8.208693838820807,6.) -- cycle;
        \draw [line width=2.pt,domain=-13.62:-0.5] plot(\x,{(--16.--0.5773502691896257*\x)/1.});
        \draw [line width=2.pt,domain=-13.62:-0.5] plot(\x,{(--6.-0.*\x)/1.});
        \draw [line width=2.pt,color=ududff] (-6.822522059510523,6.)-- (-11.321613679445868,9.463463294510978);
        \draw [line width=2.pt,color=ududff] (-11.321613679445868,9.463463294510978)-- (-13.50437134818843,8.203247566886741);
        \draw [line width=2.pt,color=ududff] (-13.50437134818843,8.203247566886741)-- (-8.208693838820807,6.);
        \draw [line width=2.pt,color=ududff] (-8.208693838820807,6.)-- (-6.822522059510523,6.);
        \draw [line width=2.pt,domain=-13.62:-5.192322082401852] plot(\x,{(-4.584266616846456--4.718413906854733*\x)/-6.129291597044015});
        \draw [line width=2.pt,domain=-13.62:-5.192322082401852] plot(\x,{(-21.48500550744876--3.4581981792304957*\x)/-8.312049265786577});
        \begin{Large}
        \draw (-8.394590072899436,9.207998749394442) node[anchor=north west] {Area($S(O,2.48,\pi/12) \cap S'$) = 8.14};
        \draw [fill=black] (-5.192322082401852,4.745049387656246) circle (2.5pt);
        \draw[color=black] (-5.500145489327644,4.27635330672618) node {$O$};
        \draw[color=qqwuqq] (-3.2870257971905843,5.4081481968675295) node {$\theta = 2.48$ rad};
        \draw [fill=uuuuuu] (-11.321613679445868,9.463463294510978) circle (2.0pt);
        \draw[color=uuuuuu] (-10.293692583109262,9.478319211188817) node {$K_2$};
        \draw [fill=uuuuuu] (-13.50437134818843,8.203247566886741) circle (2.0pt);
        \draw[color=uuuuuu] (-13.131128530997934,8.768960224216638) node {$K_3$};
        \draw [fill=uuuuuu] (-8.208693838820807,6.) circle (2.0pt);
        \draw[color=uuuuuu] (-8.466555798483983,5.359558461304371) node {$K_4$};
        \draw [fill=uuuuuu] (-6.822522059510523,6.) circle (2.0pt);
        \draw[color=uuuuuu] (-6.166513022543922,6.554900355788328) node {$K_1$};
        \draw [fill=uuuuuu] (-17.320508075688775,6.) circle (2.0pt);
        \draw[color=uuuuuu] (-1.324590072899436,6.61938753642216) node {$\varepsilon_1$};
        \draw[color=uuuuuu] (-5.024590072899436,12.61938753642216) node {$\varepsilon_2$};
        \end{Large}
        \end{tikzpicture}
         }
    \end{tabular}
    \caption{An example that the area of the intersection $A(S(O,\theta,\pi/12)\cap S')$ is not a monotonic function. One might assume intuitively that as the sector $A(S(O,\theta,\pi/12))$ rotates counterclockwise then the area of intersection should decrease but there are cases where the function has local extreme points.}
    \label{fig:Local_extreme_points_problem}
\end{figure}
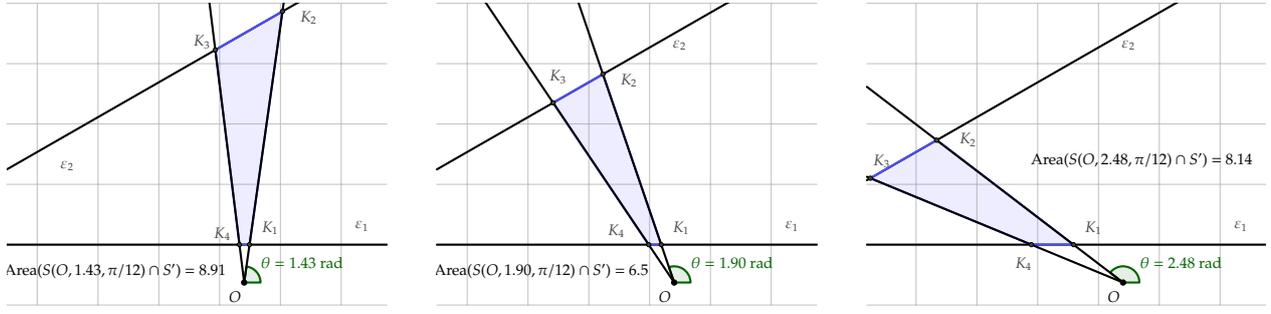

\subsection{The Intersection Area Function
}
The intersection of a fixed sector and a rotating one, when the rotating sector fully intersects the other one, is given in the following theorem.

\begin{theorem}
\label{theor:Analytical_Area}
Let two sectors on the plane, $S(C,\theta,\phi)$, $S(K,\theta_K,\phi_K)$ with $C \notin S(K,\theta_K,\phi_K)$, and $\mathcal{R}\subseteq[0,2\pi]\times(0,\pi)$. The area of the bounded intersection $S(C,\theta,\phi) \cap S(K,\theta_K,\phi_K)$ is 
\begin{align}
\label{eq:General_Case_Alt}
\hspace{-0.65cm}    &A(\theta,\phi)=\frac{d_1 \sin{\phi} ~\cos^2{(\theta_K+\phi_K)}}{2\sin{(\theta+\phi-\theta_K-\phi_K)}\sin{(\theta-\theta_K -\phi_K)}}-\frac{d_2 \sin{\phi} ~\cos^2{(\theta_K)} }{2\sin{(\theta+\phi-\theta_K)}\sin{(\theta-\theta_K)}} 
\hspace{-0.25cm} 
\end{align}
for every $(\theta, \phi) \in \mathcal{R}$,
where $\phi_K,\theta_K \in (-\pi/2,\pi/2)$, and $d_1$,$d_2 \in \mathbb{R}$, are constants representing distances (as in Figure~\ref{fig:General_Case2}(a)).
\end{theorem}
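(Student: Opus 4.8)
The plan is to place $C$ at the origin and to exploit the elementary fact that a ray emanating from $C$ at slope-angle $\psi$ meets a line of slope-angle $\beta$ lying at perpendicular distance $h$ from $C$ at the point whose distance from $C$ is $r=h/\sin(\psi-\beta)$; this is immediate from writing the ray as $t(\cos\psi,\sin\psi)$, $t\ge 0$, and intersecting it with the line. Applying this with the two boundary semi-lines of $S(C,\theta,\phi)$, which have slope-angles $\theta$ and $\theta+\phi$, against each of the two lines carrying the boundary semi-lines of $S(K,\theta_K,\phi_K)$, whose slope-angles are $\theta_K$ and $\theta_K+\phi_K$, locates every vertex of the bounded region $S(C,\theta,\phi)\cap S(K,\theta_K,\phi_K)$ explicitly in terms of the data.

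First I would argue that, under the hypothesis that $S(C,\theta,\phi)$ fully intersects $S(K,\theta_K,\phi_K)$ while $C\notin S(K,\theta_K,\phi_K)$, the intersection is a single quadrilateral that can be written as the set difference of two triangles sharing the apex $C$: the larger triangle $T_1$ is the part of the wedge $S(C,\theta,\phi)$ cut off by the line carrying the left semi-line of $S(K,\theta_K,\phi_K)$ (slope-angle $\theta_K+\phi_K$), and the smaller triangle $T_2$ is the part cut off by the line carrying the right semi-line (slope-angle $\theta_K$), with $T_2\subset T_1$ in the configuration of Figure~\ref{fig:General_Case2}(a). Consequently $A(\theta,\phi)=\mathrm{Area}(T_1)-\mathrm{Area}(T_2)$. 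This step is where the combinatorial structure of the figure has to be pinned down, and I expect it to be the main obstacle: one must check that ``lying between the two lines'' coincides with ``lying inside the wedge $S(K,\theta_K,\phi_K)$'' exactly on the portion swept out by $S(C,\theta,\phi)$, and that the orientation makes the signed difference positive.

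With the decomposition in hand the remainder is computation. Each $T_i$ is a triangle with apex angle $\phi$ at $C$ and legs of lengths $r_1=h_i/\sin(\theta-\beta_i)$ and $r_2=h_i/\sin(\theta+\phi-\beta_i)$, where $\beta_1=\theta_K+\phi_K$, $\beta_2=\theta_K$, and $h_i$ is the perpendicular distance from $C$ to the corresponding line; hence
\begin{equation}
\mathrm{Area}(T_i)=\tfrac12\, r_1 r_2\sin\phi=\frac{h_i^{2}\,\sin\phi}{2\,\sin(\theta+\phi-\beta_i)\,\sin(\theta-\beta_i)}.
\end{equation}
It then remains to express the two perpendicular distances $h_1,h_2$ through the marked distances $d_1,d_2$ and the line inclinations $\theta_K+\phi_K$, $\theta_K$ of Figure~\ref{fig:General_Case2}(a); projecting onto the normal directions replaces each $h_i^{2}$ by a term proportional to $\cos^{2}(\theta_K+\phi_K)$, respectively $\cos^{2}(\theta_K)$, and substituting into $\mathrm{Area}(T_1)-\mathrm{Area}(T_2)$ reproduces exactly Equation~\eqref{eq:General_Case_Alt}. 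Finally I would record that the restriction $\theta_K,\phi_K\in(-\pi/2,\pi/2)$ guarantees $\cos(\theta_K),\cos(\theta_K+\phi_K)>0$, so both numerators are genuinely positive, and that on the admissible region $\mathcal{R}$ the four sine factors in the denominators keep a constant sign, which is what makes the closed form valid for every $(\theta,\phi)\in\mathcal{R}$.
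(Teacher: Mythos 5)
Your proof is correct, but it takes a genuinely different route from the paper's. The paper (Lemma~\ref{lemma:General_Case2}) writes the two static lines as $y=\tan\beta\,(x-x_0)+y_0+d_1$ and $y=\tan\omega\,(x-x_0)+y_0+d_1+d_2$, solves explicitly for all four vertices of the quadrilateral as pairwise line intersections, feeds the coordinates into the shoelace formula (itself proved by induction in Lemma~\ref{lemma:Shoelace}), and only at the end collapses the resulting rational function of tangents via $\tan a-\tan b=\sin(a-b)/(\cos a\cos b)$. You instead write the quadrilateral as a difference $T_1\setminus T_2$ of two triangles with common apex $C$ and apex angle $\phi$, apply $\mathrm{Area}=\tfrac12 r_1r_2\sin\phi$ with the ray--line distance $r=h/\sin(\psi-\beta)$, and convert perpendicular distances to the vertical distances $|CE|,|CE'|$ via $h=v\cos\beta$ (legitimate since $\theta_K,\theta_K+\phi_K\in(-\pi/2,\pi/2)$); this reproduces equation~(\ref{eq:General_Case_Alt}) with far less algebra and makes the provenance of the $\sin\phi$ and $\cos^2$ factors conceptually transparent, whereas in the paper they emerge only after cancellation. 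Both arguments lean on the same configurational claim that you rightly single out as the delicate step: the paper simply reads the quadrilateral $P_1P_2P_3P_4$ and the ordering of the two lines off Figure~\ref{fig:General_Case22} and treats the case $x_0>x_k$ as generic, so you are no less rigorous there. What the paper's heavier computation buys is the explicit vertex coordinates and the signed constants $d_1,d_2$ exactly as they appear in the theorem statement; your argument still has to supply that last dictionary between the $h_i$ and the segments $CE$, $CE'$ on the vertical line through $C$, which you sketch but do not write out.
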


The proof of Theorem~\ref{theor:Analytical_Area} can be found in Section~\ref{sec:proof_calc}.
We derive this equation by expressing two of the four points of the intersection which is a quadrilateral, as the intersection of the left semi-line of the rotating sector with the left and right semi-lines of the static one. We do the same for the other two points of the quadrilateral, by using the right semi-line of the rotating sector. Then we use the shoelace formula to calculate the quadrilateral's area using the four points we identified and simplify the expression.
\footnote{Apart from the analytical proof of equation~(\ref{eq:General_Case_Alt}), various tests have been performed, in a simulation environment,  to affirm its validity.}
In the above theorem, if $\varepsilon_y$ is the vertical line that passes through $C=(x_C,y_C)$, and $E'$, $E$ are the intersections of $\varepsilon_y$ with the left and the right semi-line of $S(K,\theta_K,\phi_K)$ respectively,
then $d_1 = sign(x_C-x_K)|CE|^2$, $d_2 = sign(x_K-x_C)|CE'|^2$, where $sign(x)=1$ if $x\geq 0$, and $sign(x)=-1$, if $x<0$. 
Alternatively, a static sector consists of two intersecting lines. If a rotating sector intersects two parallel lines, then the analysis of Theorem~\ref{theor:Analytical_Area} is sound. 

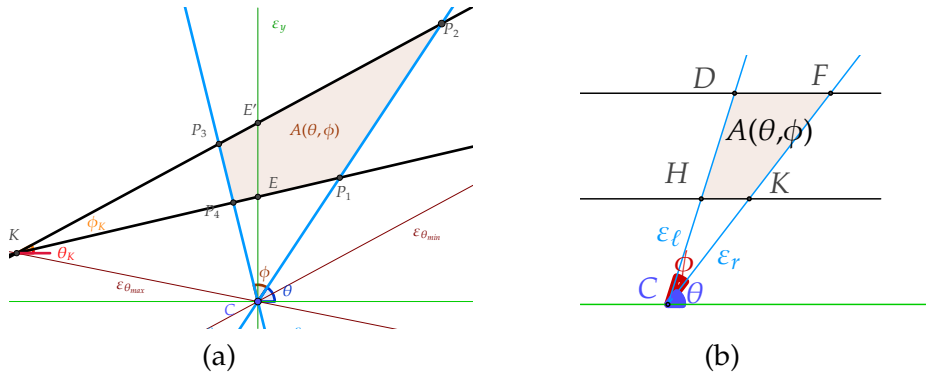
\begin{figure}[ht]
    \centering
    \begin{tabular}{c c}
         \scalebox{0.5}{
    \definecolor{dcrutc}{rgb}{0.8627450980392157,0.0784313725490196,0.23529411764705882}
\definecolor{ffxfqq}{rgb}{1.,0.4980392156862745,0.}
\definecolor{ffqqqq}{rgb}{1.,0.,0.}
\definecolor{xdxdff}{rgb}{0.49019607843137253,0.49019607843137253,1.}
\definecolor{yqqqqq}{rgb}{0.5019607843137255,0.,0.}
\definecolor{qqttcc}{rgb}{0.,0.2,0.8}
\definecolor{qqccqq}{rgb}{0.,0.8,0.}
\definecolor{zzttqq}{rgb}{0.6,0.2,0.}
\definecolor{qqzzff}{rgb}{0.,0.6,1.}
\definecolor{uuuuuu}{rgb}{0.26666666666666666,0.26666666666666666,0.26666666666666666}
\definecolor{qqzzqq}{rgb}{0.,0.6,0.}
\definecolor{ududff}{rgb}{0.30196078431372547,0.30196078431372547,1.}
\begin{tikzpicture}[line cap=round,line join=round,>=triangle 45,x=1.0cm,y=1.0cm]
\clip(-11.7,-0.5) rectangle (0.5,8.);
\draw [shift={(-5.159393740790174,0.21531696088107866)},line width=2.pt,color=zzttqq,fill=zzttqq,fill opacity=0.10000000149011612] (0,0) -- (56.71501006225292:0.44436403781217276) arc (56.71501006225292:103.78711748401521:0.44436403781217276) -- cycle;
\draw [shift={(-5.159393740790174,0.21531696088107866)},line width=2.pt,color=qqttcc,fill=qqttcc,fill opacity=0.10000000149011612] (0,0) -- (0.:0.44436403781217276) arc (0.:56.71501006225292:0.44436403781217276) -- cycle;
\fill[line width=2.pt,color=zzttqq,fill=zzttqq,fill opacity=0.10000000149011612] (-5.803401263375599,2.839793156357704) -- (-3.0039124901504244,3.4985976737639572) -- (-0.3207801583523553,7.585609162493171) -- (-6.184794752758505,4.394057817703224) -- cycle;
\draw [shift={(-11.506444202279491,1.4976943820935658)},line width=2.pt,color=ffqqqq,fill=ffqqqq,fill opacity=0.10000000149011612] (0,0) -- (0.:0.44436403781217276) arc (0.:13.242483943385635:0.44436403781217276) -- cycle;
\draw [shift={(-11.506444202279491,1.4976943820935658)},line width=2.pt,color=ffxfqq,fill=ffxfqq,fill opacity=0.10000000149011612] (0,0) -- (13.242483943385633:0.44436403781217276) arc (13.242483943385633:28.557705384660107:0.44436403781217276) -- cycle;
\draw [line width=2.pt,domain=-11.7:0.5] plot(\x,{(--76.79914092070155--5.386298304762439*\x)/9.8965451144985});
\draw [line width=2.pt,domain=-11.7:0.5] plot(\x,{(-27.584748764628948-1.5435769155688561*\x)/-6.559193404983679});
\draw [line width=0.4pt,color=qqzzqq] (-5.159393740790174,-0.5) -- (-5.159393740790174,8.);
\draw [line width=2.pt,color=qqzzff,domain=-11.7:0.5] plot(\x,{(--39.06807502424048--7.370292201612092*\x)/4.8386135824378185});
\draw [line width=2.pt,color=qqzzff,domain=-11.7:0.5] plot(\x,{(--56.19305667643323--11.004096104127795*\x)/-2.7002419311424806});
\draw [line width=0.4pt,color=qqccqq,domain=-11.7:0.5] plot(\x,{(--0.21531696088107866-0.*\x)/1.});
\draw [line width=0.4pt,color=yqqqqq,domain=-11.7:0.5] plot(\x,{(--5.249662424407623--1.282377421212487*\x)/-6.347050461489317});
\draw [line width=0.4pt,color=yqqqqq,domain=-11.7:0.5] plot(\x,{(--3.0233710280431114--0.5442604709467226*\x)/1.});
\draw [line width=2.pt,color=dcrutc] (-11.506444202279491,1.4976943820935658)-- (-10.627173350827759,1.4976943820935658);
\begin{large}
\draw[color=black] (-13.454650235760942,0.12609754063500522) node {$\varepsilon_1$};
\draw[color=black] (-12.921413390386334,1.8443051535087376) node {$\varepsilon_2$};
\draw [fill=ududff] (-5.159393740790174,0.21531696088107866) circle (2.5pt);
\draw[color=ududff] (-5.893055525657135,0.015006531181962171) node {$C$};
\draw [fill=uuuuuu] (-5.159393740790174,4.952143055386355) circle (2.0pt);
\draw[color=uuuuuu] (-5.359818680282528,5.302938581146811) node {$E'$};
\draw [fill=uuuuuu] (-5.159393740790174,2.991347636297303) circle (2.0pt);
\draw[color=uuuuuu] (-4.752521161939225,3.362548949366993) node {$E$};
\draw [fill=uuuuuu] (-0.3207801583523553,7.585609162493171) circle (2.5pt);
\draw[color=uuuuuu] (-0.06448056302080292,7.413667760754629) node {$P_2$};
\draw[color=qqzzff] (-6.359637765359917,-0.6145091890519485) node {$\varepsilon_r$};
\draw[color=qqzzff] (-4.063756903330358,-0.6145091890519485) node {$\varepsilon_l$};
\draw[color=zzttqq] (-4.974703180845312,0.9629831451812628) node {$\phi$};
\draw[color=qqttcc] (-4.382217797095747,0.5186191073690907) node {$\theta$};
\draw [fill=uuuuuu] (-6.184794752758505,4.394057817703224) circle (2.0pt);
\draw[color=uuuuuu] (-6.700316861015916,4.747483533881597) node {$P_3$};
\draw [fill=uuuuuu] (-5.803401263375599,2.839793156357704) circle (2.0pt);
\draw[color=uuuuuu] (-6.359637765359917,2.5997240177894305) node {$P_4$};
\draw [fill=uuuuuu] (-3.0039124901504244,3.4985976737639572) circle (2.0pt);
\draw[color=uuuuuu] (-2.8935982704249694,3.08852445938282) node {$P_1$};
\draw[color=zzttqq] (-3.6564232020025327,4.710453197397248) node {$A(\theta,\phi)$};
\draw [fill=uuuuuu] (-11.506444202279491,1.4976943820935658) circle (2.0pt);
\draw[color=uuuuuu] (-11.580915209652947,1.999832566742998) node {$K$};
\draw[color=yqqqqq] (-14.187850898151027,2.2960752586177793) node {$k$};
\draw[color=yqqqqq] (-6.929904947218872,-0.6515395255362961) node {$e111$};
\draw[color=ffqqqq] (-10.203314326466167,1.485165164087349) node {$\theta_K$};
\draw[color=ffxfqq] (-9.395872076184775,2.273893239711693) node {$\phi_K$};
\draw[color=qqzzqq] (-4.6,7.5) node {$\varepsilon_y$};
\draw[color=yqqqqq] (-8.5,0.6) node {$\varepsilon_{\theta_{max}}$};
\draw[color=yqqqqq] (-0.7,2) node {$\varepsilon_{\theta_{min}}$};
\end{large}    
\end{tikzpicture}
    }
         &
         \scalebox{1.4}{
\definecolor{zzttqq}{rgb}{0.6,0.2,0.}
\definecolor{qqccqq}{rgb}{0.,0.8,0.}
\definecolor{uuuuuu}{rgb}{0.26666666666666666,0.26666666666666666,0.26666666666666666}
\definecolor{qqzzff}{rgb}{0.,0.6,1.}
\definecolor{ccqqqq}{rgb}{0.8,0.,0.}
\definecolor{ududff}{rgb}{0.30196078431372547,0.30196078431372547,1.}
\begin{tikzpicture}[line cap=round,line join=round,>=triangle 45,x=1.0cm,y=1.0cm]
\clip(5.176585295153291,2.2719567019729383) rectangle (8.424729008837147,4.8577694349339495);
\draw [shift={(6.,2.5)},line width=2.pt,color=ccqqqq] (0,0) -- (52.5971881455121:0.27219081399585376) arc (52.5971881455121:72.59718814551206:0.27219081399585376) -- cycle;
\draw [shift={(6.,2.5)},line width=2.pt,color=ududff] (0,0) -- (0.:0.13609540699792688) arc (0.:52.59718814551214:0.13609540699792688) -- cycle;
\fill[line width=2.pt,color=zzttqq,fill=zzttqq,fill opacity=0.10000000149011612] (7.529270994752774,4.5) -- (6.6268698012980485,4.5) -- (6.313434900649024,3.5) -- (6.7646354973763865,3.5) -- cycle;
\draw [line width=0.4pt] (4.5,4.5)-- (8.,4.5);
\draw [line width=0.4pt] (4.5,3.5)-- (8.,3.5);
\draw [line width=0.4pt,color=qqzzff,domain=6.0:8.424729008837147] plot(\x,{(-2.828979958350093--0.6919509268573623*\x)/0.5290902411176317});
\draw [line width=0.4pt,color=qqzzff,domain=6.0:8.424729008837147] plot(\x,{(-4.335781599753609--0.831180700013098*\x)/0.26052104012999155});
\draw [shift={(6.,2.5)},line width=2.pt,color=ududff] (0.:0.13609540699792688) arc (0.:52.59718814551214:0.13609540699792688);
\draw [shift={(6.,2.5)},line width=2.pt,color=ududff] (0.:0.10660806881504273) arc (0.:52.59718814551214:0.10660806881504273);
\draw [line width=0.4pt,color=qqccqq,domain=5.176585295153291:8.424729008837147] plot(\x,{(--2.5-0.*\x)/1.});
\begin{scriptsize}
\draw (6.446218989196048,4.354609832111218) node[anchor=north west] {$A$($\theta$,$\phi$)};
\draw [fill=ududff] (6.,2.5) circle (0.5pt);
\draw[color=ududff] (5.811500492942095,2.6600253134519275) node {$C$};
\draw[color=ccqqqq] (6.156472225538364,2.8779746661834825) node {$\phi$};
\draw[color=qqzzff] (6.576099730448639,2.925214655563088) node {$\varepsilon_r$};
\draw[color=qqzzff] (6.013375346655687,3.1429673067598043) node {$\varepsilon_\ell$};
\draw [fill=uuuuuu] (6.313434900649024,3.5) circle (0.5pt);
\draw[color=uuuuuu] (6.08783441743484,3.7489852709703493) node {$H$};
\draw [fill=uuuuuu] (7.529270994752774,4.5) circle (0.5pt);
\draw[color=uuuuuu] (7.430839134682573,4.67018255621968) node {$F$};
\draw [fill=uuuuuu] (6.7646354973763865,3.5) circle (0.5pt);
\draw[color=uuuuuu] (7.050558801227791,3.6489852709703493) node {$K$};
\draw [fill=uuuuuu] (6.6268698012980485,4.5) circle (0.5pt);
\draw[color=uuuuuu] (6.3428072934929657,4.647018255621968) node {$D$};
\draw[color=ududff] (6.251739010436913,2.6099269626845083) node {$\theta$};
\end{scriptsize}
\end{tikzpicture}
}
         \\
         (a) & (b) 
    \end{tabular}
    \caption{ (a) The area $A(\theta,\phi)$ of intersection of two sectors $S(C,\theta,\phi)\cap S(K,\theta_K,\phi_K)$. The function $A(\theta,\phi) $ is defined between the lines $\varepsilon_{\theta_{max}}$, and $\varepsilon_{\theta_{min}}$.Finally, if $\varepsilon_y$ is a vertical line that passes through $C$, then $E' = \varepsilon_y \cap KP_2$, $E = \varepsilon_y \cap KP_1$ and $d_1=|CE'|^2$, $d_2=|CE|^2$.
    \\
    (b) The intersection of a rotating sector with two parallel lines, which can be considered a special case of Theorem~\ref{theor:Analytical_Area}, where equation~(\ref{eq:General_Case_Alt}) hols.}
    \label{fig:General_Case2}
\end{figure}

\begin{corollary}[Intersection of Two Parallel Lines and a Sector]
    If two parallel lines intersect with a sector, then equation~(\ref{eq:General_Case_Alt}) holds for $\phi_K=0$ and $\mathcal{R}=(\theta_k,\pi+\theta_k-\phi)\times (0,\pi)$ that is
    \begin{align}
        \label{eq:Parallel}
        \hspace{-0.65cm}    &A(\theta,\phi)=\frac{(d_1-d_2) \sin{\phi} ~\cos^2{(\theta_K)}}{2\sin{(\theta+\phi-\theta_K)}\sin{(\theta-\theta_K)}} 
        \hspace{-0.25cm} 
    \end{align}
\end{corollary}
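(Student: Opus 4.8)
The plan is to obtain the corollary as the degenerate case $\phi_K=0$ of Theorem~\ref{theor:Analytical_Area}, in which the static sector's two semi-lines become parallel (equal slope $\tan\theta_K$) and its apex $K$ recedes to infinity. The essential observation is that the derivation behind equation~(\ref{eq:General_Case_Alt}) never uses the position of $K$ directly: the four vertices $P_1,P_2,P_3,P_4$ of the intersection quadrilateral are computed purely as pairwise intersections of the rotating sector's two semi-lines (slopes $\theta$ and $\theta+\phi$) with the two static lines (slopes $\theta_K$ and $\theta_K+\phi_K$), and the constants $d_1,d_2$ enter only through the intercepts of these static lines with the vertical line $\varepsilon_y$ through $C$. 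Hence the shoelace computation that yields~(\ref{eq:General_Case_Alt}) remains valid verbatim when the two static lines are taken to be parallel, provided the four intersection points still exist and bound a finite region.

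First I would make the substitution $\phi_K=0$ in~(\ref{eq:General_Case_Alt}). The factor $\cos^2{(\theta_K+\phi_K)}$ collapses to $\cos^2{(\theta_K)}$, so both terms carry the common factor $\sin{\phi}\,\cos^2{(\theta_K)}$; simultaneously the two denominators coincide, each becoming $2\sin{(\theta+\phi-\theta_K)}\sin{(\theta-\theta_K)}$. Factoring this common quantity out of the difference of the two terms leaves precisely
\begin{align*}
A(\theta,\phi)=\frac{(d_1-d_2)\sin{\phi}\,\cos^2{(\theta_K)}}{2\sin{(\theta+\phi-\theta_K)}\sin{(\theta-\theta_K)}},
\end{align*}
which is~(\ref{eq:Parallel}). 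Here $d_1=|CE'|^2$ and $d_2=|CE|^2$ retain their meaning from Figure~\ref{fig:General_Case2}(a), with $E',E$ now the intersections of $\varepsilon_y$ with the two parallel lines, so that $d_1-d_2$ encodes the signed squared offset of the strip relative to $C$.

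Second I would pin down the admissible domain $\mathcal{R}=(\theta_K,\pi+\theta_K-\phi)\times(0,\pi)$ and justify that on it the intersection is a genuine bounded quadrilateral. The two sine factors in the denominator are, up to the common slope, the sines of the angles at which the rotating semi-lines meet the parallel lines; they are nonzero exactly when each semi-line crosses the strip transversally. Requiring $\theta-\theta_K\in(0,\pi)$ and $\theta+\phi-\theta_K\in(0,\pi)$ simultaneously gives $\theta>\theta_K$ and $\theta<\pi+\theta_K-\phi$, i.e. the stated interval; on it both factors are strictly positive, so the denominator never vanishes and the four vertices trace out a finite trapezoidal region. I expect the main obstacle to be exactly this boundedness and domain verification: one must confirm that as $\phi_K\to 0$ the vanishing of the apex term does not produce spurious unbounded behaviour, and that the limiting quadrilateral computed from the two parallel lines coincides with the one obtained by direct line intersections. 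Once continuity of~(\ref{eq:General_Case_Alt}) in $\phi_K$ at $\phi_K=0$ is noted, together with the transversality condition defining $\mathcal{R}$, the substitution above is rigorous and the corollary follows.
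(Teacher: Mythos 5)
Your proposal is correct and follows essentially the same route as the paper, which presents the corollary as the immediate specialisation $\phi_K=0$ of Theorem~\ref{theor:Analytical_Area}: setting $\phi_K=0$ makes the two denominators and the two cosine factors coincide, and factoring out the common quantity gives~(\ref{eq:Parallel}). Your additional remarks — that the shoelace derivation never uses the apex $K$ itself, and that the transversality conditions $\theta-\theta_K\in(0,\pi)$ and $\theta+\phi-\theta_K\in(0,\pi)$ yield the stated domain $\mathcal{R}$ — are accurate and in fact spell out more than the paper does.
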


In Proposition~\ref{proposition:difcil_Ath}, we show that the original function $A_\phi(\theta)$ could be standardised and expressed as an exponential polynomial function (polynomials with non-integer powers). However, the direct maximisation of these non-convex functions is difficult, see~\cite{lax1948quotient}. The main difficulty is that there are no constructive criteria to check the number of possible solutions of $dA_{\phi}/d\theta =0$, which means there is no guarantee of finding the global maximum value within a given precision and computation time by applying naively general numerical methods. 
The domain of function $A(\theta,\phi)$ is $\mathcal{R}= \mathcal{D} \times \mathcal{I} \subseteq[0,2\pi]\times(0,\pi)$.
We will denote the restriction of $A(\theta,\phi)$ at $\mathcal{D}$ and $\mathcal{I}$ respectively, as $A_\phi: \mathcal{D} \rightarrow \mathbb{R}$, and $A_\theta:\mathcal{I}\rightarrow \mathbb{R}$.

\begin{proposition}
\label{proposition:difcil_Ath}
    The function $A_\phi(\theta)$ is a rational function
    of the form $P(x)/Q(x)$, where $P(x)$, and $Q(x)$ are exponential polynomials (polynomials with non-integer powers).
\end{proposition}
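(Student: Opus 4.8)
The plan is to start from the closed form of Theorem~\ref{theor:Analytical_Area} and simply track which quantities become constant once $\phi$ is frozen. Writing $A_\phi(\theta)$ for the restriction of equation~(\ref{eq:General_Case_Alt}) to a fixed inner angle, the factors $d_1,d_2,\sin\phi,\cos^2(\theta_K+\phi_K),\cos^2\theta_K$ are all constants, so the entire $\theta$--dependence sits in the four sine factors of the two denominators. Introducing the constant phase shifts $\alpha=\phi-\theta_K-\phi_K$, $\beta=-\theta_K-\phi_K$, $\gamma=\phi-\theta_K$, $\delta=-\theta_K$ and the constants $C_1=\tfrac{1}{2}d_1\sin\phi\cos^2(\theta_K+\phi_K)$, $C_2=\tfrac{1}{2}d_2\sin\phi\cos^2\theta_K$, the function reads
\[
A_\phi(\theta)=\frac{C_1}{\sin(\theta+\alpha)\sin(\theta+\beta)}-\frac{C_2}{\sin(\theta+\gamma)\sin(\theta+\delta)} .
\]
First I would place the two terms over the common denominator $\sin(\theta+\alpha)\sin(\theta+\beta)\sin(\theta+\gamma)\sin(\theta+\delta)$, so that both numerator and denominator become finite products and sums of factors $\sin(\theta+c)$, i.e. trigonometric polynomials in $\theta$.

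The key step is to algebraise these trigonometric polynomials through a single standardising substitution. I would set $x=\cos\theta$ and $\sin\theta=(1-x^2)^{1/2}$, chosen on the branch matching the orientation of the rotation interval $\mathcal{D}$, and expand every factor by the angle--sum identity $\sin(\theta+c)=x\sin c+(1-x^2)^{1/2}\cos c$. Each factor is then affine in the two quantities $x$ and $(1-x^2)^{1/2}$; since $\bigl((1-x^2)^{1/2}\bigr)^2=1-x^2$ is an ordinary polynomial, any product of such factors collapses to the canonical shape $R_0(x)+R_1(x)(1-x^2)^{1/2}$ with $R_0,R_1$ ordinary polynomials. Applying this to the numerator and denominator obtained above yields $P(x)=N_0(x)+N_1(x)(1-x^2)^{1/2}$ and $Q(x)=D_0(x)+D_1(x)(1-x^2)^{1/2}$, both of which are exponential polynomials in the stated sense precisely because of the half--integer powers carried by $(1-x^2)^{1/2}$.

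This already gives $A_\phi(\theta)=P(x)/Q(x)$ as a quotient of two exponential polynomials, which is the assertion of the proposition. To connect the form with the downstream hardness discussion I would add one remark: rationalising by the conjugate $D_0(x)-D_1(x)(1-x^2)^{1/2}$ turns the denominator into the ordinary polynomial $D_0^2-D_1^2(1-x^2)$, while the numerator retains a surviving $(1-x^2)^{1/2}$ term, so that the critical equation $dA_\phi/d\theta=0$ reduces to a relation mixing polynomials with a square root, for which no constructive bound on the number of real roots is available; this is exactly the obstruction flagged via \cite{lax1948quotient}.

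The product--to--sum bookkeeping is routine; the genuine subtlety, and the step I expect to be the main obstacle, is controlling the square--root branch. Because $\sin\theta=\pm(1-x^2)^{1/2}$ changes sign as $\theta$ crosses a multiple of $\pi$, I must either restrict to a subinterval of $\mathcal{D}$ on which the sign is constant or carry the sign explicitly, and then verify that the surviving odd--power coefficient $N_1D_0-N_0D_1$ does not vanish identically, so that the representation is truly of non--integer--power type and does not silently degenerate into an ordinary rational function. Establishing this non--degeneracy outside the measure--zero alignments of $\theta_K,\phi_K,\phi$ is where the care lies; everything else is direct substitution and simplification.
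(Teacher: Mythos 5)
Your proposal is correct and follows essentially the same route as the paper: with $\phi$ frozen, substitute a single variable $x$ for one of $\sin\theta,\cos\theta$ and write the other as $(1-x^2)^{1/2}$, collapse each product of sine factors to the canonical form $R_0(x)+R_1(x)(1-x^2)^{1/2}$, and combine the two fractions over a common denominator to exhibit $A_\phi(\theta)=P(x)/Q(x)$ with $P,Q$ exponential polynomials. The paper merely organises the same expansion through determinant identities for $\sin(\theta+\phi-\omega)\sin(\theta-\omega)$ and takes $x=\sin\theta$ rather than $x=\cos\theta$; your extra remarks on the square-root branch and non-degeneracy are additional care the paper's own proof does not include.
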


The proof of Proposition~\ref{proposition:difcil_Ath} can be found in Section~\ref{sec:proof_calc}.
Even though it is hard to find the local extreme points of $A_\phi(\theta)$, 
we found a way to identify its global maximum indirectly following the analysis of the function $A_\theta(\phi)$. 
The function $A_\theta(\phi)$ has a symmetry that allows the cancellation of terms and
gives us the possibility
to calculate the explicit analytical form of its extreme points, see Lemma~\ref{lemma:Der_Phi_Roots}.

\begin{lemma}
\label{lemma:Der_Phi_Roots}
The function $A_\theta(\phi)$  has at most two local extreme points in $\mathcal{I} \subseteq (0,\pi)$, and they can be explicitly calculated.
\end{lemma}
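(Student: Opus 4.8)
The plan is to begin from the closed form in equation~(\ref{eq:General_Case_Alt}), freeze $\theta$, and collect every $\theta$-dependent quantity into constants so that $A_\theta(\phi)$ reduces to a sum of two elementary terms of the shape $\sin\phi/\sin(\phi+\alpha)$. Concretely I would set $a=\theta-\theta_K-\phi_K$, $b=\theta-\theta_K$, $c_1=\tfrac12 d_1\cos^2(\theta_K+\phi_K)$ and $c_2=\tfrac12 d_2\cos^2(\theta_K)$, all constants once $\theta$ is fixed, and write
\begin{align*}
A_\theta(\phi) = \frac{c_1}{\sin a}\,\frac{\sin\phi}{\sin(\phi+a)} - \frac{c_2}{\sin b}\,\frac{\sin\phi}{\sin(\phi+b)},
\end{align*}
which is legitimate on $\mathcal{R}$ because there $\sin a,\sin b\neq 0$.

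The structural fact that makes the extreme points explicit --- the ``symmetry'' alluded to in the text --- is the identity
\begin{align*}
\frac{d}{d\phi}\,\frac{\sin\phi}{\sin(\phi+\alpha)} = \frac{\sin\alpha}{\sin^2(\phi+\alpha)},
\end{align*}
whose numerator $\cos\phi\,\sin(\phi+\alpha)-\sin\phi\,\cos(\phi+\alpha)$ collapses to the constant $\sin\alpha$. Differentiating $A_\theta$ term by term and cancelling the $\sin a,\sin b$ factors then yields the strikingly simple
\begin{align*}
A_\theta'(\phi) = \frac{c_1}{\sin^2(\phi+a)} - \frac{c_2}{\sin^2(\phi+b)},
\end{align*}
so that $A_\theta'(\phi)=0$ is equivalent to $c_1\sin^2(\phi+b)=c_2\sin^2(\phi+a)$.

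Next I would linearise with $\sin^2 x=\tfrac12(1-\cos 2x)$, turning the critical-point equation into $C_0+C_1\cos 2\phi+C_2\sin 2\phi=0$ with
\begin{align*}
C_0 = c_1-c_2,\qquad C_1 = c_2\cos 2a - c_1\cos 2b,\qquad C_2 = c_1\sin 2b - c_2\sin 2a.
\end{align*}
Introducing $R=\sqrt{C_1^2+C_2^2}$ and a phase $\psi$ with $\cos\psi=C_1/R$, $\sin\psi=C_2/R$, this collapses to the single sinusoidal equation $\cos(2\phi-\psi)=-C_0/R$, whose solutions $\phi=\tfrac12\bigl(\psi\pm\arccos(-C_0/R)\bigr)$ are precisely the explicit closed forms claimed in the statement.

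Finally I would bound the number of roots. As $\phi$ runs over $\mathcal{I}\subseteq(0,\pi)$ the argument $2\phi$ sweeps an interval contained in a single period of length $2\pi$, and on such an interval $\cos(2\phi-\psi)=-C_0/R$ has at most two solutions (two when $|C_0/R|<1$, one in the tangential case $|C_0/R|=1$, none when $|C_0/R|>1$), with the degenerate case $R=0$ treated separately since it leaves $A_\theta'$ without an interior zero. This gives at most two local extreme points. The step I expect to demand the most care is the sign bookkeeping: the constants $c_1,c_2$ inherit the $sign(\cdot)$ factors of $d_1,d_2$, and $\sin a,\sin b$ may be negative on $\mathcal{R}$, so I must check that the passage to $\cos(2\phi-\psi)=-C_0/R$ is equivalence-preserving and introduces no spurious roots; once that is confirmed, the trigonometric count is immediate.
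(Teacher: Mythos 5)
Your proof is correct, and it shares with the paper the one idea that makes this lemma work: differentiating $\sin\phi/\sin(\phi+\alpha)$ collapses the numerator to the constant $\sin\alpha$, so that $\partial A/\partial\phi$ reduces to $c_1/\sin^2(\phi+a)-c_2/\sin^2(\phi+b)$. From that point on, your route diverges from the paper's. The paper takes square roots of the critical-point equation, splits on the sign $\pm d_2\cos\beta/(d_1\cos\omega)$, expands the sines, divides by $\cos\phi$ (forcing a separate case for $\phi=\pi/2$), and solves a linear equation in $\tan\phi$ for each sign, invoking injectivity of $\tan$ on $[0,\pi/2)\cup(\pi/2,\pi]$ to get uniqueness of each branch. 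You instead cross-multiply, linearise with $\sin^2 x=\tfrac12(1-\cos 2x)$, and use harmonic addition to reach the single equation $\cos(2\phi-\psi)=-C_0/R$. Your version buys three things: the sign issues (negative $d_1,d_2$, negative $\sin a,\sin b$) are absorbed into the constants $C_0,C_1,C_2$ rather than requiring an absolute-value step whose right-hand side the paper implicitly assumes positive; the $\phi=\pi/2$ case split disappears; and the bound of two roots follows from the clean observation that $2\phi$ sweeps one period, rather than from counting one root per $\pm$ branch. Two small points to tidy: the general solution is $\phi=\tfrac12\bigl(\psi\pm\arccos(-C_0/R)\bigr)+k\pi$, so you should note which representatives land in $\mathcal{I}\subseteq(0,\pi)$; and in the degenerate case $R=0$ with $C_0=0$ the derivative vanishes identically (a constant $A_\theta$), which still satisfies ``at most two local extreme points'' but is not literally ``no interior zero'' as you wrote.
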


\begin{proof}
\begin{align*}
    \frac{\partial A}{\partial\phi} =& 
    \frac{d_1^2\cos^2{\omega} \sin{(\theta-\omega)} (\cos{\phi}\sin{(\theta+\phi-\omega)} -\sin{\phi}\cos{(\theta+\phi-\omega)} )}
    {2\sin^2{(\theta+\phi-\omega)}\sin^2{(\theta-\omega)}}
    -
    \\
    &- 
    \frac{d_2^2\cos^2{\beta}\sin{(\theta-\beta)}(\cos{\phi}\sin{(\theta+\phi-\beta)}-\sin{\phi}\cos{(\theta+\phi-\beta)})}
    {2\sin^2{(\theta+\phi-\beta)}\sin^2{(\theta-\beta)}}
\end{align*}

\begin{align*}
    \frac{\partial A}{\partial \phi} =& 
    \frac{d_1^2\cos^2{\omega} \sin^2{(\theta-\omega)}}
    {2\sin^2{(\theta+\phi-\omega)}\sin^2{(\theta-\omega)}}
    -
    \frac{d_2^2\cos^2{\beta}\sin^2{(\theta-\beta)}}
    {2\sin^2{(\theta+\phi-\beta)}\sin^2{(\theta-\beta)}}
\end{align*}
\begin{align*}
    \frac{\partial A}{\partial \phi}=0 \Rightarrow
    \frac{d_1^2\cos^2{\omega}}
    {2\sin^2{(\theta+\phi-\omega)}}
    =
    \frac{d_2^2\cos^2{\beta}}
    {2\sin^2{(\theta+\phi-\beta)}}
\end{align*}
\begin{align*}
    \frac{\sin^2{(\theta+\phi-\beta)}}
    {\sin^2{(\theta+\phi-\omega)}}
    =
    \frac{d_2^2\cos^2{\beta}}
    {d_1^2\cos^2{\omega}}
    \Rightarrow
    \left| \frac{\sin{(\theta+\phi-\beta)}}
    {\sin{(\theta+\phi-\omega)}} \right|
    =
    \frac{d_2\cos{\beta}}
    {d_1\cos{\omega}}
\end{align*}
\begin{align*}
    \frac{\sin{\phi}\cos{(\theta-\beta)}+\cos{\phi}\sin{(\theta-\beta)}}
    {\sin{\phi}\cos{(\theta-\omega)}+\cos{\phi}\sin{(\theta-\omega)}} 
    =
    \pm
    \frac{d_2\cos{\beta}}
    {d_1\cos{\omega}}
\end{align*}
If $\phi \neq \pi/2$ we have the following two solution
\begin{align*}
    \phi_1
    =
    \arctan{ \left(
    \frac{d_2\cos{\beta}\sin{(\theta-\omega)} - d_1\cos{\omega}\sin{(\theta-\beta)}}
    {d_1\cos{\omega}\cos{(\theta-\beta)}-d_2\cos{\beta}\cos{(\theta-\omega)}}
    \right)}
    \\
    \phi_2
    = \arctan{ \left(
    - \frac{d_2\cos{\beta}\sin{(\theta-\omega)} + d_1\cos{\omega}\sin{(\theta-\beta)}}
    {d_1\cos{\omega}\cos{(\theta-\beta)}+d_2\cos{\beta}\cos{(\theta-\omega)}}
    \right)}
\end{align*}
Note that the tangent function defined in $[0,\pi/2)\cup(\pi/2,\pi]$ is an injection which guarantees that $\phi_1$ and $\phi_2$ are unique.
Moreover, if $\phi = \pi/2$ then
\begin{align*}
    \frac{\partial A}{\partial \phi} = 0 \Leftrightarrow
    \frac{\cos{(\theta-\beta)}}{\cos{(\theta-\omega)}} = \pm 
    \frac{d_2\cos{\beta}}
    {d_1\cos{\omega}}
\end{align*}
\end{proof}

From Lemma~\ref{lemma:Der_Phi_Roots} we calculate the roots of the equation $dA_\theta/d\phi=0$, $\phi_1<\phi_2$. Now in the interval $(\phi_1,\phi_2)$ function $dA_\theta/d\phi$ is either positive or negative, which means that $A_\theta(\phi)$ is strictly increasing or decreasing. In the following section, we express the area of intersection as a linear combination of $A_\theta(\phi)$ functions. Using the local extreme points of each $A_\theta(\phi)$, we can identify the intervals where there is at most one solution of the linear combination, which leads to an effective method of finding a global maximum for the original $A_\phi(\theta)$ function (see Lemma~\ref{lemma:A_decomp_AL_AR}).

\subsection{Approximating the Maximum Area Under Restricted Rotations}


The objective of maximising the area of the intersection of two sectors without restricting the domain of rotations is an ill-posed optimisation problem because there are unbounded intersections which means that the maximum is infinity. Even if we disregard those, the natural domain of rotations where equation~(\ref{eq:General_Case_Alt}) is well-defined is an open set $(\theta_{min},\theta_{max})$, which means that we can create a strictly increasing sequence of $A_{\phi}(\theta_i)$ which tends to infinity as $\theta_i$ tends to either $\theta_{min}$ or $\theta_{max}$. For these reasons, we study the maximisation of $A(\theta,\phi)$ under restricted rotations, i.e. we consider that $\theta$ belongs in a closed and bounded subset of $\mathbb{R}$, which guarantees the existence of a maximum. 

\begin{problem}
Given an interval $[a,b]$, a fixed sector $S(K,\theta_K,\phi_K)$ and a sector $S(C,\theta,\phi)$ with the centre $C \notin S(K,\theta_K,\phi_K)$, calculate the area of the intersection when $\theta \in [a,b]$, and $S(C,\theta,\phi)$ fully intersects $S(K,\theta_K,\phi_K)$.
\end{problem}

From equation~(\ref{eq:General_Case_Alt}), one can verify that $A_\phi(\theta)$ is not a convex function which means that this function may have multiple extreme points inside a given interval $[a,b]$.
Since finding the local extreme points of $A_\phi(\theta)$ analytically is a non-trivial problem (see Proposition~\ref{proposition:difcil_Ath}),  we first conceptualise the change in direction as an increase or a decrease of two different sectors to express $A_{\phi}(\theta_i)$ as a combination of $A_{\theta}(\phi)$ functions. 
Secondly, we apply numerical methods to approximate the solutions of equations. The method taken into consideration is the Newton Raphson method~\cite{NR1,NR2,cont_opt}.
It is easy to modify Newton Raphson to return a value equivalent to a negative value if it does not converge after a constant number of iterations.
Keep in mind that if Newton Raphson converges, then the time complexity needed to approximate the solution of an equation $f(x)=0$ up to $\varepsilon>1$ accuracy is $F(\varepsilon)\cdot \log{\varepsilon}$, where $F(\varepsilon)$ is the complexity of computing $f'/f$, up to $\varepsilon$ precision, that is $|apx-opt|<10^{-\varepsilon}$. 

We initially focus on the case where the angle of the rotation $\theta$ is bounded by the inner angle $\phi$, i.e. $\theta \in [\theta_0-\phi,\theta_0+\phi]$ because it allows us to express $A_\phi(\theta)$ with only two functions $A_\theta$ and a constant $A(\theta_0,\phi)$, see Lemma~\ref{lemma:A_decomp_AL_AR} and Figure~\ref{fig:Restricted_Rotation}. In the following lemma not only do we express the function $A_\phi$ as a summation of $A_\theta$ but we also divide the domain $[\theta_0,\theta_0+\phi]$ into a finite number of intervals where in each there exists at most one point that is a root of the first derivative of the function of the area to identify every possible local maximum. In the end, we obtain the maximum by selecting the maximum out of all local maximums. 

\begin{figure}[ht]
    \centering
    \scalebox{0.9}{
\definecolor{qqzzcc}{rgb}{0.,0.6,0.8}
\definecolor{zzttqq}{rgb}{0.6,0.2,0.}
\definecolor{ttzzqq}{rgb}{0.2,0.6,0.}
\definecolor{xfqqff}{rgb}{0.4980392156862745,0.,1.}
\definecolor{qqzzqq}{rgb}{0.,0.6,0.}
\definecolor{qqwuqq}{rgb}{0.,0.39215686274509803,0.}
\definecolor{uuuuuu}{rgb}{0.26666666666666666,0.26666666666666666,0.26666666666666666}
\definecolor{yqqqqq}{rgb}{0.5019607843137255,0.,0.}
\definecolor{qqzzff}{rgb}{0.,0.6,1.}
\definecolor{ududff}{rgb}{0.30196078431372547,0.30196078431372547,1.}
\begin{tikzpicture}[line cap=round,line join=round,>=triangle 45,x=1.0cm,y=1.0cm]
\clip(12.3,21.2) rectangle (17.5,27.);
\draw [shift={(15.39143,21.77)},line width=0.pt,color=qqwuqq,fill=qqwuqq,fill opacity=0.10000000149011612] (0,0) -- (0.:0.3783994067571905) arc (0.:71.06133972932105:0.3783994067571905) -- cycle;
\fill[line width=2.pt,color=xfqqff,fill=xfqqff,fill opacity=0.15000000596046448] (17.067343809321034,26.654187191531765) -- (15.948103878432178,23.39233845904057) -- (15.591644605021639,23.38396346006027) -- (15.845512138198584,25.430432158979073) -- (15.8903932189,25.79222606653287) -- cycle;
\fill[line width=2.pt,color=ttzzqq,fill=ttzzqq,fill opacity=0.10000000149011612] (15.324590990652041,25.37785053610977) -- (14.67280937156663,24.90050643839409) -- (15.023999567315155,23.370626658449012) -- (15.36162984418431,23.378559270770875) -- cycle;
\draw [shift={(15.39143,21.77)},line width=0.pt,color=zzttqq] (0,0) -- (0.:0.9459985168929764) arc (0.:82.92849826272506:0.9459985168929764) -- cycle;
\draw [shift={(15.39143,21.77)},line width=0.8pt,color=qqzzcc] (0,0) -- (71.0613397293205:0.693732245721516) arc (71.0613397293205:91.06133972932076:0.693732245721516) -- cycle;
\draw [line width=0.4pt,domain=12.502984045176737:17.5] plot(\x,{(--175.40641212945064--9.075635826622793*\x)/12.39217713513895});
\draw [line width=0.4pt,domain=12.502984045176737:17.5] plot(\x,{(--302.53507800813094--0.30880890928690263*\x)/13.143619442339164});
\draw [line width=0.4pt,color=qqzzff,domain=15.39143:17.5] plot(\x,{(-16.30669001917653--2.0585413370403103*\x)/0.7063484093700438});
\draw [line width=0.4pt,color=qqzzff,domain=12.3:15.39143] plot(\x,{(-34.36906368084209--2.1759814882101374*\x)/-0.04031221514744665});
\draw [line width=0.4pt,dash pattern=on 1pt off 1pt,color=yqqqqq,domain=15.39143:17.5] plot(\x,{(-46.45391719609205--3.6604321589790736*\x)/0.4540821381985847});
\draw [line width=0.4pt,dash pattern=on 1pt off 1pt,color=yqqqqq,domain=12.3:15.39143] plot(\x,{(-73.29754003887274--3.594986326668373*\x)/-0.8252438971520135});
\draw [line width=0.4pt,color=qqzzqq,domain=12.3:17.5] plot(\x,{(--35.971659500000044-0.*\x)/1.65235});
\draw [shift={(15.39143,21.77)},->,line width=0.pt,color=zzttqq] (0.:0.9459985168929764) arc (0.:82.92849826272506:0.9459985168929764);
\draw [shift={(15.39143,21.77)},line width=0.8pt,color=qqzzcc] (71.0613397293205:0.693732245721516) arc (71.0613397293205:91.06133972932076:0.693732245721516);
\draw [shift={(15.39143,21.77)},line width=0.8pt,color=qqzzcc] (71.0613397293205:0.6306656779286509) arc (71.0613397293205:91.06133972932076:0.6306656779286509);
\begin{scriptsize}
\draw [fill=black] (12.502984045176737,23.311395481138817) circle (2.5pt);
\draw[color=black] (12.603890553645327,23.797008053143877) node {$K$};
\draw[color=black] (19.19434688799973,27.864801675783657) node {$\varepsilon_{1}$};
\draw[color=black] (18.626747777863944,23.387075362490258) node {$\varepsilon_2$};
\draw [fill=ududff] (15.39143,21.77) circle (2.5pt);
\draw[color=ududff] (14.949966875539907,21.577064866835038) node {$C$};
\draw[color=qqzzff] (17.247364057636563,25.876275065868852) node {$\varepsilon_{r_1}$};
\draw[color=qqzzff] (14.998646015337833,26.79070139636459) node {$\varepsilon_{\ell_1}$};
\draw[color=yqqqqq] (16.39237826105935,26.722273582761823) node {$\varepsilon_{r_2}$};
\draw[color=yqqqqq] (13.955820278084901,26.707847252266082) node {$\varepsilon_{\ell_2}$};
\draw [fill=uuuuuu] (15.8903932189,25.79222606653287) circle (2.0pt);
\draw[color=uuuuuu] (15.618472494144276,26.02956455301129) node {$B$};
\draw [fill=uuuuuu] (17.067343809321034,26.654187191531765) circle (2.0pt);
\draw[color=uuuuuu] (17.341057400857276,26.56113673940852) node {$A$};
\draw [fill=uuuuuu] (15.324590990652041,25.37785053610977) circle (2.0pt);
\draw[color=uuuuuu] (15.05087338400849,25.739458341164113) node {$D$};
\draw [fill=uuuuuu] (14.67280937156663,24.90050643839409) circle (2.0pt);
\draw[color=uuuuuu] (14.357141138286975,25.197085858145474) node {$E$};
\draw[color=qqwuqq] (16.003178557680755,21.684278032082908) node {$\theta_0$};
\draw[color=qqzzqq] (19.54121301086049,21.98069090070937) node {$f$};
\draw [fill=uuuuuu] (15.948103878432178,23.39233845904057) circle (2.0pt);
\draw[color=uuuuuu] (16.17345829072149,23.191569002332376) node {$F$};
\draw [fill=uuuuuu] (15.591644605021639,23.38396346006027) circle (2.0pt);
\draw[color=uuuuuu] (15.63108580770285,23.191569002332376) node {$G$};
\draw [fill=uuuuuu] (15.36162984418431,23.378559270770875) circle (2.0pt);
\draw[color=uuuuuu] (15.265299714504232,23.141115748098084) node {$H$};
\draw [fill=uuuuuu] (15.023999567315155,23.370626658449012) circle (2.0pt);
\draw[color=uuuuuu] (14.798607112837031,23.191569002332376) node {$I$};
\draw[color=xfqqff] (16.17345829072149,25.247539112379766) node {$R$};
\draw[color=ttzzqq] (14.96258018909848,24.780846510712568) node {$L$};
\draw[color=zzttqq] (16.438337875451523,22.321250366790842) node {$\theta$};
\draw[color=qqzzcc] (15.784258587342895,22.321250366790842) node {$\phi$};
\end{scriptsize}
\end{tikzpicture}
    }
    \caption{The sector $S(C,\theta_0,\phi)$ has as borders the blue lines $\varepsilon_{\ell_1}$, and $\varepsilon_{r_1}$ while the sector $S(C,\theta,\phi)$ has as borders the red ones $\varepsilon_{\ell_2}$, and $\varepsilon_{r_2}$. The line $\varepsilon_{\ell_1}$ partitions $S(C,\theta,\phi)\cap S(K,\theta_K,\phi_K)$ into two quadrilaterals.}
    \label{fig:Restricted_Rotation}
\end{figure}
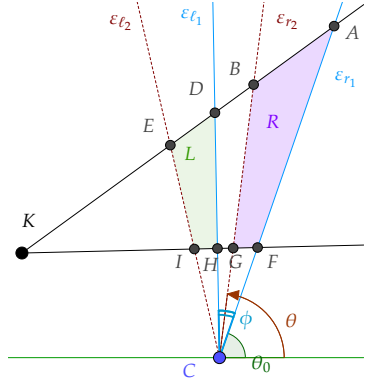

\begin{lemma}
\label{lemma:A_decomp_AL_AR}
    For $\phi \in (0,\pi)$, and $\theta \in [\theta_0,\theta_0+\phi]$, the function $A_\phi(\theta)$ is expressed as:
    \begin{align}
    \label{eq:A_L_N_A_R}
        A_\phi(\theta) = A(\theta_0,\phi) + A_\theta^L(\theta-\theta_0) -  A_\theta^R(\theta-\theta_0),
    \end{align}
    where $A(\theta_0,\phi)$ is constant. The maximum value of $A_\phi(\theta)$ can be approximated with precision $\varepsilon>1$, in time $\mathcal{O}(\log{\varepsilon})$.
\end{lemma}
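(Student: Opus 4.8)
The plan is to prove the identity~(\ref{eq:A_L_N_A_R}) by a purely geometric decomposition of the intersection, and then to combine it with Lemma~\ref{lemma:Der_Phi_Roots} to bound the number of critical points of $A_\phi$ on $[\theta_0,\theta_0+\phi]$, so that the maximum can be chased by a constant number of Newton--Raphson runs. Throughout I assume we are in the regime where $S(C,\theta,\phi)$ fully intersects $S(K,\theta_K,\phi_K)$ for all $\theta\in[\theta_0,\theta_0+\phi]$, which is exactly where equation~(\ref{eq:General_Case_Alt}) applies.

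First I would set $\omega=\theta-\theta_0\in[0,\phi]$ and fix the left semi-line $\varepsilon_{\ell_1}$ of the reference sector $S(C,\theta_0,\phi)$, whose direction is $\theta_0+\phi$. Since $\theta\le\theta_0+\phi\le\theta+\phi$ for every $\theta\in[\theta_0,\theta_0+\phi]$, the direction $\theta_0+\phi$ always lies inside the angular span $[\theta,\theta+\phi]$ of the rotating sector, so $\varepsilon_{\ell_1}$ cuts $S(C,\theta,\phi)\cap S(K,\theta_K,\phi_K)$ into the two quadrilaterals of Figure~\ref{fig:Restricted_Rotation}. The left piece is the intersection with $S(K,\theta_K,\phi_K)$ of the sub-sector whose right semi-line has direction $\theta_0+\phi$ and inner angle $\omega$; call its area $A^L_\theta(\omega)$. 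The right piece spans exactly the common angular range $[\theta,\theta_0+\phi]$, so it equals the reference area minus the sub-sector with right semi-line at $\theta_0$ and inner angle $\omega$, whose area I call $A^R_\theta(\omega)$. This gives
\[
A_\phi(\theta)=\underbrace{A(\theta_0,\phi)-A^R_\theta(\omega)}_{\text{right piece}}+\underbrace{A^L_\theta(\omega)}_{\text{left piece}},
\]
which is precisely~(\ref{eq:A_L_N_A_R}). Each sub-sector is a full intersection of a sector centred at $C$ with $S(K,\theta_K,\phi_K)$ (both boundaries meet $\partial S(K,\theta_K,\phi_K)$ because the reference and rotating sectors do), so $A^L_\theta$ and $A^R_\theta$ are instances of~(\ref{eq:General_Case_Alt}) with the inner angle as variable; crucially the constants $d_1,d_2$ there are determined by $C$ and $S(K,\theta_K,\phi_K)$ alone, hence are \emph{shared} by the two functions.

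Next I would differentiate. As $A(\theta_0,\phi)$ is constant in $\theta$, we obtain $A_\phi'(\theta)=\frac{d}{d\omega}A^L_\theta(\omega)-\frac{d}{d\omega}A^R_\theta(\omega)$, with each derivative of the form computed in Lemma~\ref{lemma:Der_Phi_Roots}. Because the two pieces share $d_1,d_2$ and their right semi-lines differ only by the phase $\phi$, the equation $A_\phi'(\theta)=0$ collapses, after using $\sin^2 b-\sin^2 a=\sin(b+a)\sin(b-a)$ and cancelling the common factor $\sin\phi\neq0$, to the single trigonometric equation
\[
d_1^2\cos^2(\theta_K{+}\phi_K)\,\frac{\sin(2\theta+\phi-2\theta_K-2\phi_K)}{\sin^2(\theta+\phi-\theta_K-\phi_K)\sin^2(\theta-\theta_K-\phi_K)}=d_2^2\cos^2\theta_K\,\frac{\sin(2\theta+\phi-2\theta_K)}{\sin^2(\theta+\phi-\theta_K)\sin^2(\theta-\theta_K)}.
\]
Clearing the non-vanishing denominators makes each side a product of sinusoids of total frequency at most $6$, i.e.\ a trigonometric polynomial in $\theta$ of degree $\le 6$; equivalently, the substitution $t=\tan(\theta/2)$ turns it into an algebraic polynomial of degree $O(1)$. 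Hence $A_\phi'$ has only a constant number of zeros on $[\theta_0,\theta_0+\phi]$.

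Finally I would isolate these zeros into $O(1)$ subintervals each containing at most one root: the explicitly computable extrema of $A^L_\theta$ and $A^R_\theta$ from Lemma~\ref{lemma:Der_Phi_Roots} give natural breakpoints, refined if needed by standard real-root isolation of the constant-degree polynomial above. On each resulting interval I would run the modified Newton--Raphson recalled before the lemma (returning a negative sentinel if it fails to converge within a constant number of iterations), which locates the unique candidate to precision $\varepsilon$ in $\mathcal{O}(\log\varepsilon)$ time, since every evaluation of $A_\phi$ and of $A_\phi'/A_\phi$ costs $\mathcal{O}(1)$ from the closed forms. Comparing the $O(1)$ interior candidates with the endpoints $\theta_0$ and $\theta_0+\phi$ yields the maximum, for a total cost $\mathcal{O}(\log\varepsilon)$. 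The main obstacle is exactly this root-counting and isolation step: one must guarantee rigorously that the difference $A^L_\theta-A^R_\theta$ introduces only boundedly many critical points and that the chosen breakpoints separate them, so that Newton--Raphson is applied on genuinely single-root pieces. The shared-$(d_1,d_2)$ collapse to a bounded-degree equation is precisely what makes that guarantee — and hence the constant number of intervals — possible.
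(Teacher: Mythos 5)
Your derivation of identity~(\ref{eq:A_L_N_A_R}) is essentially the paper's: the paper writes the rotated intersection as $\bigl((S(C,\theta_0,\phi)\cap S')\cup S(C,\theta_0+\phi,\theta-\theta_0)\bigr)\setminus S(C,\theta_0,\theta-\theta_0)$, which is exactly your cut along $\varepsilon_{\ell_1}$ in Figure~\ref{fig:Restricted_Rotation}, and both arguments identify $A^L_\theta$ and $A^R_\theta$ as instances of~(\ref{eq:General_Case_Alt}) with the inner angle as the variable. Where you genuinely diverge is in bounding the critical points of $A_\phi$ on $[\theta_0,\theta_0+\phi]$. The paper takes the four explicitly computable extrema of $A^L_\theta$ and $A^R_\theta$ from Lemma~\ref{lemma:Der_Phi_Roots}, splits the interval into at most five subintervals on which $f_L=dA^L_\theta/d\phi$ and $f_R=dA^R_\theta/d\phi$ keep constant sign (Table~\ref{tab:root_signs}), and runs the modified Newton--Raphson once per subinterval. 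You instead write out $f_L-f_R=0$ explicitly, use the fact that the two pieces share $d_1,d_2$ and differ only by the phase $\phi$, and reduce it to a trigonometric polynomial of bounded degree, hence with $O(1)$ roots that can be isolated. Your route buys a rigour the paper's argument does not quite deliver: constant sign of $f_L$ and of $f_R$ on a subinterval does not by itself force $f_L-f_R$ to have at most one zero there, and the paper's stated criterion of searching where $f_L\cdot f_R<0$ appears inverted, since a zero of $f_L-f_R$ requires $f_L=f_R$ and hence a shared sign. The price of your route is the extra real-root-isolation machinery, but for a fixed-degree polynomial this is $O(1)$ work, so the $\mathcal{O}(\log\varepsilon)$ bound is unaffected; the remainder of the two arguments coincides (same Newton--Raphson with a failure sentinel, same final comparison against the endpoints $\theta_0$ and $\theta_0+\phi$).
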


\begin{proof}
Let $S'$ be a fixed sector, a rotating one at direction $\theta_0$, $S(C,\theta_0,\phi)$ with right and left semi-lines $\varepsilon_{r_1}$, $\varepsilon_{\ell_1}$  and its rotation at $\theta \in [\theta_0,\theta_0+\phi]$, $S(C,\theta,\phi)$ with right and left semi-lines $\varepsilon_{r_2}$, $\varepsilon_{\ell_2}$, respectively.
We can express the rotated intersection at direction $\theta$ by using the initial one, see Figure~\ref{fig:Restricted_Rotation},
\begin{align*}
    S[C,\varepsilon_{r_2},\varepsilon_{\ell_2}]\cap S' = \left(~ (S[C,\varepsilon_{r_1},\varepsilon_{\ell_1}]\cap S') ~ \cup ~S[C,\varepsilon_{\ell_2},\varepsilon_{\ell_1}]~ \right) \setminus S[C,\varepsilon_{r_2},\varepsilon_{r_1}] \\
    S(C,\theta,\phi)\cap S' = \left(~ (S(C,\theta_0,\phi)\cap S') ~ \cup ~S(C,\theta_0+\phi,\theta-\theta_0)~ \right) \setminus S(C,\theta_0,\theta-\theta_0)
\end{align*}
This means that the area of the intersection is expressed 
\begin{align*}
    A_\phi(\theta) = A(\theta_0,\phi) + A_{({\theta_0}+\phi)}(\theta-\theta_0) -  A_{\theta_0}(\theta-\theta_0), 
    &&
    \theta \in [\theta_0,\theta_0+\phi]
\end{align*}

\begin{table}[ht]
    \centering
    \begin{tabular}{|c| c | c l | c l | c l | c l | c l | c |}
        \hline
             $\phi$ & \multicolumn{2}{ c }{a} & \multicolumn{2}{r}{$\phi_1$} & \multicolumn{2}{r}{$\phi_2$} & \multicolumn{2}{r}{$\phi_3$} & \multicolumn{2}{r}{$\phi_4$} & \multicolumn{2}{c|}{b}
             \\
             \hline
              $dA^L/d\phi$ &   &   & $+$ &   & $-$ &   & $+$ &   & $+$ &   & $+$ &   \\
              $A_\phi^L$ &   &   & $\nearrow$ &   & $\searrow$ &   & $\nearrow$ &   & $\nearrow$ &   & $\nearrow$ &  \\
              $dA^R/d\phi$ &   &   & $-$ &   & $-$ &   & $-$ &   & $+$ &   & $-$ &   \\
              $A_\phi^R$ &   &   & $\searrow$ &   & $\searrow$ &   & $\searrow$ &   & $\nearrow$ &   &  $\searrow$ &
              \\
            \hline
    \end{tabular}
    \caption{Assume that $\phi_1$, and $\phi_2$ are the roots of $dA^L/d\phi=0$ and $\phi_3$, and $\phi_4$ are the roots of $dA^R/d\phi=0$. The sign of the respective derivative changes only if $\phi$ crosses one of its roots. Worst-case scenario we need to search for a local extreme point in every interval. }
    \label{tab:root_signs}
\end{table}

To find the local extreme points of the function $A_\phi(\theta)$ we use the above equation and the fact that the function $A_\theta(\phi)$ has two local extreme points (Lemma~\ref{lemma:Der_Phi_Roots}). 

Let $f$ be a continuous function in an interval $[a,b] \subseteq \mathcal{R}$, and $x_0 \in [a,b]$ be the only root of $f$ in $[a,b]$. From the intermediate value Theorem~\cite{spivak2019calculus} it follows that the sign of $f$ does not change sign inside intervals $[a.x_0]$ and $[x_0,b]$. 
Let $\theta_{1},\theta_{2}$ be the roots of $d A^L_\theta/d \phi=0$, and $\theta_{3},\theta_{4}$ be the roots of $d A^R_\theta/d \phi=0$. 
As mentioned above from the intermediate value Theorem~\cite{spivak2019calculus},
the values $\theta_1,\ldots,\theta_4$ partition the domain $[\theta_0,\theta_0+\phi]$ into at most five intervals where the functions $f_L = d A^L_\theta/d \phi$, and $f_R=d A^R_\theta/d \phi$ will be either positive or negative. An example is shown in Table~\ref{tab:root_signs} of how the monotonicity of $A^L_\phi$, and $A^R_\phi$ should remain intact inside the intervals $[a,\phi_1],[\phi_1,\phi_2],[\phi_2,b]$, and $[\phi_2,\phi_3],[\phi_3,\phi_4],[\phi_4,b]$ respectively.

Without loss of generality let's assume that $\theta_1<\ldots<\theta_4$, which partitions the domain $[\theta_0,\theta_0+\phi]$ in at most five intervals $[\theta_0,\theta_1]\cup [\theta_1,\theta_2] \cup [\theta_2,\theta_3] \cup [\theta_3,\theta_4] \cup [\theta_4,\theta_0+\phi]$.
By running a modified Newton Raphson in the intervals where $f_L\cdot f_R<0$ which returns a negative number if it does not converge after a constant number of iterations,  we can find all the values $r_1,\ldots,r_k$, $k\leq5$ of possible local maximum points of equation~(\ref{eq:A_L_N_A_R}). 
The local maximum of a function inside a closed given interval is either at a root of the derivative of said function or it is at the boundaries of the interval. 
Also by checking the edges of the interval $\theta_0$,$\theta_0+\phi$ we obtain the maximum out of the set of values $\{A_\phi(r_1),\ldots,A_\phi(r_k),A_\phi(\theta_0),A_\phi(\theta_0+\phi) \}$
The running time is $\mathcal{O}(\log{\varepsilon})$ because we run at most five times the Newton Raphson method and to do so, we can evaluate the derivative of $f_L+f_R$ in constant time by plugging the analytical formula. Furthermore, all the rest of the evaluations to check can also be done in constant time.
\end{proof}
Next, we show how to find a maximal intersection for unrestricted rotation with a direction $\theta \in [a,b]\supseteq [\theta_0,\theta_0+\phi]$ 

\begin{theorem}
\label{Theor:Inter_Two_Sectors}
Given an interval $[a,b]\subseteq[0,\pi]$ with $z=|b-a|$, and two sectors $S'$ and $S(C,\theta,\phi)$;
the direction of the maximum area of intersection $S(C,\theta,\phi) \cap S'$ where $\theta \in [a,b]$ can be $\varepsilon$-approximated in time $\mathcal{O}((z~\log{\varepsilon})/\phi)$.
\end{theorem}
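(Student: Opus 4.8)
The plan is to reduce the general interval $[a,b]$ to the special situation already solved in Lemma~\ref{lemma:A_decomp_AL_AR}, which optimises over a window of width $\phi$, by covering $[a,b]$ with $\mathcal{O}(z/\phi)$ subintervals each of length at most $\phi$ and running that lemma on each one. Since the area is continuous and attains its maximum on the compact set $[a,b]$, and the maximum over $[a,b]$ equals the largest of the maxima over the pieces of any finite partition, returning the best of the per-piece answers yields a global $\varepsilon$-approximation on $[a,b]$.

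Concretely, set $m=\lceil z/\phi\rceil$ and introduce the equally spaced breakpoints $\theta^{(i)}=a+i\,(z/m)$, $i=0,\dots,m$, so that $a=\theta^{(0)}<\cdots<\theta^{(m)}=b$ and every subinterval $[\theta^{(i)},\theta^{(i+1)}]$ has length $z/m\le\phi$. The width-$\phi$ bound is essential: the set identity underlying Lemma~\ref{lemma:A_decomp_AL_AR} only holds while $\theta-\theta_0\le\phi$ (otherwise the added wedge $S(C,\theta_0+\phi,\theta-\theta_0)$ and the removed wedge $S(C,\theta_0,\theta-\theta_0)$ overlap), which is exactly why $[a,b]$ must be cut into $\mathcal{O}(z/\phi)$ pieces. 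On the $i$-th piece I would instantiate the lemma with $\theta_0=\theta^{(i)}$: since $[\theta^{(i)},\theta^{(i+1)}]\subseteq[\theta^{(i)},\theta^{(i)}+\phi]$, the decomposition $A_\phi(\theta)=A(\theta^{(i)},\phi)+A^L_\theta(\theta-\theta^{(i)})-A^R_\theta(\theta-\theta^{(i)})$ is valid throughout, and the lemma's routine — at most four derivative roots splitting the piece into at most five monotone stretches, one modified Newton--Raphson call per stretch, plus the two endpoint evaluations — returns the maximiser on that piece in $\mathcal{O}(\log\varepsilon)$ time.

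The step I expect to require the most care is guaranteeing that the closed form of Theorem~\ref{theor:Analytical_Area} is the correct expression throughout each piece, i.e.\ that the rotating sector \emph{fully intersects} $S'$ and the bounded intersection stays a quadrilateral. As $\theta$ sweeps $[a,b]$ the combinatorial type of $S(C,\theta,\phi)\cap S'$ changes only at a constant number of critical directions — those where a semi-line of the rotating sector passes through the apex of $S'$ or becomes parallel to a semi-line of $S'$ — which are precisely the boundaries between \emph{contains}, \emph{fully intersects}, \emph{partially intersects}, and \emph{does not intersect}. I would precompute these $\mathcal{O}(1)$ critical angles and insert them as extra breakpoints; this increases the number of subintervals only by an additive constant, so on every resulting piece the intersection keeps a fixed combinatorial type and the appropriate area expression (equation~(\ref{eq:General_Case_Alt}) in the fully-intersecting regime, or a trivially handled constant/degenerate expression otherwise) applies uniformly.

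Finally, collecting the at most $m+\mathcal{O}(1)=\mathcal{O}(z/\phi)$ candidate maximisers and returning the one of largest area gives the $\varepsilon$-approximate optimal direction over $[a,b]$. The cost is $\mathcal{O}(z/\phi)$ invocations of the $\mathcal{O}(\log\varepsilon)$ procedure of Lemma~\ref{lemma:A_decomp_AL_AR}, together with $\mathcal{O}(z/\phi)$ constant-time comparisons, for a total of $\mathcal{O}\!\big((z\,\log\varepsilon)/\phi\big)$, as claimed.
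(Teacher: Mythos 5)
Your proposal matches the paper's proof: both partition $[a,b]$ into $\lceil z/\phi\rceil$ subintervals of length at most $\phi$, invoke Lemma~\ref{lemma:A_decomp_AL_AR} on each piece (at most five Newton--Raphson calls per piece, each costing $\mathcal{O}(\log\varepsilon)$), and return the largest of the per-piece maxima, yielding $\mathcal{O}((z\log\varepsilon)/\phi)$ overall. Your additional step of inserting the $\mathcal{O}(1)$ critical directions where the combinatorial type of the intersection changes is a sensible extra precaution, but it does not alter the approach; the paper implicitly assumes the fully-intersecting regime throughout, as stated in the problem preceding the theorem.
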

\begin{proof}
We can partition the interval $[a,b]$ into $k>1$ intervals of length $\phi$, that is $[a,b] = \left[a,a+\phi\right]\cup\ldots\cup \left[a+(k-1)\phi,b \right]$.
For each interval $[a+i\phi,a+(i+1)\phi]$, $i\in \{0,\ldots,k-1\}$ we can find all the local extreme points using Lemma~\ref{lemma:A_decomp_AL_AR}, we run Newton Raphson up to five times and then we select the maximum value $M_k$. Then we select the maximum $\max{(M_0,\ldots,M_{k-1})}$. 
If the length of the given interval $[a,b]$ is $z$  then we would have $\lceil z/\phi \rceil$ intervals, and in each interval, we run at most 5 times Newton Raphson with $\varepsilon$ accuracy. Given the evaluation of the function of the area and its derivative and constant time, this means that in a worst-case scenario, we would have $\mathcal{O}(z~\log{\varepsilon}/\phi)$.
\end{proof}


\subsection{ \texorpdfstring{$\mathcal{LMR}$}{LMR} Intersection and the Global Objective Function}
\label{sec:Equiv_Classes}

In this section, we decompose the area of intersection of a polygon $\mathcal{P}$ and a sector $S[C,\varepsilon_r,\varepsilon_\ell]$ as a summation of multiple areas of intersections of two sectors. 
Notice that if $\mathcal{P} \cap S[C,\varepsilon_r,\varepsilon_\ell]$ does not contain any vertices of $\mathcal{P}$, then this case is identical to the intersection of two sectors.

In the case that $\mathcal{P} \cap S[C,\varepsilon_r,\varepsilon_\ell]$ contains one vertex $P_0 \in \mathcal{P}$ or it contains two vertices colinear with $C$, then we can express the area as $Area(\mathcal{P} \cap S[C,\varepsilon_r,\varepsilon_\ell])$ as $Area(\mathcal{P} \cap S[C,\varepsilon_r,CP_0])+Area(\mathcal{P} \cap S[C,CP_0,\varepsilon_\ell])$, see Figure~\ref{fig:LMR_Cases}(a). We will refer to $Area(\mathcal{P} \cap S[C,\varepsilon_r,CP_0])$, $Area(\mathcal{P} \cap S[C,CP_0,\varepsilon_\ell])$ as right and left respectively. Now we can use equation~(\ref{eq:General_Case_Alt}), so $Area(\mathcal{P} \cap S[C,\varepsilon_r,\varepsilon_\ell])=A^L(\theta_1,\phi_1)+A^R(\theta_2,\phi_2)$.

If $\mathcal{P} \cap S[C,\varepsilon_r,\varepsilon_\ell]$ contains two non colinear vertices $P_1,P_2 \in \mathcal{P}$, then using the lines $CP_1$ and $CP_2$ we can express the area of intersection of $Area(\mathcal{P} \cap S[C,\varepsilon_r,\varepsilon_\ell])$ as  $Area(\mathcal{P} \cap S[C,\varepsilon_r,CP_2])+Area(\mathcal{P} \cap S[C,CP_1,\varepsilon_\ell]) + Area(\mathcal{P} \cap S[C,CP_1,CP_2] )$ (see Figure~\ref{fig:LMR_Cases}(b)). The area of $S[C, CP_1, CP_2]$ remains constant for certain rotations and will call it  the middle area. Similarly using equation~(\ref{eq:General_Case_Alt}), the area of intersection is $Area(\mathcal{P} \cap S[C,\varepsilon_r,\varepsilon_\ell])=A^L(\theta_1,\phi_1)+A^R(\theta_2,\phi_2)+A^M$ where $A^M$ is a constant unless the number of the vertices that $\mathcal{P} \cap S[C,\varepsilon_r,\varepsilon_\ell]$ contains, changes. The same argument can be said in the case that $\mathcal{P} \cap S[C,\varepsilon_r,\varepsilon_\ell]$ contains $P_1,\ldots,P_k$ vertices $k>2$. The only difference is that the middle area will be $\sum_{i=1}^{k-1}{Area(S[C,P_i,P_{i+1}])}$.
This is a different decomposition from the one we presented in the previous section, which enables the identification of the extreme points.

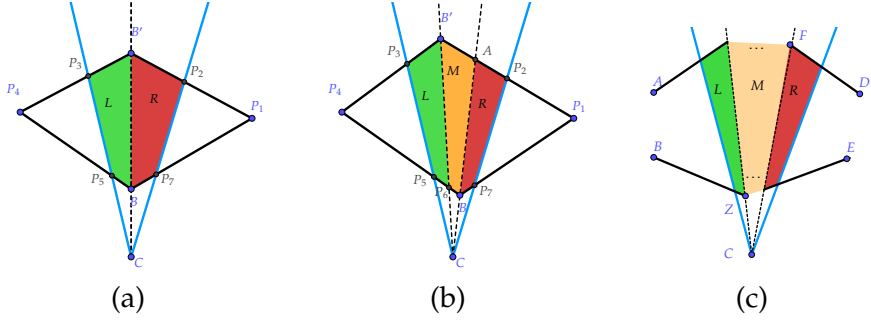
\begin{figure}[ht]
    \centering
    \begin{tabular}{c c c}
         \scalebox{0.45}{
         \definecolor{qqccqq}{rgb}{0.,0.8,0.}
\definecolor{ccqqqq}{rgb}{0.8,0.,0.}
\definecolor{uuuuuu}{rgb}{0.26666666666666666,0.26666666666666666,0.26666666666666666}
\definecolor{qqzzff}{rgb}{0.,0.6,1.}
\definecolor{ududff}{rgb}{0.30196078431372547,0.30196078431372547,1.}
\begin{tikzpicture}[line cap=round,line join=round,>=triangle 45,x=1.0cm,y=1.0cm]
\clip(2.3,1.59) rectangle (10.,9.5);
\fill[line width=2.pt,color=ccqqqq,fill=ccqqqq,fill opacity=0.75] (7.552008002205401,7.158268508597804) -- (6.,8.) -- (6.,4.) -- (6.731495463614648,4.431205257178962) -- cycle;
\fill[line width=2.pt,color=qqccqq,fill=qqccqq,fill opacity=0.699999988079071] (6.,8.) -- (4.74472582855669,7.329028416415277) -- (5.436147184634907,4.393730185893523) -- (6.,4.) -- cycle;
\draw [line width=2.pt,color=qqzzff,domain=6.0:10.0] plot(\x,{(-30.411428277928287--5.633577322436402*\x)/1.6950178283450619});
\draw [line width=2.pt,color=qqzzff,domain=2.3:6.0] plot(\x,{(-37.05570026152593--5.726329989586373*\x)/-1.348860162003846});
\draw [line width=1.2pt,dash pattern=on 3pt off 3pt] (6.,2.) -- (6.,9.5);
\draw [line width=1.2pt,dash pattern=on 3pt off 3pt] (6.,2.) -- (6.,9.5);
\draw [line width=2.pt] (2.755372967829234,6.265675668755097)-- (6.,8.);
\draw [line width=2.pt] (6.,8.)-- (9.534085607183261,6.083288781597591);
\draw [line width=2.pt] (9.534085607183261,6.083288781597591)-- (6.,4.);
\draw [line width=2.pt] (6.,4.)-- (2.755372967829234,6.265675668755097);
\begin{normalsize}
\draw [fill=ududff] (6.,2.) circle (2.5pt);
\draw[color=ududff] (6.226153669758307,1.8224351832384242) node {$C$};
\draw [fill=ududff] (6.,4.) circle (2.5pt);
\draw[color=ududff] (6.077754362065961,3.6362044994782092) node {$B$};
\draw [fill=ududff] (6.,8.) circle (2.5pt);
\draw[color=ududff] (6.176687233860858,8.632314525120526) node {$B'$};
\draw [fill=ududff] (2.755372967829234,6.265675668755097) circle (2.5pt);
\draw[color=ududff] (2.5573930073641966,6.925722486658546) node {$P_{4}$};
\draw [fill=ududff] (9.534085607183261,6.083288781597591) circle (2.5pt);
\draw[color=ududff] (9.697048588562623,6.431058127684059) node {$P_1$};
\draw [fill=uuuuuu] (7.552008002205401,7.158268508597804) circle (2.0pt);
\draw[color=uuuuuu] (7.932745708220286,7.453364469564665) node {$P_2$};
\draw [fill=uuuuuu] (6.731495463614648,4.431205257178962) circle (2.0pt);
\draw[color=uuuuuu] (7.075327485997843,4.271023760162134) node {$P_{7}$};
\draw [fill=uuuuuu] (4.74472582855669,7.329028416415277) circle (2.0pt);
\draw[color=uuuuuu] (4.371162323603981,7.684207837086092) node {$P_3$};
\draw [fill=uuuuuu] (5.436147184634907,4.393730185893523) circle (2.0pt);
\draw[color=uuuuuu] (5.030714802236631,4.28751257212795) node {$P_5$};
\draw[color=black] (6.671351592835345,6.653657089222578) node {$R$};
\draw[color=black] (5.33575782360423,6.5547242174276805) node {$L$};
\end{normalsize}
\end{tikzpicture}
         }
         &
         \scalebox{0.45}{
         \definecolor{qqccqq}{rgb}{0.,0.8,0.}
\definecolor{ffzzqq}{rgb}{1.,0.6,0.}
\definecolor{ccqqqq}{rgb}{0.8,0.,0.}
\definecolor{uuuuuu}{rgb}{0.26666666666666666,0.26666666666666666,0.26666666666666666}
\definecolor{qqzzff}{rgb}{0.,0.6,1.}
\definecolor{ududff}{rgb}{0.30196078431372547,0.30196078431372547,1.}
\begin{tikzpicture}[line cap=round,line join=round,>=triangle 45,x=1.0cm,y=1.0cm]
\clip(2.3,1.59) rectangle (10.,9.5);
\fill[line width=2.pt,color=ccqqqq,fill=ccqqqq,fill opacity=0.75] (7.581908305390389,7.257645439707362) -- (6.65703865500993,7.814012338764357) -- (6.205524916558751,3.818651599391875) -- (6.635136015598504,4.1109440823001995) -- cycle;
\fill[line width=2.pt,color=ffzzqq,fill=ffzzqq,fill opacity=0.75] (6.65703865500993,7.814012338764357) -- (5.6431653478231025,8.423920500118934) -- (5.886405446639656,4.044987434885154) -- (6.205524916558751,3.818651599391875) -- cycle;
\fill[line width=2.pt,color=qqccqq,fill=qqccqq,fill opacity=0.699999988079071] (5.6431653478231025,8.423920500118934) -- (4.659916294585864,7.689071207699526) -- (5.444460558602645,4.3584373334488475) -- (5.886405446639656,4.044987434885154) -- cycle;
\draw [line width=2.pt,color=qqzzff,domain=6.0:10.0] plot(\x,{(-30.411428277928287--5.633577322436402*\x)/1.6950178283450619});
\draw [line width=2.pt,color=qqzzff,domain=2.3:6.0] plot(\x,{(-37.05570026152593--5.726329989586373*\x)/-1.348860162003846});
\draw [line width=1.2pt,dash pattern=on 3pt off 3pt,domain=2.3:6.0] plot(\x,{(-39.2571923050674--6.423920500118934*\x)/-0.35683465217689747});
\draw [line width=1.2pt,dash pattern=on 3pt off 3pt,domain=6.0:10.0] plot(\x,{(-10.500859763233748--1.8186515993918748*\x)/0.20552491655875116});
\draw [line width=2.pt] (2.755372967829234,6.265675668755097)-- (5.6431653478231025,8.423920500118934);
\draw [line width=2.pt] (5.6431653478231025,8.423920500118934)-- (9.534085607183261,6.083288781597591);
\draw [line width=2.pt] (9.534085607183261,6.083288781597591)-- (6.205524916558751,3.818651599391875);
\draw [line width=2.pt] (6.205524916558751,3.818651599391875)-- (2.755372967829234,6.265675668755097);
\begin{normalsize}
\draw [fill=ududff] (6.,2.) circle (2.5pt);
\draw[color=ududff] (6.226153669758307,1.8224351832384242) node {$C$};
\draw [fill=ududff] (6.205524916558751,3.818651599391875) circle (2.5pt);
\draw[color=ududff] (6.292108917621571,3.45482756785423) node {$B$};
\draw [fill=ududff] (5.6431653478231025,8.423920500118934) circle (2.5pt);
\draw[color=ududff] (5.830422182578717,9.061023636231747) node {$B'$};
\draw [fill=ududff] (2.755372967829234,6.265675668755097) circle (2.5pt);
\draw[color=ududff] (2.5573930073641966,6.925722486658546) node {$P_{4}$};
\draw [fill=ududff] (9.534085607183261,6.083288781597591) circle (2.5pt);
\draw[color=ududff] (9.697048588562623,6.431058127684059) node {$P_1$};
\draw [fill=uuuuuu] (7.581908305390389,7.257645439707362) circle (2.0pt);
\draw[color=uuuuuu] (7.965723332151918,7.552297341359562) node {$P_2$};
\draw [fill=uuuuuu] (6.635136015598504,4.1109440823001995) circle (2.0pt);
\draw[color=uuuuuu] (6.9763946142029445,3.9412475208458093) node {$P_{7}$};
\draw [fill=uuuuuu] (4.659916294585864,7.689071207699526) circle (2.0pt);
\draw[color=uuuuuu] (4.2887182637749,8.04696170033405) node {$P_3$};
\draw [fill=uuuuuu] (5.444460558602645,4.3584373334488475) circle (2.0pt);
\draw[color=uuuuuu] (5.047203614202447,4.2545349481963175) node {$P_5$};
\draw [fill=uuuuuu] (6.65703865500993,7.814012338764357) circle (2.0pt);
\draw[color=uuuuuu] (7.017616644117486,8.08818373024859) node {$A$};
\draw [fill=uuuuuu] (5.886405446639656,4.044987434885154) circle (2.0pt);
\draw[color=uuuuuu] (5.673778468903463,3.908269896914177) node {$P_6$};
\draw[color=black] (6.869217336425139,6.505257781530232) node {$R$};
\draw[color=black] (6.011799114202696,7.478097687513389) node {$M$};
\draw[color=black] (5.203847327877701,6.752589961017475) node {$L$};
\end{normalsize}
\end{tikzpicture}
}
         &
         \scalebox{0.43}{
\definecolor{ffzzqq}{rgb}{1.,0.6,0.}
\definecolor{ccqqqq}{rgb}{0.8,0.,0.}
\definecolor{qqccqq}{rgb}{0.,0.8,0.}
\definecolor{qqzzff}{rgb}{0.,0.6,1.}
\definecolor{ududff}{rgb}{0.30196078431372547,0.30196078431372547,1.}
\begin{tikzpicture}[line cap=round,line join=round,>=triangle 45,x=1.0cm,y=1.0cm]
\clip(2.7,2.5) rectangle (9.8,10.);
\fill[line width=2.pt,color=qqccqq,fill=qqccqq,fill opacity=0.75] (5.258,9.561818181818186) -- (4.429561323598802,8.988801978397113) -- (5.490076665237584,4.944571234741959) -- (5.794711738953986,4.815450463403734) -- cycle;
\fill[line width=2.pt,color=ccqqqq,fill=ccqqqq,fill opacity=0.75] (6.38,5.007818181818187) -- (6.801245926127107,5.165327528109192) -- (8.14058486110408,8.784824827012187) -- (7.172,9.473818181818187) -- cycle;
\fill[line width=2.pt,color=ffzzqq,fill=ffzzqq,fill opacity=0.4000000059604645] (5.794711738953986,4.815450463403734) -- (6.38,5.007818181818187) -- (7.172,9.473818181818187) -- (5.258,9.561818181818186) -- cycle;
\draw [line width=2.pt,color=qqzzff,domain=6.0:9.8] plot(\x,{(-15.196909090909116--3.1078181818181863*\x)/1.15});
\draw [line width=2.pt,color=qqzzff,domain=2.7:6.0] plot(\x,{(-21.753970392177173--3.205388518718568*\x)/-0.8405464266219216});
\draw [line width=2.pt] (3.,8.)-- (5.258,9.561818181818186);
\draw [line width=2.pt] (3.,6.)-- (5.808,4.809818181818186);
\draw [line width=2.pt] (8.998,5.975818181818187)-- (8.91,5.953818181818186);
\draw [line width=2.pt] (8.91,5.953818181818186)-- (6.38,5.007818181818187);
\draw [line width=2.pt] (7.172,9.473818181818187)-- (9.306,7.955818181818186);
\draw [line width=1.2pt,dash pattern=on 2pt off 2pt,domain=6.0:9.8] plot(\x,{(-10.906909090909117--2.0078181818181866*\x)/0.38});
\draw [line width=1.2pt,dash pattern=on 2pt off 2pt,domain=2.7:6.0] plot(\x,{(-41.596909090909115--6.561818181818186*\x)/-0.742});
\begin{large}
\draw [fill=ududff] (6.,3.) circle (2.5pt);
\draw[color=ududff] (5.293385552812234,3.0523359275163826) node {$C$};
\draw [fill=ududff] (3.,8.) circle (2.5pt);
\draw[color=ududff] (3.1386258183450044,8.351880139314163) node {$A$};
\draw [fill=ududff] (3.,6.) circle (2.5pt);
\draw[color=ududff] (3.1386258183450044,6.352418403727454) node {$B$};
\draw [fill=ududff] (5.808,4.809818181818186) circle (2.5pt);
\draw[color=ududff] (5.302822545535621,4.391781167860876) node {$Z$};
\draw [fill=ududff] (7.172,9.473818181818187) circle (2.5pt);
\draw[color=ududff] (7.584031036299919,9.749562129238853) node {$F$};
\draw [fill=ududff] (9.306,7.955818181818186) circle (2.5pt);
\draw[color=ududff] (9.447607022866173,8.313055639594033) node {$D$};
\draw [fill=ududff] (8.91,5.953818181818186) circle (2.5pt);
\draw[color=ududff] (9.039949775804804,6.313593904007324) node {$E$};
\draw[color=black] (4.972,8.098818181818187) node {$L$};
\draw[color=black] (7.282,8.076818181818187) node {$R$};
\draw[color=black] (6.182,8.230818181818186) node {$M$};
\draw[color=black] (6.182,9.330818181818186) node {$\cdots$};
\draw[color=black] (6.082,5.330818181818186) node {$\cdots$};
\end{large}
\end{tikzpicture}
         }
         \\
         (a) & (b) & (c) 
    \end{tabular}
    \caption{The case of intersection $\mathcal{P}\cap S[C,\varepsilon_r,\varepsilon_\ell]$ has two colinear vertices with $C$, has two non-colinear vertices and has more than two vertices.}
    \label{fig:LMR_Cases}
\end{figure}

\noindent
In the 
next
section, we will partition
the domain of directions $\mathcal{D}_G\subseteq[0,2\pi]$ into intervals $[\theta_i,\theta_{i+1}]$, where only the left, and right area change (see Remark~\ref{remark:M_equal}) for every rotation $\theta \in [\theta_i,\theta_{i+1}]$.
This means that the area of the intersection $S(C,\theta,\phi)\cap \mathcal{P}$ in each interval is  
\begin{align*}
    Area(S(C,\theta,\phi)\cap \mathcal{P})= f_i(\theta) &= Area(L) +Area(M) + Area(R) \\
    &= A^L_\phi(\theta) + A^R_\phi(\theta) + Area(M) && \theta \in [\theta_i,\theta_{i+1}]
\end{align*}
Where $Area(M)$ is a constant, and it can be calculated either using the shoelace formula~\cite{Shoelace} or as the sum of its quadrilateral sections.
Now we can define properly the objective function of the area $f: \mathcal{D}_G \rightarrow \mathbb{R}$ as
\begin{align}
{\footnotesize
\label{eq:Func_whole}
    f(\theta) = 
    \begin{cases}
        f_1(\theta) = A^{L_1}_\phi(\theta) + A^{R_1}_\phi(\theta) + Area(M_1) \hspace{2cm} \theta \in [\theta_1,\theta_{2}]
        \\
        \hspace{2cm} \vdots
        \\
        f_i(\theta) = A^{L_i}_\phi(\theta) + A^{R_i}_\phi(\theta) + Area(M_i) \hspace{2cm}  \theta \in [\theta_i,\theta_{i+1}]
        \\
        \hspace{2cm} \vdots
        \\
        f_{Q-1}(\theta) = A^{L_{Q-1}}_\phi(\theta) + A^{R_{Q-1}}_\phi(\theta) + Area(M_{Q-1}) \hspace{0.5cm}  \theta \in [\theta_{Q-1},\theta_{Q}]
    \end{cases}
    }
\end{align}

\section{Partitioning P into finite LMR cells}

In every optimisation problem, the optimal value is obtained by the minimisation or maximisation of a given objective function~\cite{cont_opt} and in this problem, this objective function is the area of the intersection. One of the challenges of this problem is to express the area of the intersection as a function of rotations in a systematic way because the intersection can be in many shapes (see Figure~\ref{fig:Problem1}). In this section, we present a partition of the polygon into a sequence of quadrilaterals. Using them as a point of reference not only can we express the area of intersection in a systematic way but also we prove that there are finite independent sub-problems.
%
Let us now explain how to decompose an infinite set of intersections into a finite number of independent sub-problems.  
First, we partition the polygon into quadrilateral sections by a set of lines from a point $C$ to polygon vertices. Then every intersection $S(C,\theta,\phi)\cap \mathcal{P}$ can be written as a union of three unique convex sets $L$,$M$, and $R$ (Left, Middle, Right), where $L$ and $R$ are subsets of the polygon's sections. By defining an equivalence relation on the $L$, $M$, $R$ sets, we are partitioning all intersections  $S(C,\theta,\phi)\cap \mathcal{P}$ into finite families.
So, 
we can obtain the maximal intersection by selecting the maximum overall maximums in the equivalent classes.
%
More formally:

\begin{definition}
    A \textbf{partition} of a set $S$ is a collection of nonempty subsets of $S$ such that every element of $S$ is in exactly one of the subsets. The subsets are the \textbf{cells} of the partition.
\end{definition}

\begin{notation}[Counterclockwise Vertices' Angular Ordering]
    Let $\mathcal{P}=(P_1,\dots,P_n)$ be a polygon with $n$ vertices, a point $C$ outside of $\mathcal{P}$, and the semi-lines of $CP_i$ that extend from $C$. We will denote with $\{P_{k_i}\}_{i=1}^n$
    the strictly increasing sequence of the vertices such that $P_{k_i}<P_{k_{i+1}}$ if $\hat{CP_{i}} < \hat{CP_{i+1}}$ (see Figure~\ref{fig:LMP_partition}(a)).
\end{notation}

\begin{definition}
Let $\mathcal{P}=(P_1,\dots,P_n)$ be a polygon with $n$ vertices, a point $C$ outside of $\mathcal{P}$, and the semi-lines of $CP_i$ that extend from $C$. We will call the sequence $\{S_i\}_{i=1}^{m}$, $m\leq n-1$, \textbf{vertex partitioning of $\mathcal{P}$ from $C$} and a set $S_i$ a \textbf{section} of $\mathcal{P}$ 
where:
    $S_i = S[C,CP_{k_{i+1}},CP_{k_i}]\cap \mathcal{P}.$
\end{definition}

Notice that the sequence $S_i$ partitions the polygon $\mathcal{P}$ using the angular position of the vertices of $\mathcal{P}$ from point $C$. 
Sorting the semi-lines $CP_i$ in a strictly increasing sequence means that we exclude semi-lines where $CP_i$ coincides with $CP_{i+1}$, which means the number of sections is at most $n-1$.

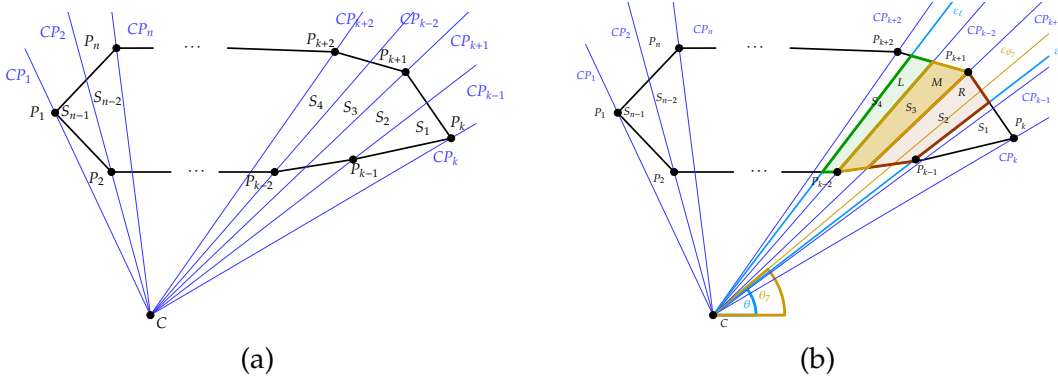
\begin{figure}[ht]
    \centering
    \begin{tabular}{c c}
         \scalebox{0.53}{
    \definecolor{ttttff}{rgb}{0.2,0.2,1.}
    \begin{tikzpicture}[line cap=round,line join=round,>=triangle 45,x=0.5cm,y=0.5cm]
    \clip(0,-5.) rectangle (25.5,11);
    \draw [line width=1.2pt] (5.644745971611137,9.197094032978754)-- (2.597750273663621,5.9634325692927925);
    \draw [line width=1.2pt] (2.597750273663621,5.9634325692927925)-- (5.39724,3.);
    \draw [line width=1.2pt] (5.39724,3.)-- (8.34047,3.);
    \draw [line width=1.2pt] (5.644745971611137,9.197094032978754)-- (7.46068,9.2);
    \draw [line width=1.2pt] (16.48973396390154,9.00940866879776)-- (20.,8.);
    \draw [line width=1.2pt] (20.,8.)-- (22.263188754469994,4.688759191468466);
    \draw [line width=1.2pt] (22.263188754469994,4.688759191468466)-- (17.393677637476703,3.6256969053624473);
    \draw [line width=1.2pt] (17.393677637476703,3.6256969053624473)-- (13.5,3.);
    \draw [line width=0.4pt,color=ttttff,domain=7.330195156352894:23.5] plot(\x,{(-78.1817708587774--7.162310028868053*\x)/6.169804843647106});
    \draw [line width=0.4pt,color=ttttff,domain=7.330195156352894:23.5] plot(\x,{(-98.97494476345875--7.7880069342305*\x)/10.06348248112381});
    \draw [line width=0.4pt,color=ttttff,domain=7.330195156352894:23.5] plot(\x,{(-127.03581374192015--8.85106922033652*\x)/14.9329935981171});
    \draw [line width=0.4pt,color=ttttff,domain=7.330195156352894:23.5] plot(\x,{(-141.8877618281842--12.162310028868053*\x)/12.669804843647107});
    \draw [line width=0.4pt,color=ttttff,domain=7.330195156352894:23.5] plot(\x,{(-134.67610883693862--13.171718697665813*\x)/9.159538807548646});
    \draw [line width=0.4pt,color=ttttff,domain=1.1:7.330195156352894] plot(\x,{(-90.91167690111254--13.359404061846806*\x)/-1.6854491847417572});
    \draw [line width=0.4pt,color=ttttff,domain=1.1:7.330195156352894] plot(\x,{(-54.52576655123194--10.125742598160844*\x)/-4.732444882689273});
    \draw [line width=0.4pt,color=ttttff,domain=1.1:7.330195156352894] plot(\x,{(-44.45557164926649--7.162310028868053*\x)/-1.932955156352894});
    \draw [line width=1.2pt] (16.48973396390154,9.00940866879776)-- (11.16442,9.2);
    \draw [line width=1.2pt] (13.5,3.)-- (10.7465,3.);
    \begin{normalsize}
    \draw [fill=black] (5.644745971611137,9.197094032978754) circle (2.5pt);
    \draw[color=black] (4.449504051706032,9.497290633402772) node {$P_n$};
    \draw [fill=black] (2.597750273663621,5.9634325692927925) circle (2.5pt);
    \draw[color=black] (1.731304534625714,5.942722034137706) node {$P_1$};
    \draw [fill=black] (5.39724,3.) circle (2.5pt);
    \draw[color=black] (4.658596322250671,2.6669431289326457) node {$P_2$};
    \draw [fill=black] (22.263188754469994,4.688759191468466) circle (2.5pt);
    \draw[color=black] (22.605682877332253,5.419991357775197) node {$P_{k}$};
    \draw [fill=black] (17.393677637476703,3.6256969053624473) circle (2.5pt);
    \draw[color=black] (17.970804213592736,3.1) node {$P_{k-1}$};
    \draw [fill=black] (20.,8.) circle (2.5pt);
    \draw[color=black] (19.4,8.8) node {$P_{k+1}$};
    \draw [fill=black] (13.5,3.) circle (2.5pt);
    \draw[color=black] (12.743497449976743,2.5) node {$P_{k-2}$};
    \draw [fill=black] (16.48973396390154,9.00940866879776) circle (2.5pt);
    \draw[color=black] (15.79656065480554,9.60718733280282) node {$P_{k+2}$};
    \draw [fill=black] (7.330195156352894,-4.162310028868053) circle (2.5pt);
    \draw[color=black] (7.882102159813868,-4.564164560748736) node {$C$};
    \draw[color=ttttff] (20.7,10.6) node {$CP_{k-2}$};
    \draw[color=ttttff] (24,7) node {$CP_{k-1}$};
    \draw[color=ttttff] (22,3.7) node {$CP_{k}$};
    \draw[color=ttttff] (23.2,9.5) node {$CP_{k+1}$};
    \draw[color=ttttff] (17.5,10.6) node {$CP_{k+2}$};
    \draw[color=ttttff] (6.8540651629693885,10.132094417175145) node {$CP_n$};
    \draw[color=ttttff] (0.8,7.806337975962648) node {$CP_1$};
    \draw[color=ttttff] (2.5,9.969968893782633) node {$CP_2$};
    
    \node[text width=0.5cm] at (21,5.2) 
    {$S_1$};
    \node[text width=0.5cm] at (19,5.7) 
    {$S_2$};
    \node[text width=0.5cm] at (17.4,6.2) 
    {$S_3$};
    \node[text width=0.5cm] at (15.7,6.5) 
    {$S_4$};
    \node[text width=0.7cm] at (5.2,6.7) 
    {$S_{n-2}$};
    \node[text width=0.7cm] at (3.6,6) 
    {$S_{n-1}$};
    \end{normalsize}
    
    \node[text width=0.5cm] at (9.5,9.197094032978754) 
    {$\cdots$};
    
    \node[text width=0.5cm] at (9.7,3) 
    {$\cdots$};
    
    \end{tikzpicture}
    }
         &
         \scalebox{0.53}{
         \definecolor{zzttqq}{rgb}{0.6,0.2,0.}
    \definecolor{qqzzqq}{rgb}{0.,0.6,0.}
    \definecolor{cczzqq}{rgb}{0.8,0.6,0.}
    \definecolor{qqzzff}{rgb}{0.,0.6,1.}
    \definecolor{ttttff}{rgb}{0.2,0.2,1.}
    \begin{tikzpicture}[line cap=round,line join=round,>=triangle 45,x=0.5cm,y=0.5cm]
    \clip(0.5,-5.) rectangle (25.5,11.5);
    \fill[line width=2.pt,color=qqzzqq,fill=qqzzqq,fill opacity=0.10000000149011612] (12.757737917602592,3.) -- (13.5,3.) -- (18.242488318970842,8.505388340399392) -- (17.164567942824316,8.815354069710724) -- cycle;
    \fill[line width=2.pt,color=zzttqq,fill=zzttqq,fill opacity=0.10000000149011612] (20.,8.) -- (15.015365010907576,3.2150268400950295) -- (17.229891739007773,3.5292631338076417) -- (21.03502768850645,6.485665009878843) -- cycle;
    \fill[line width=2.pt,color=cczzqq,fill=cczzqq,fill opacity=0.3499999940395355] (18.242488318970842,8.505388340399392) -- (20.,8.) -- (15.015365010907576,3.2150268400950295) -- (13.5,3.) -- cycle;
    \draw [shift={(7.330195156352894,-4.162310028868053)},line width=2.pt,color=qqzzff] (0,0) -- (1.6988075267329586E-7:2.1349851671442264) arc (1.6988075267329586E-7:37.84543447944551:2.1349851671442264) -- cycle;
    \draw [shift={(7.330195156352894,-4.162310028868053)},line width=2.pt,color=cczzqq] (0,0) -- (1.6988075267329586E-7:3.558308611907044) arc (1.6988075267329586E-7:40.18546669720131:3.558308611907044) -- cycle;
    \draw [line width=1.2pt] (5.644745971611137,9.197094032978754)-- (2.597750273663621,5.9634325692927925);
    \draw [line width=1.2pt] (2.597750273663621,5.9634325692927925)-- (5.39724,3.);
    \draw [line width=1.2pt] (5.39724,3.)-- (8.34047,3.);
    \draw [line width=1.2pt] (5.644745971611137,9.197094032978754)-- (7.46068,9.2);
    \draw [line width=1.2pt] (16.48973396390154,9.00940866879776)-- (20.,8.);
    \draw [line width=1.2pt] (20.,8.)-- (22.263188754469994,4.688759191468466);
    \draw [line width=1.2pt] (22.263188754469994,4.688759191468466)-- (17.76649646830244,3.6054061214639126);
    \draw [line width=1.2pt] (17.76649646830244,3.6054061214639126)-- (13.5,3.);
    \draw [line width=0.4pt,color=ttttff,domain=7.330195156352894:24.0] plot(\x,{(-78.1817708587774--7.162310028868053*\x)/6.169804843647106});
    \draw [line width=0.4pt,color=ttttff,domain=7.330195156352894:24.0] plot(\x,{(-100.37799691610392--7.767716150331966*\x)/10.436301311949546});
    \draw [line width=0.4pt,color=ttttff,domain=7.330195156352894:24.0] plot(\x,{(-127.03581374192015--8.85106922033652*\x)/14.9329935981171});
    \draw [line width=0.4pt,color=ttttff,domain=7.330195156352894:24.0] plot(\x,{(-141.8877618281842--12.162310028868053*\x)/12.669804843647107});
    \draw [line width=0.4pt,color=ttttff,domain=7.330195156352894:24.0] plot(\x,{(-134.67610883693862--13.171718697665813*\x)/9.159538807548646});
    \draw [line width=0.4pt,color=ttttff,domain=1.0:7.330195156352894] plot(\x,{(-90.91167690111254--13.359404061846806*\x)/-1.6854491847417572});
    \draw [line width=0.4pt,color=ttttff,domain=1.0:7.330195156352894] plot(\x,{(-54.52576655123194--10.125742598160844*\x)/-4.732444882689273});
    \draw [line width=0.4pt,color=ttttff,domain=1.0:7.330195156352894] plot(\x,{(-44.45557164926649--7.162310028868053*\x)/-1.932955156352894});
    \draw [line width=1.2pt] (16.48973396390154,9.00940866879776)-- (11.16442,9.2);
    \draw [line width=1.2pt] (13.5,3.)-- (10.7465,3.);
    \draw [line width=1.2pt,color=qqzzff,domain=7.330195156352894:24.0] plot(\x,{(-141.0444757196725--11.116862447388947*\x)/14.308329759418939});
    \draw [line width=1.2pt,color=qqzzff,domain=7.330195156352894:24.0] plot(\x,{(-151.40815035556946--14.441332790580873*\x)/10.943529522513927});
    \draw [line width=0.4pt,color=cczzqq,domain=7.330195156352894:24.0] plot(\x,{(-126.89940501435674--10.3522401789408*\x)/12.256526746797096});
    \draw [line width=2.pt,color=qqzzqq] (12.757737917602592,3.)-- (13.5,3.);
    \draw [line width=2.pt,color=qqzzqq] (13.5,3.)-- (18.242488318970842,8.505388340399392);
    \draw [line width=2.pt,color=qqzzqq] (18.242488318970842,8.505388340399392)-- (17.164567942824316,8.815354069710724);
    \draw [line width=2.pt,color=qqzzqq] (17.164567942824316,8.815354069710724)-- (12.757737917602592,3.);
    \draw [line width=2.pt,color=zzttqq] (20.,8.)-- (15.015365010907576,3.2150268400950295);
    \draw [line width=2.pt,color=zzttqq] (15.015365010907576,3.2150268400950295)-- (17.229891739007773,3.5292631338076417);
    \draw [line width=2.pt,color=zzttqq] (17.229891739007773,3.5292631338076417)-- (21.03502768850645,6.485665009878843);
    \draw [line width=2.pt,color=zzttqq] (21.03502768850645,6.485665009878843)-- (20.,8.);
    \draw [line width=2.pt,color=cczzqq] (18.242488318970842,8.505388340399392)-- (20.,8.);
    \draw [line width=2.pt,color=cczzqq] (20.,8.)-- (15.015365010907576,3.2150268400950295);
    \draw [line width=2.pt,color=cczzqq] (15.015365010907576,3.2150268400950295)-- (13.5,3.);
    \draw [line width=2.pt,color=cczzqq] (13.5,3.)-- (18.242488318970842,8.505388340399392);
       \begin{scriptsize}
    \draw [fill=black] (5.644745971611137,9.197094032978754) circle (2.5pt);
    \draw[color=black] (4.449504051706032,9.497290633402772) node {$P_n$};
    \draw [fill=black] (2.597750273663621,5.9634325692927925) circle (2.5pt);
    \draw[color=black] (1.731304534625714,5.942722034137706) node {$P_1$};
    \draw [fill=black] (5.39724,3.) circle (2.5pt);
    \draw[color=black] (4.658596322250671,2.6669431289326457) node {$P_2$};
    \draw [fill=black] (22.263188754469994,4.688759191468466) circle (2.5pt);
    \draw[color=black] (22.605682877332253,5.419991357775197) node {$P_{k}$};
    \draw [fill=black] (17.393677637476703,3.6256969053624473) circle (2.5pt);
    \draw[color=black] (17.970804213592736,3.1) node {$P_{k-1}$};
    \draw [fill=black] (20.,8.) circle (2.5pt);
    \draw[color=black] (19.4,8.8) node {$P_{k+1}$};
    \draw [fill=black] (13.5,3.) circle (2.5pt);
    \draw[color=black] (12.743497449976743,2.5) node {$P_{k-2}$};
    \draw [fill=black] (16.48973396390154,9.00940866879776) circle (2.5pt);
    \draw[color=black] (15.79656065480554,9.60718733280282) node {$P_{k+2}$};
    \draw [fill=black] (7.330195156352894,-4.162310028868053) circle (2.5pt);
    \draw[color=black] (7.882102159813868,-4.564164560748736) node {$C$};
    \draw[color=ttttff] (20.68,10.0396663172976344) node {$CP_{k-2}$};
    \draw[color=ttttff] (23.4,6.7) node {$CP_{k-1}$};
    \draw[color=ttttff] (22,3.7) node {$CP_{k}$};
    \draw[color=ttttff] (24,10.6) node {$CP_{k+1}$};
    \draw[color=ttttff] (16,10.7) node {$CP_{k+2}$};
    \draw[color=ttttff] (6.8540651629693885,10.132094417175145) node {$CP_n$};
    \draw[color=ttttff] (1.055555581408433,7.806337975962648) node {$CP_1$};
    \draw[color=ttttff] (2.773755098488749,9.969968893782633) node {$CP_2$};
    
    \node[text width=0.5cm] at (21,5.2) 
    {$S_1$};
    \node[text width=0.5cm] at (19,5.7) 
    {$S_2$};
    \node[text width=0.5cm] at (17.4,6.2) 
    {$S_3$};
    \node[text width=0.5cm] at (15.7,6.5) 
    {$S_4$};
    \node[text width=0.7cm] at (5.2,6.7) 
    {$S_{n-2}$};
    \node[text width=0.7cm] at (3.6,6) 
    {$S_{n-1}$};
    
    \node[text width=0.5cm] at (17,7.5) 
    {$L$};
    \node[text width=0.5cm] at (18.7,7.5) 
    {$M$};
    \node[text width=0.5cm] at (20,7) 
    {$R$};
    
    \draw[color=cczzqq] (22,9) node {$\varepsilon_{\theta_7}$};
    \draw[color=qqzzff] (9.020846456185891,-3.604883779283618) node {$\theta$};
    \draw[color=qqzzff] (19.5,11) node {$\varepsilon_\ell$};
    \draw[color=qqzzff] (24.5,9) node {$\varepsilon_r$};
    \draw[color=cczzqq] (9.892632066103117,-3.195678288913595) node {$\theta_7$};
    \end{scriptsize}
    
    \node[text width=0.5cm] at (9.5,9.197094032978754) 
    {$\cdots$};
    
    \node[text width=0.5cm] at (9.7,3) 
    {$\cdots$};
    
    \end{tikzpicture}
    } 
         \\
         (a) & (b) 
    \end{tabular}
    \caption{(a) The vertex partitioning of $\mathcal{P}$ from $C$. The vertices of $\mathcal{P}$ are sorted counterclockwise, and the sequence $S_1,\ldots,S_{n-1}$ is sorted from right to left.
    (b) The LMR partition with $L\subseteq S_4$, $R\subseteq S_2$, and $M=S_3$ is valid for every $\theta \in [\hat{CP}_{k-1},\theta_7] $, where $\theta_7 = \hat{CP}_{k+2}-\phi$. When $\theta=\theta_7$ then $\varepsilon_r$ will cross $CP_{k+2}$.
    }
    \label{fig:LMP_partition}
\end{figure}

We can partition the intersection into three sets, Left, Middle, and Right; where during a ``small rotation '' only the Left and Right are quadrilaterals and change while the middle remains constant. In Lemma~\ref{lemma:LMR_exist}, we show that these can be expressed uniquely if defined as in Definition~\ref{def:LMR_ex} and also illustrate an example in Figure~\ref{fig:LMR_Cases}.

\begin{definition}
\label{def:LMR_ex}
    Let $S(C,\theta,\phi)$ be a sector that either fully or partially intersects a polygon $\mathcal{P}$, $K=\mathcal{P}\cap S(C,\theta,\phi)$, and $\{S_i\}_{i=1}^m$ is the vertex partitioning of $\mathcal{P}$ from $C$. Let us define three sets $L$, $M$, $R$ (Left, Middle, Right) for the intersection $K$:
        
        \noindent
        $\bullet$ $L=\emptyset$ or $L= K\cap S_i$  for $i=\max\left\{q \in \{1,\ldots,m\}:S_q\cap K \neq \emptyset \right\}$ or $L=K$ if $K\subseteq S_i$, $i\in \{1,\ldots,m\}$;
         \\ $\bullet$ 
        $R=\emptyset$ or $R= K\cap S_j$, for $i\neq j=\min\left\{q\in \{1,\ldots,m\}:S_q\cap K \neq \emptyset \right\}$; 
        \\ $\bullet$ 
         $M=K\setminus(L\cup R)$.
\end{definition}

\begin{lemma}
\label{lemma:LMR_exist}
    Every intersection $K=\mathcal{P}\cap S(C,\theta,\phi)$ is a union of three unique sets $L,M,R$ as defined in Definition~\ref{def:LMR_ex}.
\end{lemma}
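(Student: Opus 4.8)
The plan is to reduce the whole statement to an elementary fact about angular intervals seen from $C$, and then read off $L$, $M$, $R$ from the overlap pattern. Write $\alpha_i := \hat{CP}_{k_i}$, so that by construction a point $p$ lies in the section $S_i$ exactly when $p\in\mathcal{P}$ and its direction $\hat{Cp}$ from $C$ belongs to $[\alpha_i,\alpha_{i+1}]$, while $p$ lies in $S(C,\theta,\phi)$ exactly when $\hat{Cp}\in[\theta,\theta+\phi]$. Since every section satisfies $S_i\subseteq\mathcal{P}$, the first observation I would record is the identity $K\cap S_i = S(C,\theta,\phi)\cap S_i$, because intersecting further with $\mathcal{P}$ is redundant once we are inside $S_i$. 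In particular each $K\cap S_i$ is the intersection of the convex section $S_i$ with the convex wedge $S(C,\theta,\phi)$, hence convex, and it is nonempty precisely when the intervals $[\alpha_i,\alpha_{i+1}]$ and $[\theta,\theta+\phi]$ overlap in more than a point.

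Next I would establish that the set $\mathcal{Q}=\{q:S_q\cap K\neq\emptyset\}$ is either empty or a block of consecutive indices $\{j,\dots,i\}$. This is immediate from the previous step: the intervals $[\alpha_q,\alpha_{q+1}]$ tile the angular span of $\mathcal{P}$ in increasing order and are pairwise non-overlapping except at endpoints, so the indices whose interval meets the single interval $[\theta,\theta+\phi]$ form a contiguous run. Setting $i=\max\mathcal{Q}$ and $j=\min\mathcal{Q}$ then matches Definition~\ref{def:LMR_ex} exactly, and $L=K\cap S_i$, $R=K\cap S_j$ are convex by the first step. For the middle I would show that every section strictly between $j$ and $i$ is swallowed whole: because $S_j$ is the lowest and $S_i$ the highest section meeting $K$, one has $\alpha_{j+1}\ge\theta$ and $\alpha_i\le\theta+\phi$, so for $j<q<i$ the whole interval $[\alpha_q,\alpha_{q+1}]$ lies inside $[\theta,\theta+\phi]$; hence $S_q\subseteq S(C,\theta,\phi)$ and therefore $S_q\subseteq K$. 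Consequently $M=K\setminus(L\cup R)=S_{j+1}\cup\dots\cup S_{i-1}$, which is exactly $\mathcal{P}$ intersected with the wedge spanning the directions $[\alpha_{j+1},\alpha_i]$, and is thus itself convex.

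Finally I would close the equality and the uniqueness. The identity $K=L\cup M\cup R$ is automatic once $L,R\subseteq K$, since $M$ is defined as $K\setminus(L\cup R)$; moreover $M$ is disjoint from $L\cup R$ by construction and $L$ meets $R$ at most along the single shared semi-line $CP_{k_i}$ when $j=i-1$, so the three pieces have pairwise disjoint interiors and their areas add. Uniqueness then follows because the vertex partitioning $\{S_q\}$ is completely determined by $\mathcal{P}$ and $C$, and $i$ and $j$ are the unambiguous maximum and minimum of $\mathcal{Q}$; I would only need to check that the degenerate branches of Definition~\ref{def:LMR_ex} are consistent with this recipe, namely $K=\emptyset$ (all three sets empty) and $\mathcal{Q}=\{i\}$ a single section, in which case $j=i$ forces $R=\emptyset$ and the clause ``$L=K$ if $K\subseteq S_i$'' agrees with $L=K\cap S_i$. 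The main obstacle I anticipate is not any of these set-theoretic steps but the careful treatment of boundary coincidences — when $\theta$ or $\theta+\phi$ equals some $\alpha_q$, or a bounding semi-line of the sector passes through a vertex $P_{k_q}$ — where one must decide how the shared semi-lines are assigned so that $\mathcal{Q}$ stays contiguous and the pieces remain well defined. All such overlaps are one-dimensional and hence irrelevant to the area, so I would fix the convention that a boundary semi-line belongs to the lower-indexed section and verify that the argument above is unaffected.
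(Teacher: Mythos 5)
Your proof is correct and takes essentially the same route as the paper's: both reduce the statement to the index set of sections meeting $K$, take its minimum and maximum to pin down $R$ and $L$, and observe that the union is automatic from $M=K\setminus(L\cup R)$ while uniqueness follows from the uniqueness of that minimum and maximum (with the same remark about the degenerate single-section case). The additional structure you establish along the way — contiguity of the index set, convexity of the pieces, and the explicit identity $M=S_{j+1}\cup\dots\cup S_{i-1}$ — goes beyond the paper's much terser argument and corresponds to what the paper only asserts separately in Remark~\ref{remark:M_equal}.
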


\begin{proof}
    We only need to consider how many elements set $H=\left\{q \in \{1,\ldots,m\}: S_q\cap K \neq \emptyset \right\}$ contains. 
    If the cardinality of $|H|=1$, then $K\subseteq S_i$, for $i\in \{1,\ldots,m\}$, and the triplet $(L,M,R) = (K\cap S_i,\emptyset,\emptyset)$. If $|H|=2$, then $i = max(H)$, and $j=min(H)$, so the triplet $(L,M,R) = (K\cap S_i,\emptyset,K \cap S_j)$. If $|H|>2$, then $i = max(H)$, and $j=min(H)$, so the triplet $(L,M,R) = (K\cap S_i,K\setminus(L\cup R),K \cap S_j)$.

    It is apparent that $K= L\cup M \cup R$, and the uniqueness of the sets stems from the uniqueness of the minimum and the maximum element of $H$. Finally, notice that the only ambiguous case is when $|H|=1$, but then we select $L=K$.
\end{proof}

The result of Lemma~\ref{lemma:LMR_exist} means that there is a bijection between an intersection $\mathcal{P}\cap S(C,\theta,\phi)$ and its decomposition $L$, $M$, $R$, so an equivalence relation on these sets not only partitions them but partitions the intersections as well. 
The intersections expressed their decomposition $L$,$M$,$R$ share the same branch of equation~(\ref{eq:Func_whole}) if both their left sets are subsets of the same section and at the same time, both their right sets are subsets of the same section.

\begin{definition}
\label{def:LMR_Famillies}
    Let two intersections $K_1=S(C,\theta_1,\phi) \cap \mathcal{P}=L_1\cup M_1 \cup R_1$, $K_2=S(C,\theta_2,\phi) \cap \mathcal{P}=L_2\cup M_2 \cup R_2 $, and  $\{S_i \}_{i=1}^m$ be the vertex partitioning of $\mathcal{P}$ from $C$. We define the relation $\mathcal{LMR}$, and we will say that $K_1$ and $K_2$ are \textbf{$\mathcal{LMR}$ related} if and only if the following statements are both true, for $j<i\in \{1,\ldots,n\}$:
    \begin{itemize}
        \item $L_1\subseteq L_2 \subseteq  S_i$ or $L_2\subseteq L_1 \subseteq  S_i$ or $L_1=L_2=\emptyset$
        \item $R_1\subseteq R_2 \subseteq  S_j$ or $R_2\subseteq R_1 \subseteq  S_j$ or $R_1=R_2=\emptyset$
    \end{itemize}
\end{definition}

\begin{remark}
\label{remark:M_equal}
    Notice that if two intersections $L_1,M_1,R_1$, and $L_2,M_2,R_2$ are $\mathcal{LMR}$ related, then $M_1=M_2$ because either $M_1=M_2=\emptyset$ or if we consider
$
        H=\left\{q \in \{1,\ldots,m\}: S_q\cap K \neq \emptyset \right\}, 
$
    $i = \max(H)$, and $j=\min(H)$ then $M_1=M_2= \cup_{k=j+1}^{i-1}S_k$.
\end{remark}

\begin{lemma}
\label{lemma:LMR_Families_eq}
    The relation $\mathcal{LMR}$ is a relation of equivalence. 
\end{lemma}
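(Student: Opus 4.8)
The plan is to check the three defining properties of an equivalence relation, with essentially all of the work in transitivity. Reflexivity and symmetry are immediate from Definition~\ref{def:LMR_Famillies}. For reflexivity, take any $K=L\cup M\cup R$; by Definition~\ref{def:LMR_ex} its left set satisfies $L\subseteq S_i$ for $i=\max\{q:S_q\cap K\neq\emptyset\}$ (or else $L=\emptyset$), so $L\subseteq L\subseteq S_i$ holds trivially, and likewise $R\subseteq R\subseteq S_j$; hence $K$ is $\mathcal{LMR}$ related to itself. Symmetry is built into the definition, since each of its two clauses is a disjunction that is invariant under swapping the subscripts $1$ and $2$.

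The heart of the argument is a monotonicity (chain) property that I would establish first: for the fixed inner angle $\phi$ and a fixed section $S_i$, the nonempty left sets contained in $S_i$ are totally ordered by inclusion. Concretely, if $L=K\cap S_i\neq\emptyset$ is a genuine left set, then the right semi-line of $S(C,\theta,\phi)$ lies clockwise of $S_i$, so the lower angular bound $\widehat{Cx}\geq\theta$ is vacuous on $S_i$ and $L=\{x\in S_i:\widehat{Cx}\leq\theta+\phi\}$, where $\widehat{Cx}$ denotes the polar angle of $x$ about $C$. This region grows monotonically with $\theta+\phi$, so any two such left sets in $S_i$ are nested. The symmetric statement holds for right sets $R\subseteq S_j$, which take the form $\{x\in S_j:\widehat{Cx}\geq\theta\}$ and are monotone in $\theta$ in the opposite direction.

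With the chain property in hand, transitivity follows by a short case analysis. Assume $K_1\,\mathcal{LMR}\,K_2$ and $K_2\,\mathcal{LMR}\,K_3$. If $L_2=\emptyset$, then each relation forces its partner to be empty, so $L_1=L_3=\emptyset$ and the empty disjunct of the left clause holds for $(K_1,K_3)$. If $L_2\neq\emptyset$, then $L_2$ has positive area and, by Definition~\ref{def:LMR_ex}, lies in the unique section $S_i$ with $i=\max\{q:S_q\cap K_2\neq\emptyset\}$; since the sections have disjoint interiors, both relations must use this same index $i$, placing $L_1,L_2,L_3$ all inside $S_i$. Transitivity of $\subseteq$ settles the cases $L_1\subseteq L_2\subseteq L_3$ and $L_3\subseteq L_2\subseteq L_1$; the remaining cases, where $L_1$ and $L_3$ both contain or are both contained in $L_2$, are exactly where the chain property applies and yields that $L_1$ and $L_3$ are themselves nested in $S_i$. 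Running the identical argument on the right sets verifies the second clause, and Remark~\ref{remark:M_equal} guarantees that the middle sets coincide throughout, so $K_1\,\mathcal{LMR}\,K_3$.

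The main obstacle is precisely this transitivity step, and within it the chain lemma: without monotonicity, two relations pointing ``into'' a common $L_2$ would not obviously produce comparable $L_1$ and $L_3$. The one delicate point I would treat separately is the degenerate case $|H|=1$, in which $L=K$ is an angular slab of width $\phi$ rather than a one-sided region; two slabs of equal width nest only when they are equal, which forces such configurations to collapse consistently with the case analysis above.
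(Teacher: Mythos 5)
Your proposal is correct and follows the same overall route as the paper's proof: verify reflexivity and symmetry directly from Definition~\ref{def:LMR_Famillies}, then establish transitivity by showing that $L_1$ and $L_3$ land in a common section and are therefore comparable under inclusion. The one substantive difference is that the paper simply asserts the step ``both $L_1$ and $L_3$ are subsets of a section $S_i$, which leads to $L_1\subseteq L_3$ or $L_3\subseteq L_1$ or $L_1=L_3=\emptyset$,'' whereas you actually prove it: your chain property --- that a genuine left set in $S_i$ has the form $\{x\in S_i:\hat{Cx}\leq\theta+\phi\}$ and hence the left sets within a fixed section are totally ordered by inclusion as $\theta+\phi$ varies --- is precisely the justification the paper's argument tacitly relies on, since transitivity of $\subseteq$ alone does not handle the configuration where $L_1$ and $L_3$ both contain, or are both contained in, $L_2$. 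You also isolate the degenerate case $|H|=1$, where $L=K$ is a full angular slab of width $\phi$ and the one-sided monotonicity argument needs a separate remark. In short, your write-up is a strictly more complete version of the paper's proof; the extra chain lemma is the genuine mathematical content behind the nestedness claim, and including it (or at least a sentence of justification) would strengthen the paper's own argument.
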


\begin{proof}
    Let three partitions $L_1,M_1,R_1$, $L_2,M_2,R_2$, $L_3,M_3,R_3$  of three different intersection, and $\{S_i \}_{i=1}^m$ as Lemma~\ref{lemma:LMR_Families_eq} assumes. We will denote a partition as a pair $(L_i,R_i)$, $i\in \{1,2,3\}$. $(L_i,R_i)\sim (L_j,R_j)$ denotes that the partitions $(L_i,R_i)$, and $(L_j,R_j)$ belong in the same LMR family. From definition~\ref{def:LMR_Famillies} it is directly induced that the reflective ($(L_i,R_i)\sim (L_i,R_i)$) and the symmetric ( if  $(L_i,R_i)\sim (L_j,R_j)$ then $(L_j,R_j)\sim (L_i,R_i)$) properties are true. All we need to is to prove the transitive property. If $(L_1,R_1)\sim(L_2,R_2)$ and $(L_2,R_2)\sim(L_3,R_3)$, then we know that both $L_1$, and $L_3$ are subsets of a section $S_i$, which lead to the fact that either $L_1\subseteq L_3$ or $L_3\subseteq L_1$  or $L_1=L_3=\emptyset$. By the same logic both $R_1$, and $R_3$ are subsets of a section $S_j$, which lead to the fact that either $R_1\subseteq R_3$ or $R_3\subseteq R_1$  or $R_1=R_3=\emptyset$ which leads to $(L_1,R_1)\sim(L_3,R_3)$
\end{proof}
The intersections $S(C,\theta_1,\phi)\cap \mathcal{P}$, and $S(C,\theta_2,\phi)\cap \mathcal{P}$ are not $\mathcal{LMR}$ related if during the rotation from $\theta_1$ to $\theta_2$ either the left or the right semi-line of the sector $S(C,\theta,\phi)$ crosses one of the $CP_i$ lines of polygon $\mathcal{P}$, $i\in \{1.\ldots,n\}$. In other words, if in an interval $[\theta_s,\theta_f]$ contains an angle of rotation $\theta$ of the form $CP_i-\phi$ or $CP_i$ then the intersections $S(C,\theta_s,\phi)\cap \mathcal{P}$, and $S(C,\theta_f,\phi)\cap \mathcal{P}$ are not $\mathcal{LMR}$ related. Also, it is known~\cite{fraleigh_algebra} that an equivalence relation on a set $S$ yields a partition of $S$.

\begin{corollary}
    The equivalence relation $\mathcal{LMR}$ partitions naturally the domain of rotations into intervals $[\theta_i,\theta_{i+1}]$, where the sequence $\{\theta_i\}_{i=1}^Q$ is the merged sorted list of the two strictly increasing sequences of angles $\{\hat{CP_{k_i}}\}_{i=1}^n$, and $\{\hat{CP_{k_i}}-\phi\}_{i=1}^n$.
\end{corollary}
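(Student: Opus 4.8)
The plan is to pull the $\mathcal{LMR}$ relation back along the map $\theta \mapsto K_\theta := S(C,\theta,\phi)\cap\mathcal{P}$ and then identify the resulting classes with the stated intervals. First I would note that, by Lemma~\ref{lemma:LMR_exist}, every direction $\theta\in\mathcal{D}_G$ yields a unique intersection $K_\theta$ with a unique decomposition $(L,M,R)$, so I may define $\theta_1\approx\theta_2$ to mean that $K_{\theta_1}$ and $K_{\theta_2}$ are $\mathcal{LMR}$ related. Since $\mathcal{LMR}$ is an equivalence relation (Lemma~\ref{lemma:LMR_Families_eq}), its pullback along a well-defined map is again an equivalence relation on $\mathcal{D}_G$, and by the standard equivalence-relation/partition correspondence~\cite{fraleigh_algebra} this already produces a partition of $\mathcal{D}_G$ into classes. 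The remaining work is purely to compute what those classes are.

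The key geometric step is to show that the two indices controlling the decomposition, $j=\min(H)$ and $i=\max(H)$ with $H=\{q:S_q\cap K_\theta\neq\emptyset\}$, are governed separately by the two semi-lines. Because the section $S_q$ occupies the angular wedge $[\hat{CP_{k_q}},\hat{CP_{k_{q+1}}}]$ while the sector occupies $[\theta,\theta+\phi]$, the section containing the right semi-line (slope angle $\theta$) realises $\min(H)$ and the section containing the left semi-line (slope angle $\theta+\phi$) realises $\max(H)$. Hence $j$ can change only when $\theta$ passes some $\hat{CP_{k_q}}$, and $i$ only when $\theta+\phi$ passes some $\hat{CP_{k_q}}$, i.e. when $\theta=\hat{CP_{k_q}}-\phi$. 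Therefore the pair $(i,j)$ — and with it, through Remark~\ref{remark:M_equal}, the middle set $M=\cup_{k=j+1}^{i-1}S_k$ — stays constant precisely on the maximal subintervals that avoid every value in $\{\hat{CP_{k_i}}\}_{i=1}^n\cup\{\hat{CP_{k_i}}-\phi\}_{i=1}^n$, and these maximal subintervals are exactly the $[\theta_i,\theta_{i+1}]$ obtained by merging and sorting the two sequences.

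It then remains to check that a whole such interval is a single class and that neighbouring intervals differ. For the latter I would use the observation stated immediately before this corollary: crossing an endpoint $\theta_i$ drives one semi-line across a line $CP_{k_q}$, changing $\min(H)$ or $\max(H)$ and so moving the corresponding $L$ or $R$ into a different section, which breaks the subset clause of Definition~\ref{def:LMR_Famillies}. For the former, on a fixed interval with $j<i$ both semi-lines rotate monotonically inside their fixed sections $S_j$ and $S_i$: as $\theta$ grows the right semi-line shaves off more of $S_j$, so the sets $R$ are nested ($R_{\theta'}\subseteq R_\theta$ for $\theta'>\theta$), while the left semi-line sweeps deeper into $S_i$, so the sets $L$ are nested the other way; together with $L,R$ lying in fixed sections, this is exactly the condition of Definition~\ref{def:LMR_Famillies}, so all directions in the interval are mutually related. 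I expect the main obstacle to be this monotone-nesting step, and in particular the degenerate interval on which both semi-lines lie in one section (so $i=j$, $R=M=\emptyset$ and $L=K_\theta$): there the wedge merely shifts and the sets $K_\theta$ need not be nested, so one must argue that relatedness still holds through common-section membership of $L$, and must assign consistently the endpoint directions at which a semi-line coincides with some $CP_{k_q}$.
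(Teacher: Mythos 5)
Your proposal is correct and follows essentially the same route as the paper, which in fact gives no explicit proof of this corollary: it relies only on the remark immediately preceding it (an interval containing a value $\hat{CP_{k_i}}$ or $\hat{CP_{k_i}}-\phi$ forces a semi-line across a line $CP_{k_i}$ and breaks relatedness) together with the standard fact that an equivalence relation yields a partition, and your pullback of $\mathcal{LMR}$ to the domain of rotations, the identification of $\min(H)$ and $\max(H)$ with the sections containing the right and left semi-lines, and the monotone-nesting argument inside each interval supply exactly the details the authors leave implicit. The one obstacle you flag at the end --- the degenerate interval with $|H|=1$, where $L=K_\theta$ and two wedge slices of a single section are generally not nested, so the literal subset clause of Definition~\ref{def:LMR_Famillies} fails --- is a genuine defect of the paper's definition rather than of your argument: the relation actually used by Remark~\ref{remark:M_equal} and by this corollary is ``$L_1,L_2$ lie in the same section and $R_1,R_2$ lie in the same section,'' and under that reading your proof closes without the problematic step.
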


\begin{corollary}
The number of $\mathcal{LMR}$ cells $Q$ is at most $2n$.
\end{corollary}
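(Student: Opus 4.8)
The plan is to obtain the bound directly from the preceding corollary, which already pins down the endpoints of the $\mathcal{LMR}$ cells. That corollary states that the boundary angles $\{\theta_i\}_{i=1}^Q$ separating consecutive cells form the merged sorted list of the two sequences $\{\hat{CP_{k_i}}\}_{i=1}^n$ and $\{\hat{CP_{k_i}}-\phi\}_{i=1}^n$. Consequently the whole statement reduces to bounding the cardinality of this merged list, since each interval $[\theta_i,\theta_{i+1}]$ is exactly one $\mathcal{LMR}$ cell.

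First I would note that each of the two sequences is indexed by $i\in\{1,\dots,n\}$, so each contributes at most $n$ angles to the merge. The first sequence lists the directions of the semi-lines $CP_{k_i}$, which are the angles at which the sector's right semi-line (and, after the $-\phi$ shift, its left semi-line) sweeps across a vertex of $\mathcal{P}$; the second sequence is simply the first translated by the fixed amount $-\phi$, and therefore also has at most $n$ entries. Since the union of two finite sets, each of size at most $n$, has size at most $2n$, the merged list contains at most $2n$ distinct angles, giving $Q\le 2n$.

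Because the result is pure finite counting once the preceding corollary is available, there is no substantive obstacle; the only point needing care is that coincidences among the boundary angles can only help the bound. Such coincidences may occur within a single sequence, when two vertices of $\mathcal{P}$ are collinear with $C$ and their semi-lines $CP_{k_i}$ coincide, or across the two sequences, when $\hat{CP_{k_i}}=\hat{CP_{k_j}}-\phi$ for some $i,j$. In every such case a boundary angle is merely identified with another, so the number of distinct cell endpoints can only decrease, and $2n$ remains a valid upper bound.
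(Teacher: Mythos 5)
Your proposal is correct and matches the paper's (implicit) reasoning: the paper states this corollary without a separate proof, as an immediate count of the at most $n+n=2n$ breakpoint angles $\hat{CP_{k_i}}$ and $\hat{CP_{k_i}}-\phi$ identified in the preceding corollary, which is exactly your argument. Your extra remark that coincident angles only reduce the count is a sensible addition but does not change the approach.
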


\noindent
If a sector $S(C,\theta_0,\phi)$ intersects either fully or partially a polygon $\mathcal{P}=(P_1,\ldots,P_n)$ then from Lemma~\ref{lemma:LMR_exist} there exists a partition $L$,$M$,$R$ of $\mathcal{P} \cap S(C,\theta_0,\phi)$. The partition of $L$,$M$,$R$ belongs to a unique $\mathcal{LMR}$ cell over the interval of rotations $[\theta_i.\theta_{i+1}]$, $i\in \{1,...,Q-1\}$, and from Remark~\ref{remark:M_equal}, the change of the intersection area is equal to the change in the sum of the two quadrilaterals $L$ and $R$ for every rotation $\theta \in [\theta_i.\theta_{i+1}]$ as the area $M$ remains constant.

In the following section, we study the area of the intersection when it is a quadrilateral as a function of rotations. We will present a rotational sweep algorithm that approximates the maximum intersection by obtaining the maximum of all the approximated local maximums in the intervals $[\theta_i,\theta_{i+1}]$. Finally, we provide an analysis of the intersection for $\mathcal{LMR}$ cells.

\section{Maximum Intersection Algorithm}
\label{sec:Algo_Tech}
First, we need to compute the sections of the polygon  $\mathcal{P}$. We can do that using a rotational sweep on the vertices of $\mathcal{P}$ from the centre of the sector $C$, to compute all the $CP_i$ lines $i\in\{1,\ldots,n \}$, and their derivatives $\hat{CP}_i$.
Then we can compute also compute $\hat{CP}_i-\phi$ and merge them with the ordered list $\hat{CP}_i$ to create the sequence of $\{\theta_i \}_{i=1}^Q$ that make up the intervals $[\theta_i,\theta_{i+1}]$ of the independent problems.

If we consider that the intersection $K = \mathcal{P} \cap S(C,\theta,\phi)$ is a quadrilateral then we need to know the edges of the polygon that contribute to $K$ to be able to evaluate equation~(\ref{eq:General_Case_Alt}). We can identify the upper and lower edges of each section by using an algorithm that goes through the upper and lower hull of $\mathcal{P}$ from centre $C$  using the counter-clockwise order $(P_1,\ldots,P_n) $ and the order $(P_{k_1},\ldots,P_{k_n})$.

\begin{algorithm}[H]
{\footnotesize
\caption{\textit{Identifying the Lower and Upper Edge of Each Section}}
\label{Alg:Upper_Lower}
\begin{algorithmic}[1]
  \Require A polygon $(P_i)_{i=1}^n$, and a rearrangement $(P_{k_i})_{i=1}^n$ ordered by their $\hat{CP_{k_i}}$ values
  \Ensure Two lists of the upper and lower edge of each section of the polygon.
  \State \textbf{Set} upper\_Edges = $\{ P_{k_1}, P_{k_1+1} \}$, lower\_Edges = $\{ P_{k_1}, P_{k_1-1} \}$
  \State \textbf{Set} cu = $k_1+1$ //The index of the left vertex of the current Upper edge
  \For{$j=2$ to $n$}
  \State //Check if the change is in the upper or lower edge
  \If{$P_{k_j}$ \textbf{coincides with} $P_{cu}$}
  \State \textbf{Set} upper\_Edges = upper\_Edges~$\cup \{ P_{k_j+1} \}$,  cu = $k_j+1$
  \Else
  \State  \textbf{Set} lower\_Edges = lower\_Edges~$\cup \{ P_{k_j-1} \}$
  \EndIf
  \EndFor
  \State \Return upper\_Edges \textbf{and} lower\_Edges
\end{algorithmic}
}
\end{algorithm}

\noindent
Now we are ready to present an algorithm that $\varepsilon$ approximates the maximum intersection with accuracy $\varepsilon>1$, $|apx-opt|<10^{-\varepsilon}$.

\begin{algorithm}[H]
{\footnotesize
\caption{\textit{Maximum Intersection of a Polygon and a Sector
}}
\label{Alg:Max_Inter_Dir_theta}
\begin{algorithmic}[1]
  \Require The set P of the polygon's points, the centre $C\in \mathbb{R}^2$, the angle of the sector $\phi \in (0,\pi)$
  \Ensure The direction $\theta_{max}$ that yields to the intersection with the maximum area
  \State \textbf{Compute } $\hat{CP}_{k_1},\ldots,\hat{CP}_{k_n}$ by using $(P_1,\ldots,P_n,C)$
  \State \textbf{Compute }$\theta_1,\ldots,\theta_{2k_n}$
  by sorting the vlaues $(\hat{CP}_{k_1},\ldots,\hat{CP}_{k_n},\hat{CP}_{k_1}-\phi,\ldots,\hat{CP}_{k_n}-\phi)$
  \State \textbf{Compute} the upper and lower edges for each $LMR$
  using \textbf{Algorithm~\ref{Alg:Upper_Lower}}
  \For{$i=1$ to $2k_n-1$}
  \State //For every partition LMR find the points of the local maximum area from equation~\ref{eq:Func_whole}
  \State \textbf{Find } $r_i = \argmax_{r \in [0,\theta_{i+1}-\theta_i]}{f_i(r) }$ 
  \State  $local\_max_i = \theta_i + r_i$
  \EndFor
  \State $\theta_{max} = \max(local\_max_i)$
  \State \Return $\theta_{max}$
\end{algorithmic}
}
\end{algorithm}

\noindent
To find the maximum in line~6 then we examine all the local extreme points where $df_i/d\theta=0$ plus the values $f(\theta_i)$, and $f(\theta_{i+1})$. 

\begin{theorem}
Given a convex polygon $\mathcal{P}=(P_1,\ldots,P_n)$ with $n$ vertices, and a sector $S(C,\theta,\phi)$ where $\theta \in [0,2\pi]$, then Algorithm~\ref{Alg:Max_Inter_Dir_theta} approximates up to $\varepsilon$ accuracy the direction $\theta_{max}$ such that the area of $S(C,\theta_{max},\phi)\cap \mathcal{P}$ is maximised, in time $\mathcal{O}(n(\log{n}+\log{\varepsilon}/\phi))$.
\end{theorem}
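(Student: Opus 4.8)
The plan is to prove the statement in two parts: first that Algorithm~\ref{Alg:Max_Inter_Dir_theta} returns a direction whose intersection area is within $\varepsilon$ of the true optimum, and second that it runs within the claimed time bound. The correctness part is essentially an assembly of the structural results already established, while the running-time part is a careful accounting of the preprocessing cost against the per-cell optimisation cost.

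For correctness, I would argue as follows. By Lemma~\ref{lemma:LMR_exist} every intersection $\mathcal{P}\cap S(C,\theta,\phi)$ decomposes uniquely into $L,M,R$, and by Lemma~\ref{lemma:LMR_Families_eq} together with its corollaries the $\mathcal{LMR}$ relation partitions the whole range $[0,2\pi]$ of directions into $Q\le 2n$ cells $[\theta_i,\theta_{i+1}]$ whose endpoints are exactly the merged sorted angles $\{\hat{CP_{k_i}}\}$ and $\{\hat{CP_{k_i}}-\phi\}$ computed in lines 1--2. On each such cell the area is given by the single branch $f_i(\theta)=A^{L_i}_\phi(\theta)+A^{R_i}_\phi(\theta)+Area(M_i)$ of~(\ref{eq:Func_whole}), where $Area(M_i)$ is constant by Remark~\ref{remark:M_equal}; Algorithm~\ref{Alg:Upper_Lower} supplies the upper and lower edges that fix the two quadrilaterals so that~(\ref{eq:General_Case_Alt}) can be evaluated. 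Since the area is continuous on a compact domain the global maximum is attained, and it equals the largest of the per-cell maxima $\max_{\theta\in[\theta_i,\theta_{i+1}]}f_i(\theta)$. Line 6 computes each of these maxima to precision $\varepsilon$ by invoking the procedure behind Theorem~\ref{Theor:Inter_Two_Sectors}, which uses the decomposition of Lemma~\ref{lemma:A_decomp_AL_AR} and the analytic extreme points of Lemma~\ref{lemma:Der_Phi_Roots} to isolate the at most five candidate roots per length-$\phi$ window and to compare them against the endpoint values. Taking the maximum over all cells in line 9 therefore yields an $\varepsilon$-approximation of the global optimum.

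For the running time, the preprocessing in lines 1--3 costs $\mathcal{O}(n\log n)$: computing the $n$ slopes $\hat{CP_{k_i}}$ and sorting them is $\mathcal{O}(n\log n)$, merging the two already-sorted length-$n$ boundary lists is $\mathcal{O}(n)$, and the single pass of Algorithm~\ref{Alg:Upper_Lower} over the two vertex orderings is $\mathcal{O}(n)$. The main loop iterates over the $Q=\mathcal{O}(n)$ cells. Each cell has length at most the angular span of $\mathcal{P}$ from $C$ plus $\phi$, which is bounded by a constant (below $2\pi$), so by Theorem~\ref{Theor:Inter_Two_Sectors} its maximum is $\varepsilon$-approximated in time $\mathcal{O}(\log\varepsilon/\phi)$; this is where both the $1/\phi$ factor (from subdividing a cell into $\mathcal{O}(1/\phi)$ windows of width $\phi$) and the $\log\varepsilon$ factor (from the Newton--Raphson iterations) enter. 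Summing over the $\mathcal{O}(n)$ cells gives $\mathcal{O}(n\log\varepsilon/\phi)$ for the loop, and adding the preprocessing yields $\mathcal{O}(n(\log n+\log\varepsilon/\phi))$. I would note that this per-cell bound is deliberately loose: since the cell lengths sum to a constant, the total number of windows is in fact $\mathcal{O}(n+1/\phi)$ rather than $\mathcal{O}(n/\phi)$, but because $\phi<\pi$ both estimates collapse into the same asymptotic class, so the stated bound holds either way.

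The main obstacle I anticipate is not any single estimate but the interface between the two decompositions used in the paper: the $\mathcal{LMR}$ partition that makes $Area(M_i)$ constant and reduces each branch to a sum of two quadrilateral areas, and the orthogonal $A_\theta(\phi)$ decomposition of Lemma~\ref{lemma:A_decomp_AL_AR} that is valid only over rotations of width at most $\phi$. The delicate point to verify is that invoking Theorem~\ref{Theor:Inter_Two_Sectors} \emph{inside} a single $\mathcal{LMR}$ cell is legitimate, i.e. that subdividing a cell into length-$\phi$ windows never alters the fixed $L,M,R$ sections, and that no local maximum is lost at a window boundary or a cell boundary. This last concern is precisely what the endpoint checks built into Lemma~\ref{lemma:A_decomp_AL_AR} and into line 6 are designed to resolve, and confirming that they cover every boundary introduced by the double subdivision is the step I would treat most carefully.
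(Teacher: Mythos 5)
Your proposal follows the paper's proof essentially verbatim: existence of the maximum via continuity on the compact union of $\mathcal{LMR}$ cells, reduction of each cell to a single branch $f_i=A^{L_i}_\phi+A^{R_i}_\phi+Area(M_i)$, per-cell optimisation through the $A_\theta(\phi)$ decomposition of Lemma~\ref{lemma:A_decomp_AL_AR} with Newton--Raphson on the resulting subintervals, and the same $\mathcal{O}(n\log n)$ preprocessing plus $\mathcal{O}(n\log\varepsilon/\phi)$ loop accounting. The only divergence is a constant factor: where you quote ``at most five candidate roots per length-$\phi$ window,'' the paper's Case~3 (both $L$ and $R$ quadrilaterals varying) involves four $A_\theta(\phi)$ functions and hence up to nine subintervals, which does not affect the stated asymptotic bound.
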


\begin{proof}
Let $\{ \theta_i \}_{i=1}^{Q}$ be the sequence of domains where an LMR does not change.
The area of intersection $\mathcal{P}\cap S(\theta,\phi)$ for each cell of the partition, $i\in \{1,\ldots,Q\}$, is given from equation~(\ref{eq:Func_whole}) 
\begin{align*}
    f(\theta) = f_i(\theta)  \hspace{2cm} \theta \in [\theta_i,\theta_{i+1}], \hspace{2cm}
    i \in \{ 1,\ldots,Q\}
\end{align*}
We prove that Algorithm~\ref{Alg:Max_Inter_Dir_theta} returns a value $\theta^*$ that maximises $f$, that is $f(\theta^*)\geq f_A(\theta)$, $\theta \in \mathcal{D}_G$.
But first, we need to guarantee that such $\theta^*$ exists.

\begin{lemma}
\label{lemma:Exstance_Max}
There is at least one point $\theta^*\in \mathcal{D}_G$, so the function $f$ has a global maximum.
\end{lemma}

\begin{proof}
Let us consider the piecewise function  $f =f_i(\theta)$, $\theta\in[\theta_{i},\theta_{i+1}], i \in \{ 1,\ldots,Q\}$ which is continuous since each branch $f_i$ is continuous in the interval $[\theta_i,\theta_{i+1}]$. 
Also the domain of $f$ is compact because $\mathcal{D}_G=\bigcup_{i=1}^{Q}[\theta_i,\theta_{i+1}]$, this means $\mathcal{D}_G$ is the finite union of closed and bounded subsets of $\mathbb{R}$. Hence $f: \mathcal{D}_G\rightarrow \mathbb{R}$ is a continuous function defined in the compact set $\mathcal{D}_G$. So from the extreme value Theorem~\cite{spivak2019calculus}, there is at least one point $\theta^*\in \mathcal{D}_G$ such that $f(\theta^*)=\max{(f)}$.
\hspace{6.2cm}
\textit{End of proof of Lemma~\ref{lemma:Exstance_Max}}
\end{proof}

The sequence $\{ \theta_i \}_{i=1}^Q$ partitions the domain $\mathcal{D}_G$, if we find the local maximum of each partition, then the maximum of the local maximums should be the global maximum. At lines 4-8 this is what the Algorithm~\ref{Alg:Max_Inter_Dir_theta} does.

To find a local maximum, that is a maximum in each partition we search in $Z_i=\left\{r \in [\theta_i,\theta_{i+1}] : \frac{df_i}{dr}(r)=0 \right\}\cup\{\theta_i\}\cup\{\theta_{i+1}\}$.
Each $f_i$ is derivable because it is of the form of equation~(\ref{eq:Func_whole}) where $A^{L_i}_{\phi}$ is function~(\ref{eq:General_Case_Alt}) which is derivable, and $A^{R_i}_{\phi}$ is also function~(\ref{eq:General_Case_Alt}) with different arguments.
There are three cases for function $f_i$,
notice that when the sector intersects partially the polygon $\mathcal{P}$, then we 
and we need to be able to find the maximum of function $f$ in each case:

\begin{itemize}
    \item {\bf Case 1} If $S(C,\theta,\phi)$ contains $\mathcal{P})$ (i.e. $\phi>\theta_{Q}-\theta_1$) so the maximum intersection is the polygon itself.
    In this case, we can return $\theta_1$.
    \item {\bf  Case 2:} If $\mathcal{P}\cap S(C,\theta,\phi)$ is a subset of a section $S_i$, $in\in\{1,\ldots,n\}$ then from Theorem~\ref{Theor:Inter_Two_Sectors} we can approximate the maximum up to $\varepsilon$ accuracy.
    \item {\bf Case 3:} There are three partitions of Left Middle Right
    \begin{align*}
    f_i(\theta) = A^{L_i}_\phi(\theta) + A^{R_i}_\phi(\theta) + Area(M_i) &&  \theta \in [\theta_i,\theta_{i+1}]
    \end{align*}
    In this case we can still apply the same technique as in Theorem~\ref{Theor:Inter_Two_Sectors}. We rewrite functions  $A^{L_i}_\phi(\theta) + A^{R_i}_\phi(\theta)$ as four functions $A_\theta(\phi)$ as in Lemma~\ref{lemma:A_decomp_AL_AR}, and we can portion the domain $[\theta_i,\theta_{i+1}]$ into at most 9 cells and then run the Newton Raphson in every one of them to $\varepsilon$-approximate the extreme points of $f_i$.
\end{itemize}

For the running time to partition $\mathcal{P}$ from $C$ to compute and to sort $\{ CP_i\}_{I=1}^n$ 
takes $\mathcal{O}(n\log{n})$. Algorithm~\ref{Alg:Upper_Lower} is computing the upper and lower edges of each section which both take time $\mathcal{O}(n)$. Now in lines 4 - 7 of Algorithm~\ref{Alg:Max_Inter_Dir_theta}, the algorithm either finds or approximates the local maximum of each partition $LMR_i$. In case 2, the algorithm either runs at most 5 times Newton Raphson or in case 3 it runs as Theorem~\ref{Theor:Inter_Two_Sectors} states in time $\mathcal{O}(|b-a|~\log{\varepsilon}/\phi)$, where the length of the interval cannot be more than $\pi$ because $\mathcal{P}$ is a convex polygon. So the running time of the algorithm is $\mathcal{O}(n\log{n}+n\log{\varepsilon}/\phi)=$ $\mathcal{O}(n(\log{n}+\log{\varepsilon}/\phi))$.
\end{proof}

\noindent
{\bf Conclusion:}
The designed methods of finding the maximal intersection of a convex polygon with a rotating FOV directly can be applied to the special case of non-convex polygons where a rotating FOV could have only one component intersection and would not split the intersection into several disconnected parts. In this case, the presented methods still work because there is no restriction on gradients in the independent sub-problems, and the polygon can be decomposed to the already studied equivalent classes. On the other hand, the intersection of a non-convex polygon with a rotating FOV could create disconnected areas. However, the area functions are still applicable. The distinctive difference is that for every equivalence class, many intersection components may appear that lead to the calculation of the summation of multiple Left and Right functions. Finally, to complete the solution in this case, one must be aware of the other computational geometry problem of identifying these disconnected areas. 


\section{Technical Calculations and Proofs}
\label{sec:proof_calc}
In this section, we will provide proofs for Theorem~\ref{theor:Analytical_Area}, and Propostion~\ref{proposition:difcil_Ath}. Before proving Theorem~\ref{theor:Analytical_Area}, we will prove a formula that calculates the area of any convex polygon.

\begin{lemma}
\label{lemma:Shoelace}
The area of a polygon $\mathcal{P}=(P_1,\ldots,P_n)$, is given by the following formula:
\begin{align}
\label{eq:Polygon_Area}
poly\_area((x_1,y_1),\ldots,(x_n,y_n)) &=
\frac{1}{2} \cdot \left(
x_n ~y_1 + \sum_{i=1}^{n-1}{x_i~y_{i+1}} - 
x_1 y_n - \sum_{i=1}^{n-1}{x_{i+1}~y_i}
\right)
\nonumber
\\
&= 
\frac{1}{2} \left( det(P_n,P_1) + \sum_{i=1}^{n-1} det(P_i,P_{(i+1}) \right)
\end{align}
\end{lemma}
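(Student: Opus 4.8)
The plan is to prove equation~(\ref{eq:Polygon_Area}) by induction on the number of vertices $n$, relying on the fact that for two points $A=(a_1,a_2)$ and $B=(b_1,b_2)$ the quantity $\det(A,B)=a_1 b_2 - a_2 b_1$ equals twice the signed area of the triangle $OAB$ (with $O$ the origin) and is antisymmetric, $\det(A,B)=-\det(B,A)$. The first and second expressions in the statement are identical once each $\det(P_i,P_{i+1})$ is written out, so it suffices to establish the determinant form $\frac{1}{2}\bigl(\det(P_n,P_1)+\sum_{i=1}^{n-1}\det(P_i,P_{i+1})\bigr)$.

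For the base case $n=3$, I would verify directly that a triangle $P_1P_2P_3$ in counterclockwise order has area $\frac{1}{2}\bigl(\det(P_1,P_2)+\det(P_2,P_3)+\det(P_3,P_1)\bigr)$. This is the standard cross-product area formula: translating so that $P_1$ sits at the origin gives area $\frac{1}{2}\det(P_2-P_1,P_3-P_1)$, and expanding the determinant by bilinearity reproduces the three cyclic terms, with the translation artefacts cancelling by antisymmetry.

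For the inductive step, assume the formula holds for every convex polygon with fewer than $n$ vertices, and let $\mathcal{P}=(P_1,\ldots,P_n)$ be convex and counterclockwise. By convexity the diagonal $P_1P_{n-1}$ lies inside $\mathcal{P}$ and splits it into the convex $(n-1)$-gon $(P_1,\ldots,P_{n-1})$ and the triangle $(P_{n-1},P_n,P_1)$, whose interiors are disjoint and whose union is $\mathcal{P}$, so their areas add. Applying the induction hypothesis to the sub-polygon and the base case to the triangle, the two diagonal contributions $\det(P_{n-1},P_1)$ and $\det(P_1,P_{n-1})$ cancel by antisymmetry, and the remaining terms assemble into exactly $\frac{1}{2}\bigl(\det(P_n,P_1)+\sum_{i=1}^{n-1}\det(P_i,P_{i+1})\bigr)$, which closes the induction.

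The only genuinely geometric ingredient — and the one I would state carefully rather than compute — is that convexity guarantees the diagonal $P_1P_{n-1}$ remains inside $\mathcal{P}$, so that the decomposition is into non-overlapping pieces whose \emph{unsigned} areas truly add; everything else is the antisymmetric bookkeeping of the determinant terms. An alternative route interprets $\sum_i \frac{1}{2}\det(P_i,P_{i+1})$ directly as the signed area through the discrete Green / shoelace identity and sidesteps the diagonal argument entirely, but the induction keeps the proof self-contained.
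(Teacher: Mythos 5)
Your proof is correct and follows essentially the same route as the paper's: induction on $n$ with the base case handled by the cross-product formula $\tfrac{1}{2}\det(P_2-P_1,P_3-P_1)$ and the inductive step peeling off the triangle $(P_{n-1},P_n,P_1)$ so that the diagonal terms cancel by antisymmetry. Your explicit remark that convexity is needed for the diagonal to lie inside $\mathcal{P}$ (so that unsigned areas genuinely add) is a point the paper's proof leaves implicit, and is worth keeping.
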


\begin{proof}
We will prove this formula with the use of induction on the number of vertices of a polygon.

\textbf{Base of the Induction:}
First, with the following claim, we prove the base of the induction for $n=3$.
The area of a triangle with coordinates $A(x_1,y_1)$,$B(x_2,y_2)$ and $C(x_3,y_3)$ is:
\begin{align*}
poly\_area(ABC) =
    \frac{1}{2} ( x_1y_2+x_2y_3+x_3y_1 - x_1y_3 - x_2y_1 - x_3y_2 )
\end{align*}
Indeed, by defining the two vectors $v_1 = B-A$ and $v_2=C-A$ it well known (see \cite{Axler1997}) that the area of the triangle $ABC$ is
\begin{align*}
    \frac{1}{2}det(v_1,v_2) = \frac{1}{2}
    \begin{tabular}{|c c|}
        $(x_2-x_1)$ & $(x_3-x_1)$ \\
        $(y_2-y_1)$ & $(y_3-y_1)$
    \end{tabular}
    =
    \frac{1}{2} ( x_1y_2+x_2y_3+x_3y_1 - x_1y_3 - x_2y_1 - x_3y_2 )
\end{align*}

\textbf{Induction Hypothesis:}
Let us assume that the polygon $P_1,\ldots,P_n$ with $n$ vertices in a counterclockwise order has an area that is given from:  
\begin{align*}
poly\_area(P_1,\ldots,P_n) &=
\frac{1}{2} \cdot det(P_i,P_{(i~mod~n)+1})
\end{align*}

\textbf{Induction Step:} We will prove that the above equation holds for the polygon $P_1,\ldots,P_{n+1}$ with $n+1$ vertices.
The area of the polygon $P_1,\ldots,P_{n+1}$ is the area of $P_1,\ldots,P_{n}$ plus the triangle's $P_nP_{n+1}P_1$ area, this means that

\begin{align*}
    poly\_area(P_1,\ldots,P_{n+1}) = 
    &\frac{1}{2} \cdot \left(
    x_n ~y_1 + \sum_{i=1}^{n-1}{x_i~y_{i+1}} - 
    x_1 y_n - \sum_{i=1}^{n-1}{x_{i+1}~y_i}
    \right) +
    \\
    +
    &\frac{1}{2} ( x_ny_{n+1}+x_{n+1}y_1+x_1y_n - x_ny_1 - x_{n+1}y_n - x_1y_{n+1} )
    \\
    =
    &\frac{1}{2} \cdot \left(
    x_{n+1} ~y_1 + \sum_{i=1}^{n}{x_i~y_{i+1}} - 
    x_1 y_{n+1} - \sum_{i=1}^{n}{x_{i+1}~y_i}
    \right)
\end{align*}
\end{proof}

\noindent
Proof of Theorem~\ref{theor:Analytical_Area}

\begin{proof}
The fact that one of the sectors has a fixed direction means that it can be defined using two intersecting lines $\varepsilon_1$ and $\varepsilon_2$. We will denote with $\varepsilon_1$ (resp. $\varepsilon_2$) the left (resp. the right) semi-line of $S(K,\theta_K,\phi_K)$. Notice that the slopes of $\varepsilon_1,\varepsilon_2$ are of angle $\theta_K$ and $\theta_K+\phi_K$ respectively. So we can define the semi-lines of $S(K,\theta_K,\phi_K)$ using a point, and a slope. The theorem is proven through the following lemma.

\begin{figure}[ht]
    \centering
    \scalebox{0.7}{
    \definecolor{yqqqqq}{rgb}{0.5019607843137255,0.,0.}
\definecolor{xdxdff}{rgb}{0.49019607843137253,0.49019607843137253,1.}
\definecolor{qqttcc}{rgb}{0.,0.2,0.8}
\definecolor{qqccqq}{rgb}{0.,0.8,0.}
\definecolor{zzttqq}{rgb}{0.6,0.2,0.}
\definecolor{qqzzff}{rgb}{0.,0.6,1.}
\definecolor{qqwuqq}{rgb}{0.,0.39215686274509803,0.}
\definecolor{uuuuuu}{rgb}{0.26666666666666666,0.26666666666666666,0.26666666666666666}
\definecolor{qqzzqq}{rgb}{0.,0.6,0.}
\definecolor{ududff}{rgb}{0.30196078431372547,0.30196078431372547,1.}
\begin{tikzpicture}[line cap=round,line join=round,>=triangle 45,x=1.0cm,y=1.0cm]
\clip(-11.7,-0.5) rectangle (0.5,8.);
\draw [shift={(-5.159393740790174,4.952143055386355)},line width=1.6pt,color=qqwuqq,fill=qqwuqq,fill opacity=0.10000000149011612] (0,0) -- (0.:0.49111569139188893) arc (0.:28.557705384660107:0.49111569139188893) -- cycle;
\draw [shift={(-5.159393740790174,0.21531696088107866)},line width=2.pt,color=zzttqq,fill=zzttqq,fill opacity=0.10000000149011612] (0,0) -- (56.71501006225292:0.49111569139188893) arc (56.71501006225292:103.78711748401521:0.49111569139188893) -- cycle;
\draw [shift={(-5.159393740790174,0.21531696088107866)},line width=2.pt,color=qqttcc,fill=qqttcc,fill opacity=0.10000000149011612] (0,0) -- (0.:0.49111569139188893) arc (0.:56.71501006225292:0.49111569139188893) -- cycle;
\fill[line width=2.pt,color=zzttqq,fill=zzttqq,fill opacity=0.10000000149011612] (-5.803401263375599,2.839793156357704) -- (-3.0039124901504244,3.4985976737639572) -- (-0.3207801583523553,7.585609162493171) -- (-6.184794752758505,4.394057817703224) -- cycle;
\draw [shift={(-1.9572383529297288,3.744911806704338)},line width=2.pt,color=qqwuqq,fill=qqwuqq,fill opacity=0.10000000149011612] (0,0) -- (0.:0.49111569139188893) arc (0.:13.242483943385635:0.49111569139188893) -- cycle;
\draw [line width=2.pt,domain=-11.7:0.5] plot(\x,{(--76.79914092070155--5.386298304762439*\x)/9.8965451144985});
\draw [line width=2.pt,domain=-11.7:0.5] plot(\x,{(-27.584748764628948-1.5435769155688561*\x)/-6.559193404983679});
\draw [line width=0.4pt,color=qqzzqq] (-5.159393740790174,-0.5) -- (-5.159393740790174,8.);
\draw [shift={(-5.159393740790174,4.952143055386355)},line width=1.6pt,color=qqwuqq] (0.:0.49111569139188893) arc (0.:28.557705384660107:0.49111569139188893);
\draw [shift={(-5.159393740790174,4.952143055386355)},line width=1.6pt,color=qqwuqq] (0.:0.3928925531135111) arc (0.:28.557705384660107:0.3928925531135111);
\draw [line width=2.pt,color=qqzzff,domain=-11.7:0.5] plot(\x,{(--39.06807502424048--7.370292201612092*\x)/4.8386135824378185});
\draw [line width=2.pt,color=qqzzff,domain=-11.7:0.5] plot(\x,{(--56.19305667643323--11.004096104127795*\x)/-2.7002419311424806});
\draw [line width=0.4pt,color=qqccqq,domain=-11.7:0.5] plot(\x,{(--0.21531696088107866-0.*\x)/1.});
\draw [line width=2.pt,color=qqwuqq] (-0.8880080047409551,3.744911806704338)-- (-1.9572383529297288,3.744911806704338);
\draw [line width=2.pt,color=qqwuqq] (-5.159393740790174,4.952143055386355)-- (-4.303766506544882,4.952143055386355);
\draw [line width=0.4pt,color=yqqqqq,domain=-11.7:0.5] plot(\x,{(--5.249662424407623--1.282377421212487*\x)/-6.347050461489317});
\draw [line width=0.4pt,color=yqqqqq,domain=-11.7:0.5] plot(\x,{(--3.0233710280431114--0.5442604709467226*\x)/1.});
\begin{large}
\draw [fill=ududff] (-1.9572383529297288,3.744911806704338) circle (2.5pt);
\draw[color=ududff] (-2.0326238389284805,3.4321247394979766) node {$B$};
\draw [fill=ududff] (-5.159393740790174,0.21531696088107866) circle (2.5pt);
\draw[color=ududff] (-5.977919893109988,-0.005685100245273106) node {$C$};
\draw [fill=uuuuuu] (-5.159393740790174,4.952143055386355) circle (2.0pt);
\draw[color=uuuuuu] (-5.588581063439721,5.331105412879962) node {$E'$};
\draw [fill=uuuuuu] (-5.159393740790174,2.991347636297303) circle (2.0pt);
\draw[color=uuuuuu] (-4.7973896185374735,3.41575421645158) node {$E$};
\draw[color=qqwuqq] (-4.226273927145584,5.183770705462393) node {$\omega$};
\draw [fill=uuuuuu] (-0.3207801583523553,7.585609162493171) circle (2.5pt);
\draw[color=uuuuuu] (-0.02723476574493397,7.385606055202713) node {$P_2$};
\draw[color=zzttqq] (-4.962947464233419,1.0420283747240981) node {$\phi$};
\draw[color=qqttcc] (-4.308126542377566,0.5509126833322053) node {$\theta$};
\draw [fill=uuuuuu] (-6.184794752758505,4.394057817703224) circle (2.0pt);
\draw[color=uuuuuu] (-6.755519737813813,4.782692890825682) node {$P_3$};
\draw [fill=uuuuuu] (-5.803401263375599,2.839793156357704) circle (2.0pt);
\draw[color=uuuuuu] (-6.21173875383949,2.272672279562164) node {$P_4$};
\draw [fill=uuuuuu] (-3.0039124901504244,3.4985976737639572) circle (2.0pt);
\draw[color=uuuuuu] (-2.87570577581789,3.0474174479076606) node {$P_1$};
\draw [fill=uuuuuu] (-11.506444202279491,1.4976943820935658) circle (2.0pt);
\draw[color=uuuuuu] (-11.593009298023919,2.0406302805542804) node {$K$};
\draw[color=qqwuqq] (-1.5087671014437989,4.316132984003383) node {$\beta$};
\draw [fill=xdxdff] (-2.257669971942686,6.531436600376588) circle (2.5pt);
\draw[color=xdxdff] (-2.1472175002532543,6.837193533148433) node {$D$};
\draw[color=yqqqqq] (-8.5,0.6) node {$\varepsilon_{\theta_{max}}$};
\draw[color=yqqqqq] (-0.7,2) node {$\varepsilon_{\theta_{min}}$};
\draw[color=qqzzff] (-7.2035189370316445,7.413981125838406) node {$\varepsilon_\ell$};
\draw[color=qqzzff] (-1,6) node {$\varepsilon_r$};
\draw[color=uuuuuu] (-8.5,3.5) node {$\varepsilon_1$};
\draw[color=uuuuuu] (-8.5,1.8) node {$\varepsilon_2$};
\draw[color=zzttqq] (-3.9209044755993984,4.349722659347877) node {$A(\theta,\phi))$};
\draw[color=qqwuqq] (-4.6,7.5) node {$\varepsilon_y$};
\end{large}
\end{tikzpicture}
    }
    \caption{The area of intersection of two sectors. There are the two lines $\varepsilon_1$ and $\varepsilon_2$, and the sector $S(C,\theta,\phi)$ with the blue lines. The positive angles $\omega$ and $\beta$ correspond to the slopes of $\varepsilon_1$ and $\varepsilon_2$ respectively. Function $A(\theta,\phi) $ is defined between the lines $\varepsilon_{\theta_{max}}$, and $\varepsilon_{\theta_{min}}$ which correspond to angles $\omega$ and $\hat{CK}$, where $\theta \in \left(\omega,\hat{CK}\right)$.Finally, $E' = \varepsilon_y \cap \varepsilon_1$, $E = \varepsilon_y \cap \varepsilon_2$ and $d_1=|CE'|^2$, $d_2=|CE|^2$.}
    \label{fig:General_Case22}
\end{figure}
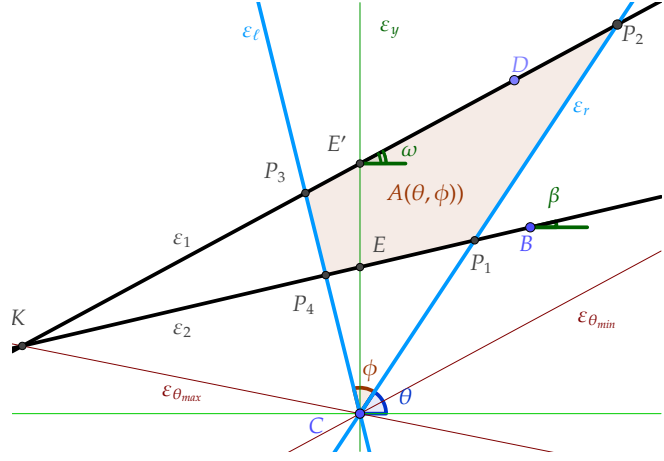

\begin{lemma}
\label{lemma:General_Case2}
Let $B=(x_b,y_b)$ and $D=(x_d,y_d)$ be two points on the plane, $\beta \leq \omega\in (-\pi/2,\pi/2)$ two positive angles, and a sector $S(C,\theta,\phi)$ with center $C=(x_0,y_0)$ and inner angle $\phi \in (0,\pi)$. The line $\varepsilon_1$ is defined by the point $D$, and the slope $\tan{\omega}$; and $\varepsilon_2$ from $B$, and $\tan{\beta}$ respectively.
The quadrilateral's area created by $\varepsilon_1 \cap \varepsilon_2 \cap S$, is given from the following function:
\begin{align}
\label{eq:General_Case_Alt_lemma}
    &A(\theta,\phi) = 
    \frac{d_1 \sin{\phi} ~\cos^2{\omega} }{2\sin{(\theta+\phi-\omega)}\sin{(\theta-\omega)}} + \frac{d_2 \sin{\phi} ~\cos^2{\beta} }{2\sin{(\theta+\phi-\beta)}\sin{(\theta-\beta)}} 
    && (\theta, \phi) \in \mathcal{R}
\end{align}
where $\mathcal{R} = (\theta_{min},\theta_{max}-\phi) \times (0,\pi)$, $K=(x_k,y_k)$ is the point of $\varepsilon_1\cap\varepsilon_2$
\begin{align*}
    \theta_{min}= \begin{cases}
        \omega \hspace{1.3cm}
        \mbox{ if } \frac{x_0-x_k}{|x_0-x_k|}>0
        \\
        \hat{CK} \hspace{1.1cm}
        \mbox{ if } \frac{x_0-x_k}{|x_0-x_k|}<0
    \end{cases}
    \hspace{2cm}
    \theta_{max}= \begin{cases}
        \hat{CK} \hspace{0.75cm}
        \mbox{ if } \frac{x_0-x_k}{|x_0-x_k|}>0
        \\
        \beta \hspace{1.1cm} 
        \mbox{ if } \frac{x_0-x_k}{|x_0-x_k|}<0
    \end{cases}
\end{align*}
\begin{align*}
    d_1&= \frac{x_0-x_k}{|x_0-x_k|} (\tan{\omega}(x_0-x_d)+y_d-y_0)^2,
    &d_2= \frac{x_k-x_0}{|x_k-x_0|} (\tan{\beta}(x_0-x_b)+y_b-y_0)^2 
\end{align*}
\end{lemma}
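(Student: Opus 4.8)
The plan is to locate the four vertices of the quadrilateral $\varepsilon_1\cap\varepsilon_2\cap S$ explicitly as the intersections of the two semi-lines of the rotating sector with the two fixed lines, and then read off its area. The paper's route is to feed these four points into the shoelace formula (Lemma~\ref{lemma:Shoelace}) and simplify; I would instead lead with the equivalent but more transparent observation that the quadrilateral is a difference of two triangles sharing the apex $C$, since this explains directly why the answer in~(\ref{eq:General_Case_Alt_lemma}) splits into two structurally identical terms. Writing $\varepsilon_r,\varepsilon_\ell$ for the rays from $C$ with slope angles $\theta$ and $\theta+\phi$, the region is the opening-$\phi$ wedge at $C$ truncated by $\varepsilon_1$ and $\varepsilon_2$; its corners are $\varepsilon_r\cap\varepsilon_1,\ \varepsilon_\ell\cap\varepsilon_1,\ \varepsilon_r\cap\varepsilon_2,\ \varepsilon_\ell\cap\varepsilon_2$. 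Let $T_\omega$ and $T_\beta$ be the triangles with apex $C$, apex angle $\phi$, and bases on $\varepsilon_1$ and $\varepsilon_2$ respectively. Since one line lies between $C$ and the other along both rays, the quadrilateral equals the set difference of these triangles, so its area is $|\mathrm{Area}(T_\omega)-\mathrm{Area}(T_\beta)|$.

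For $T_\omega$ I would use $\mathrm{Area}(T_\omega)=\tfrac12\sin\phi\,|CP|\,|CP'|$ with $P=\varepsilon_r\cap\varepsilon_1$ and $P'=\varepsilon_\ell\cap\varepsilon_1$. Introducing $E'=\varepsilon_y\cap\varepsilon_1$ as in Figure~\ref{fig:General_Case22} and applying the law of sines in the triangles $CE'P$ and $CE'P'$ — where the angle at $E'$ is $\pi/2-\omega$ and the angles at $P,P'$ are $\theta-\omega$ and $\theta+\phi-\omega$ — gives $|CP|=|CE'|\cos\omega/\sin(\theta-\omega)$ and $|CP'|=|CE'|\cos\omega/\sin(\theta+\phi-\omega)$. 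Multiplying yields $\mathrm{Area}(T_\omega)=|CE'|^2\cos^2\omega\,\sin\phi\,\big/\,\big(2\sin(\theta+\phi-\omega)\sin(\theta-\omega)\big)$. A one-line coordinate computation of $E'$ on $\varepsilon_1$ then shows $|CE'|=|\tan\omega(x_0-x_d)+y_d-y_0|$, i.e. $|CE'|^2=|d_1|$, which is exactly the first term of~(\ref{eq:General_Case_Alt_lemma}); the identical argument with $\varepsilon_2,\beta,E$ produces the second term with $|d_2|$.

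The delicate part — and the main obstacle — is the sign and domain bookkeeping rather than the trigonometry. Whether the quadrilateral is $T_\omega$ minus $T_\beta$ or the reverse, equivalently whether the two terms are added or subtracted, is governed by which line is nearer $C$, and this is decided by the position of $C$ relative to the apex $K=\varepsilon_1\cap\varepsilon_2$. I would encode this into the signed constants $d_1,d_2$ through the factors $\mathrm{sign}(x_0-x_k)$ and $\mathrm{sign}(x_k-x_0)$, so that the geometric difference of two positive areas collapses into the single additive formula~(\ref{eq:General_Case_Alt_lemma}). Finally I would fix $\mathcal{R}=(\theta_{min},\theta_{max}-\phi)\times(0,\pi)$ by describing precisely the directions $\theta$ for which both rays meet both lines on the correct side of $C$ so that the intersection is a genuine bounded quadrilateral: as $\theta$ approaches the direction parallel to a line a denominator such as $\sin(\theta-\omega)$ vanishes and a corner escapes to infinity, while as $\theta$ approaches the direction along $CK$ the quadrilateral degenerates; the stated case split for $\theta_{min},\theta_{max}$ on the sign of $x_0-x_k$ is exactly the record of these two extreme directions. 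All remaining steps are mechanical simplification, so the care concentrates entirely in these orientation and degeneracy checks.
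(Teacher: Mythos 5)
Your proof is correct but follows a genuinely different route from the paper's. The paper computes the four vertices $P_1=\varepsilon_r\cap\varepsilon_2$, $P_2=\varepsilon_r\cap\varepsilon_1$, $P_3=\varepsilon_\ell\cap\varepsilon_1$, $P_4=\varepsilon_\ell\cap\varepsilon_2$ in coordinates (as functions of $\tan\theta$, $\tan(\theta+\phi)$, $\tan\omega$, $\tan\beta$, $d_1$, $d_2$), feeds them into the shoelace formula of Lemma~\ref{lemma:Shoelace}, groups the result as $(y_3-y_1)(x_2-x_4)+(y_4-y_2)(x_3-x_1)$, and only at the last step converts the tangent-difference quotients into the sine form via the identity $\tan a-\tan b=\sin(a-b)/(\cos a\cos b)$. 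You instead write the quadrilateral as $T_\omega\setminus T_\beta$, two triangles sharing apex $C$ with apex angle $\phi$, and obtain each area as $\tfrac12|CP||CP'|\sin\phi$ with the side lengths from the law of sines in $CE'P$ and $CE'P'$; this lands directly on the sine form and makes it structurally obvious why the answer is a sum of two terms, one per fixed line, with $|CE'|^2=|d_1|$ and $|CE|^2=|d_2|$. (Minor point: the angle of your triangle at $E'$ is $\pi/2+\omega$ rather than $\pi/2-\omega$ in the drawn configuration, but since $\sin(\pi/2\pm\omega)=\cos\omega$ the law-of-sines step is unaffected.) What the paper's route buys is that it is purely mechanical and needs no containment argument; what yours buys is brevity and transparency, at the price of having to justify that $T_\beta\subseteq T_\omega$ (i.e.\ that $\varepsilon_2$ separates $C$ from $\varepsilon_1$ inside the wedge, which holds under the standing hypothesis $C\notin S(K,\theta_K,\phi_K)$ in the drawn case) so that the set difference really is the quadrilateral. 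Like the paper, you treat only one sign case explicitly and assert rather than derive the signed bookkeeping in $d_1,d_2$ and the endpoints $\theta_{min},\theta_{max}$; that is no worse than the paper's ``without loss of generality'' for $\frac{x_0-x_k}{|x_0-x_k|}>0$, but if you write this up you should at least state the containment claim and check the mirrored case.
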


\noindent
\textit{Proof of Lemma~\ref{lemma:General_Case2}}: 
Without loss of generality, we will prove the lemma in the case where $\frac{x_0-x_k}{|x_0-x_k|}>0$.
The equations of the lines $\varepsilon_1$, $\varepsilon_2$, $\varepsilon_\ell$, and $\varepsilon_r$ (see Figure~\ref{fig:General_Case22}) are:
\begin{align*}
    &\varepsilon_1:\ y = \tan{\omega}~(x-x_d)+y_d &&
    &\varepsilon_2:\ y = \tan{\beta}~(x-x_b)+y_b
    \\
    &\varepsilon_r:\ y = \tan{\theta}~(x-x_0)+y_0 &&
    &\varepsilon_\ell:\ y = \tan{(\theta+\phi)}~(x-x_0)+y_0 
\end{align*}
Line $\varepsilon_2$ passes through the point $E = (x_0,y_e)$ where $y_e$ can be expressed as $y_e = y_0+d_1$, respectively $\varepsilon_1$ passes through the point 
$E' = (x_0,y'_e)$, where $y'_e = y_0+d_1+d_2$, $d_1,d_2>0$.
\begin{align*}
    \varepsilon_1:\ y = \tan{\omega}~(x-x_0)+y_0+d_1+d_2 &&
    \varepsilon_2:\ y = \tan{\beta}~(x-x_0)+y_0+d_1
\end{align*}
We begin by computing the coordinates of the points $P_1,P_2,P_3,P_4$ as the intersections of the lines $(\varepsilon_r \cap \varepsilon_2)$, $(\varepsilon_r \cap \varepsilon_1)$, $(\varepsilon_\ell \cap \varepsilon_1)$ and $(\varepsilon_\ell \cap \varepsilon_2)$ respectively.
\begin{align*}
    &(\varepsilon_r \cap \varepsilon_2):\tan{\theta}~(x-x_0)+y_0 = \tan{\beta}~(x-x_0)+y_0+d_1 \Rightarrow (\tan{\theta}-\tan{\beta})~(x-x_0)=d_1
    \\
    &
    \Rightarrow
    x_1 = x_0+\frac{d_1}{\tan{\theta}-\tan{\beta}},
    \hspace{4.3cm}
    y_1 = y_0+\frac{d_1\tan{\theta}}{\tan{\theta}-\tan{\beta}}
    \\
    &(\varepsilon_r \cap \varepsilon_1):\tan{\theta}~(x-x_0)+y_0 = \tan{\omega}~(x-x_0)+y_0+d_1+d_2 \Rightarrow
    \\
    &\Rightarrow
    (\tan{\theta}-\tan{\omega})~(x-x_0)=d_1+d_2\Rightarrow \\
    &\Rightarrow x_2 = x_0+\frac{d_1+d_2}{\tan{\theta}-\tan{\omega}}, \hspace{+4.1cm}  y_2 = y_0+\frac{(d_1+d_2)\tan{\theta}}{\tan{\theta}-\tan{\omega}}
    \\
    &(\varepsilon_\ell \cap \varepsilon_1):\tan{(\theta+\phi)}~(x-x_0)+y_0 = \tan{\omega}~(x-x_0)+y_0+d_1+d_2 \Rightarrow
    \\
    &\Rightarrow
    (\tan{(\theta+\phi)}-\tan{\omega})~(x-x_0)=d_1+d_2\Rightarrow \\
    &\Rightarrow x_3 = x_0+\frac{d_1+d_2}{\tan{(\theta+\phi)}-\tan{\omega}}, \hspace{+2.2cm}  y_3 = y_0+\frac{(d_1+d_2)\tan{(\theta+\phi)}}{\tan{(\theta+\phi)}-\tan{\omega}}
\\
    &(\varepsilon_\ell \cap \varepsilon_2):\tan{(\theta+\phi)}~(x-x_0)+y_0 = \tan{\beta}(x-x_0) + y_0 + d_1 \Rightarrow 
    \\
    &\Rightarrow
    (\tan{(\theta+\phi)}-\tan{\beta})(x-x_0) = d_1
    \\
    &\Rightarrow    
    x_4 = x_0+\frac{d_1}{\tan{(\theta+\phi)-\tan{\beta}}}, \hspace{2.4cm}
    y_4 = y_0+\frac{d_1\tan{(\theta+\phi)}}{\tan{(\theta+\phi)-\tan{\beta}}}
    \\
    &(\varepsilon_1 \cap \varepsilon_2):\tan{\omega}~(x-x_0)+y_0+d_1+d_2 = \tan{\beta}(x-x_0) + y_0 + d_1 \Rightarrow \\
    &\Rightarrow 
    x_{int} = x_0+\frac{d_2}{\tan{\beta}-\tan{\omega}}, \hspace{2.8cm}
    y_{int} = y_0+d_1+ \frac{d_2\tan{\beta}}{\tan{\beta}-\tan{\omega}}
\end{align*}

\noindent
Now we can use the shoelace formula that computes the area of a polygon with vertices, ordered counterclockwise, $P_1\ldots P_n$, from Lemma~\ref{lemma:Shoelace}. 
\begin{align}
    poly\_area(P_1P_2\ldots P_n) = \frac{1}{2} \left(
    det(P_n,P_1)+
    \sum_{i=1}^{n}det(P_iP_{i+1})
    \right)
\end{align}
to compute the area of the quadrilateral $P_1 P_2 P_3 P_4$
\begin{align}
\label{eq:2_poly_area_proof}
    2\cdot poly&\_area((x_1,y_1),\ldots,(x_4,y_4)) = 
    x_1y_2+x_2y_3+x_3y_4+x_4y_1 - y_1x_2-y_2x_3-y_3x_4-y_4x_1 = \nonumber
    \\
    =& y_1(x_4-x_2)+ y_2(x_1-x_3)+ y_3(x_2-x_4)+y_4(x_3-x_1)
\end{align}

    $
   \bullet \hspace{0.6cm} y_1(x_4-x_2) = y_1
    \left( x_0+\frac{d_1}{\tan{(\theta+\phi)}-\tan{\beta}} -
    x_0-\frac{d_1+d_2}{\tan{\theta}-\tan{\omega}}
    \right)=
    \left(
    y_0+\frac{d_1\tan{\theta}}{\tan{\theta}-\tan{\beta}}
    \right)
    \left( \frac{d_1}{\tan{(\theta+\phi)-\tan{\beta}}} -
    \frac{d_1+d_2}{\tan{\theta}-\tan{\omega}}
    \right)
    $

    $
    \bullet \hspace{0.6cm} y_2(x_1-x_3) = y_2
    \left( x_0+\frac{d_1}{\tan{\theta}-\tan{\beta}} -
    x_0-\frac{d_1+d_2}{\tan{(\theta+\phi)}-\tan{\omega}}    
    \right)=%
    \left(
    y_0+\frac{(d_1+d_2)\tan{\theta}}{\tan{\theta}-\tan{\omega}}
    \right)
    \left( \frac{d_1}{\tan{\theta}-\tan{\beta}} -
    \frac{d_1+d_2}{\tan{(\theta+\phi)}-\tan{\omega}}
    \right)
    $
    
    $\bullet \hspace{0.6cm} y_3(x_2-x_4) = y_3
    \left( x_0+\frac{d_1+d_2}{\tan{\theta}-\tan{\omega}} -
    x_0-\frac{d_1}{\tan{(\theta+\phi)}-\tan{\beta}}
    \right)=
    \left(
    y_0+\frac{(d_1+d_2)\tan{(\theta+\phi)}}{\tan{(\theta+\phi)}-\tan{\omega}}
    \right)
    \left( \frac{d_1+d_2}{\tan{\theta}-\tan{\omega}} -
    \frac{d_1}{\tan{(\theta+\phi)}-\tan{\beta}}
    \right)$

    $\bullet \hspace{0.6cm} y_4(x_3-x_1) = y_4
    \left( 
    x_0+\frac{d_1+d_2}{\tan{(\theta+\phi)}-\tan{\omega}}-
    x_0-\frac{d_1}{\tan{\theta}-\tan{\beta}}
    \right)=
    \left(
    y_0+\frac{d_1\tan{(\theta+\phi)}}{\tan{(\theta+\phi)}-\tan{\beta}}
    \right)
    \left( 
    \frac{d_1+d_2}{\tan{(\theta+\phi)}-\tan{\omega}}-
    -\frac{d_1}{\tan{\theta}-\tan{\beta}}
    \right)$

$    \bullet \hspace{0.6cm} y_1(x_4-x_2) + y_3(x_2-x_4) = (y_3-y_1)(x_2-x_4) = 
    \left(
    \frac{(d_1+d_2)\tan{(\theta+\phi)}}{\tan{(\theta+\phi)}-\tan{\omega}} -
    \frac{d_1\tan{\theta}}{\tan{\theta}-\tan{\beta}}
    \right)
    \left( \frac{d_1+d_2}{\tan{\theta}-\tan{\omega}} -
    \frac{d_1}{\tan{(\theta+\phi)}-\tan{\beta}}
    \right)=$

$    \hspace{0.9cm} =
    \frac{(d_1+d_2)^2 \tan{(\theta+\phi)}}{(\tan{(\theta+\phi)}-\tan{\omega})(\tan{\theta}-\tan{\omega})} -
    \frac{d_1(d_1+d_2)\tan{(\theta+\phi)}}{(\tan{(\theta+\phi)}-\tan{\omega})(\tan{(\theta+\phi)}-\tan{\beta})} +
    \frac{d_1^2 \tan{\theta}}{(\tan{(\theta+\phi)}-\tan{\beta})(\tan{\theta}-\tan{\beta})} -
    \frac{d_1(d_1+d_2)\tan{\theta}}{(\tan{\theta}-\tan{\omega})(\tan{\theta}-\tan{\beta})}
$

$    \bullet \hspace{0.3cm} y_2(x_1-x_3) + y_4(x_3-x_1) = (y_4-y_2)(x_3-x_1) = 
    \left(
    \frac{d_1\tan{(\theta+\phi)}}{\tan{(\theta+\phi)}-\tan{\beta}} -
    \frac{(d_1+d_2)\tan{\theta}}{\tan{\theta}-\tan{\omega}}
    \right)
    \left( 
    \frac{d_1+d_2}{\tan{(\theta+\phi)}-\tan{\omega}}
    -\frac{d_1}{\tan{\theta}-\tan{\beta}}
    \right)=$

$    \hspace{0.9cm}=
    -\frac{(d_1+d_2)^2 \tan{\theta}}{(\tan{(\theta+\phi)}-\tan{\omega})(\tan{\theta}-\tan{\omega})} - 
    \frac{d_1^2 \tan{(\theta+\phi)}}{(\tan{(\theta+\phi)}-\tan{\beta})(\tan{\theta}-\tan{\beta})}+
    \frac{d_1(d_1+d_2)\tan{(\theta+\phi)}}{(\tan{(\theta+\phi)}-\tan{\omega})(\tan{(\theta+\phi)}-\tan{\beta})} + 
    \frac{d_1(d_1+d_2)\tan{\theta}}{(\tan{\theta}-\tan{\omega})(\tan{\theta}-\tan{\beta})}$
\\[1.5ex]
\noindent
By plugging our calculation to equation~(\ref{eq:2_poly_area_proof}):
\begin{align*}
    2\cdot poly\_area((x_1,y_1),\ldots,(x_4,y_4)) = (y_3-y_1)(x_2-x_4) +(y_4-y_2)(x_3-x_1)
     \nonumber
\end{align*}
\begin{align*}
    \Rightarrow poly\_area = 
     \frac{(d_1+d_2)^2 (\tan{(\theta+\phi)}-\tan{\theta})}{2(\tan{(\theta+\phi)}-\tan{\omega})(\tan{\theta}-\tan{\omega})} + \frac{d_1^2 (\tan{\theta}-\tan{(\theta+\phi)})}{2(\tan{(\theta+\phi)}-\tan{\beta})(\tan{\theta}-\tan{\beta})}
\end{align*}

\noindent
If we express the lines $\varepsilon_1$ and $\varepsilon_2$ using the points $B=(x_b,y_b)$  and $D=(x_d,y_d)$ respectively then if we take into account that $E\in \varepsilon_2$ and $E'\in \varepsilon_1$ then
\begin{align*}
    y_e = (x_0-x_b)\tan{\beta}+y_b &\Rightarrow d_1 = (x_0-x_b)\tan{\beta}+y_b-y_0 \\
    y'_e = (x_0-x_d)\tan{\omega}+y_d &\Rightarrow d_1+d_2 = (x_0-x_d)\tan{\omega}+y_d-y_0
\end{align*}
Thus obtaining the equation
\small
\begin{align}
\label{eq:General_Case_2}
    A(C,B,D,\beta,\omega,\theta,\phi)
    =&
    \frac{(\tan{\omega}(x_0-x_d)+y_d-y_0)^2 (\tan{(\theta+\phi)}-\tan{\theta})}{2(\tan{(\theta+\phi)}-\tan{\omega})(\tan{\theta}-\tan{\omega})} + 
    \nonumber
    \\
    &+
    \frac{(\tan{\beta}(x_0-x_b)+y_b-y_0)^2 (\tan{\theta}-\tan{(\theta+\phi)})}{2(\tan{(\theta+\phi)}-\tan{\beta})(\tan{\theta}-\tan{\beta})}
\end{align}
\normalsize
Now note the following equations
\begin{align}
    \tan{a}-\tan{b} = \frac{\sin{a}}{\cos{a}} -\frac{\sin{b}}{\cos{b}} = \frac{\sin{(a-b)}}{\cos{a}\cos{b}}
\end{align}
\begin{align}
    \frac{\tan{a}-\tan{b}}{(\tan{a}-\tan{c})(\tan{b}-\tan{c})} &= \frac{\sin{(a-b)}\cos{a}\cos{b}\cos^2{c}}{\sin{(a-c)}\sin{(b-c)}\cos{a}\cos{b}} =
    \frac{\sin{(a-b)}\cos^2{c}}{\sin{(a-c)}\sin{(b-c)}} 
\end{align}
By substituting appropriately the above equation to equation~\ref{eq:General_Case_2} we have
\small{
\begin{align*}
    &A(C,B,D,\beta,\omega,\theta,\phi) = 
    \frac{(\tan{\omega}(x_0-x_d)+y_d-y_0)^2 (\tan{(\theta+\phi)}-\tan{\theta})}{2(\tan{(\theta+\phi)}-\tan{\omega})(\tan{\theta}-\tan{\omega})} +
    \\
    &\hspace{3.3cm}
    + \frac{(\tan{\beta}(x_0-x_b)+y_b-y_0)^2 (\tan{\theta}-\tan{(\theta+\phi)})}{2(\tan{(\theta+\phi)}-\tan{\beta})(\tan{\theta}-\tan{\beta})}
    \\
    &=
    \frac{(\tan{\omega}(x_0-x_d)+y_d-y_0)^2 \sin{\phi}\cos^2{\omega}}{2\sin{(\theta+\phi-\omega)}\sin{(\theta-\omega)}} - \frac{(\tan{\beta}(x_0-x_b)+y_b-y_0)^2 \sin{\phi}\cos^2{\beta}}{2\sin{(\theta+\phi-\beta)}\sin{(\theta-\beta)}}
\end{align*}
}
\normalsize
\hspace{9cm}
\textit{End of proof of Lemma~\ref{lemma:General_Case2} }
\qed

By substituting $\omega=\theta_K+\phi_K$, and $\beta=\theta_K$ at~(\ref{eq:General_Case_Alt_lemma}) we have equation~(\ref{eq:General_Case_Alt}).

\textit{End of proof of Theorem~\ref{theor:Analytical_Area}}
\end{proof}

\noindent
Here we provide a proof for Proposition~\ref{proposition:difcil_Ath}.

\begin{proof}
    We set $\omega=\theta_K+\phi_K$, and $\beta = \theta_K$, so equation~(\ref{eq:General_Case_Alt}) is
    \begin{align*}
    &A_\phi(\theta) = 
    \frac{d_1 \sin{\phi}~\cos^2{\omega} }{2\sin{(\theta+\phi-\omega)}\sin{(\theta-\omega)}} - \frac{d_2 \sin{\phi}~\cos^2{\beta}}{2\sin{(\theta+\phi-\beta)}\sin{(\theta-\beta)}}
    \end{align*}
    Now to show that function $A$ can be expressed as a polynomial function we will use the determinant to show the complexity of the polygons by avoiding the calculations. Notice that
\begin{align*}
    \sin(\theta+\phi-\omega) = 
    \begin{vmatrix}
        \cos{\omega} & \cos{\theta} & \sin{\theta}
        \\
        \sin{\omega} & \sin{\theta} & -\cos{\theta}
        \\
        0 & \sin{\phi} & \cos{\phi}
    \end{vmatrix}
    &&
    \sin{(\theta-\omega)} = 
    \begin{vmatrix}
        \sin{\theta} & \cos{\theta}
        \\
        \sin{\omega} & \cos{\omega}
    \end{vmatrix}
\end{align*}

\begin{align}
\label{eq:det_sin}
    \sin(\theta+\phi-\omega)\sin{(\theta-\omega)} =
    \begin{vmatrix}
        \cos{\omega} & \cos{\theta} & \sin{\theta} & 0 & 0 
        \\
        \sin{\omega} & \sin{\theta} & -\cos{\theta} & 0 & 0
        \\
        0 & \sin{\phi} & \cos{\phi} & 0 & 0 
        \\
        0 & 0 & 0 & \sin{\theta} & \cos{\theta}
        \\
        0 & 0 & 0 & \sin{\omega} & \cos{\omega}
    \end{vmatrix}
\end{align}

Let $x=\sin{\theta}$ and  $\cos{\theta}=\sqrt{1-x^2}$ then equation~(\ref{eq:det_sin}) is
\begin{align*}
    \sin(\theta+\phi-\omega)\sin{(\theta-\omega)} =
    \begin{vmatrix}
        \cos{\omega} & \sqrt{1-x^2} & x & 0 & 0 
        \\
        \sin{\omega} & x & -\sqrt{1-x^2} & 0 & 0
        \\
        0 & \sin{\phi} & \cos{\phi} & 0 & 0 
        \\
        0 & 0 & 0 & x & \sqrt{1-x^2}
        \\
        0 & 0 & 0 & \sin{\omega} & \cos{\omega}
    \end{vmatrix}
\end{align*}
The determinant of the above equation will produce an equation of the following form
\begin{align*}
    P_1(x) + P_2(x)\sqrt{1-x^2}
\end{align*}
Where $P_1$ is a polynomial of order at most 2 and $P_2$ is a polynomial of order 1.
If we apply the same technique for $\sin(\theta+\phi-\beta)\sin{(\theta-\beta)}$ we get the rational form of $A_\phi(\theta)$ 
\begin{align*}
    A_\phi(\theta) &= \frac{D_1}{P_1(x) + P_2(x)\sqrt{1-x^2}} + \frac{D_2}{P_3(x) + P_4(x)\sqrt{1-x^2}} 
    \\
    &= \frac{D_1P_3(x) + D_1P_4(x)\sqrt{1-x^2} + D_2P_1(x) + D_2P_2(x)\sqrt{1-x^2} }{(P_1(x) + P_2(x)\sqrt{1-x^2})(P_3(x) + P_4(x)\sqrt{1-x^2})}
\end{align*}
\end{proof}

\bibliographystyle{abbrv}
\bibliography{biblio}

\begin{thebibliography}{10}

\bibitem{9719824}
M.~Abrahamsen.
\newblock Covering polygons is even harder.
\newblock In {\em 2021 IEEE 62nd Annual Symposium on Foundations of Computer
  Science (FOCS)}, pages 375--386, 2022.

\bibitem{AGP2021}
M.~Abrahamsen, A.~Adamaszek, and T.~Miltzow.
\newblock The art gallery problem is $\exists\ \mathbb{R}$-complete.
\newblock {\em J. ACM}, 69(1), dec 2021.

\bibitem{MCR}
C.~Alegr{\'{\i}}a{-}Galicia, D.~Orden, L.~Palios, C.~Seara, and J.~Urrutia.
\newblock Capturing points with a rotating polygon (and a 3d extension).
\newblock {\em Theory Comput. Syst.}, 63(3):543--566, 2019.

\bibitem{NR1}
I.~K. Argyros.
\newblock {\em Convergence and Applications of Newton-type Iterations}.
\newblock Springer, 2008.

\bibitem{Asano1986}
T.~Asano, L.~Guibas, J.~Hershberger, and H.~Imai.
\newblock Visibility of disjoint polygons.
\newblock {\em Algorithmica}, 1(1):49--63, 1986.

\bibitem{Axler1997}
S.~Axler.
\newblock {\em Linear Algebra Done Right}.
\newblock Springer-Verlag, 1997.

\bibitem{BAREQUET2014407}
G.~Barequet and A.~Goryachev.
\newblock Offset polygon and annulus placement problems.
\newblock {\em Computational Geometry}, 47(3, Part A):407--434, 2014.

\bibitem{Visibility2002}
P.~Bose, A.~Lubiw, and J.~Munro.
\newblock Efficient visibility queries in simple polygons.
\newblock {\em Computational Geometry}, 23(3):313--335, 2002.

\bibitem{DBLP:journals/dcg/Chan16}
T.~M. Chan.
\newblock A simpler linear-time algorithm for intersecting two convex polyhedra
  in three dimensions.
\newblock {\em Discret. Comput. Geom.}, 56(4):860--865, 2016.

\bibitem{visibility2007}
O.~Cheong, A.~Efrat, and S.~Har-Peled.
\newblock Finding a guard that sees most and a shop that sells most.
\newblock {\em Discrete \& Computational Geometry}, 37(4):545--563, 2007.

\bibitem{CZYZOWICZ201316}
J.~Czyzowicz, D.~Ilcinkas, A.~Labourel, and A.~Pelc.
\newblock Worst-case optimal exploration of terrains with obstacles.
\newblock {\em Information and Computation}, 225:16--28, 2013.

\bibitem{translation1998}
M.~de~Berg, O.~Cheong, O.~Devillers, M.~van Kreveld, and M.~Teillaud.
\newblock Computing the maximum overlap of two convex polygons under
  translations.
\newblock {\em Theory of Computing Systems}, 31(5):613--628, 1998.

\bibitem{NR2}
B.~Engquist, editor.
\newblock {\em Encyclopedia of Applied and Computational Mathematics}.
\newblock Springer, 2015.

\bibitem{ERDEM2006156}
U.~M. Erdem and S.~Sclaroff.
\newblock Automated camera layout to satisfy task-specific and floor
  plan-specific coverage requirements.
\newblock {\em Computer Vision and Image Understanding}, 103(3):156--169, 2006.
\newblock Special issue on Omnidirectional Vision and Camera Networks.

\bibitem{Survey2003}
L.~Floriani and P.~Magillo.
\newblock Algorithms for visibility computation on terrains: A survey.
\newblock {\em Environment and Planning B: Planning and Design},
  30(5):709--728, 2003.

\bibitem{Sen_acc_points_21}
V.~Gadiraju, H.-C. Wu, C.~Busch, P.~Neupane, S.~Y. Chang, and S.~C.-H. Huang.
\newblock Novel sensor/access-point coverage-area maximization for arbitrary
  indoor polygonal geometries.
\newblock {\em IEEE Wireless Communications Letters}, 10(12):2767--2771, 2021.

\bibitem{translation2017}
S.~Har-Peled and S.~Roy.
\newblock Approximating the maximum overlap of polygons under translation.
\newblock {\em Algorithmica}, 78(1):147--165, 2017.

\bibitem{lax1948quotient}
P.~D. Lax.
\newblock The quotient of exponential polynomials.
\newblock {\em Duke Mathematical Journal}, 15(4):967--970, 1948.

\bibitem{Shoelace}
W.~Lim.
\newblock Shoelace formula: Connecting the area of a polygon and vector cross
  product.
\newblock {\em Mathematics Teacher}, 110:631--636, 04 2017.

\bibitem{fraleigh_algebra}
N.~Lord.
\newblock A first course in abstract algebra , by john b. fraleigh. pp
  518.{\pounds} 17{\textperiodcentered} 95. 1989. isbn 0-201-16847-2
  (addison-wesley).
\newblock {\em The Mathematical Gazette}, 73(466):349--351, 1989.

\bibitem{MARENGONI2000773}
M.~Marengoni, B.~Draper, A.~Hanson, and R.~Sitaraman.
\newblock A system to place observers on a polyhedral terrain in polynomial
  time.
\newblock {\em Image and Vision Computing}, 18(10):773--780, 2000.

\bibitem{restricted-FOV}
S.~K. Mohd~Yusoff, A.~Md~Said, and I.~Ismail.
\newblock Optimal camera placement for 3d environment.
\newblock In J.~M. Zain, W.~M.~b. Wan~Mohd, and E.~El-Qawasmeh, editors, {\em
  Software Engineering and Computer Systems}, pages 448--459, Berlin,
  Heidelberg, 2011. Springer Berlin Heidelberg.

\bibitem{10.1007/978-3-030-58150-3_25}
B.~J. Nilsson, D.~Orden, L.~Palios, C.~Seara, and P.~{\.{Z}}yli{\'{n}}ski.
\newblock Shortest watchman tours in simple polygons under rotated monotone
  visibility.
\newblock In D.~Kim, R.~N. Uma, Z.~Cai, and D.~H. Lee, editors, {\em Computing
  and Combinatorics}, pages 311--323, Cham, 2020. Springer International
  Publishing.

\bibitem{cont_opt}
R.~A. Polyak.
\newblock {\em Introduction to continuous optimization / Roman A. Polyak.}
\newblock Springer optimization and its applications ; 172. Springer, Cham,
  Switzerland, 2021.

\bibitem{spivak2019calculus}
M.~Spivak.
\newblock {\em Calculus}.
\newblock Revert{\'e}, 2019.

\bibitem{FUSION2022}
T.~Triommatis, I.~Potapov, G.~Rees, and J.~F. Ralph.
\newblock A geometric approach to passive localisation.
\newblock In {\em 2022 25th International Conference on Information Fusion
  (FUSION)}, pages 1--8, 2022.

\end{thebibliography}

\end{document}